
\documentclass[superscriptaddress, twocolumn,
notitlepage, amsmath, amssymb, aps,
pra, letterpaper]{revtex4-1}

\usepackage{}

\usepackage{graphicx,graphics,epsfig,subfigure,times,bm,bbm,amssymb,amsmath,amsfonts,amsthm,mathrsfs,MnSymbol}
\usepackage[matrix,frame,arrow]{xypic}
\usepackage[normalem]{ulem}
\usepackage{slashed}
\usepackage{color}
\usepackage[usenames,dvipsnames,svgnames,table]{xcolor}
\definecolor{darkblue}{rgb}{0.0,0.0,0.3}
\usepackage[colorlinks=true,
            linkcolor=red,
            urlcolor= darkblue,
            citecolor=blue]{hyperref}

\usepackage{enumerate}
\usepackage{epstopdf}
\usepackage{relsize}
\linespread{0.986}

\newtheorem{definitionenv}{Definition}
\newtheorem{remarkenv}[definitionenv]{Remark}
\newenvironment{remark}{\begin{remarkenv}\rm}{\end{remarkenv}}

\newtheorem{exampleenv}{Example}
\newenvironment{example}{\begin{exampleenv}\rm}{\end{exampleenv}}

\newtheorem{thm}{Theorem}
\newtheorem{mydef}{Definition}

\newtheorem{mylemma}{Lemma}

\newcommand{\mcal}{\mathcal}

\newcommand{\bes} {\begin{subequations}}
\newcommand{\ees} {\end{subequations}}
\newcommand{\bea} {\begin{eqnarray}}
\newcommand{\eea} {\end{eqnarray}}


\newcommand{\beq}{\begin{equation}}

\newcommand{\eeq}{\end{equation}}

\newcommand{\upi}{\mathrm{i}}
\newcommand{\ignore}[1]{}





\def\>{\rangle}
\def\<{\langle}

\def\Pr{\mathrm{Pr}}

\newcommand{\ket}[1]{|#1\rangle}





\newcommand{\sfA}{\textsf{A}}

\newcommand{\sfG}{\textsf{G}}
\newcommand{\sfH}{\textsf{H}}
\newcommand{\sfI}{\textsf{I}}
\newcommand{\sfM}{\textsf{M}}

\newcommand{\cC}{\mathcal{C}}

\newcommand{\cQ}{\mathcal{Q}}

\newcommand{\step}{\tau} 

\begin{document}
\title{Efficient preparation of large block code ancilla states for fault-tolerant quantum computation}
\author{Yi-Cong Zheng}

\email{
zheng.yicong@quantumlah.org}
\affiliation{Centre for Quantum Technologies, National University of Singapore, Singapore, 117543}
\affiliation
{Yale-NUS College, Singapore, 138527}

\author{Ching-Yi Lai}
\affiliation{Institute of Information Science, Academia Sinica, Taipei 11529, Taiwan}
\author{Todd A. Brun}
\affiliation{Ming Hsieh Department of Electrical Engineering, Center for Quantum Information Science \& Technology, University of Southern California, Los Angeles, California 90089, USA\\}

\begin{abstract}
Fault-tolerant quantum computation (FTQC) schemes that use multi-qubit large block codes can potentially reduce the resource overhead to a great extent. A major obstacle is the requirement of a large number of clean ancilla states of different types without correlated errors inside each block. These ancilla states are usually logical stabilizer states of the data code blocks, which are generally difficult to prepare if the code size is large. Previously we have proposed an ancilla distillation protocol for Calderbank-Shor-Steane (CSS) codes by classical error-correcting codes.
It was assumed that the  quantum gates in the distillation circuit were perfect; however, in reality, noisy quantum gates may introduce correlated errors that are not treatable by the protocol.
In this paper, we show that additional postselection by another classical error-detecting code can be applied to remove almost all correlated errors.
Consequently, the revised protocol is fully fault-tolerant and capable of preparing a large set of stabilizer states sufficient for FTQC using large block codes. At the same time, the yield rate can be boosted from $O(t^{-2})$ to $O(1)$ in practice for an $[[n,k,d=2t+1]]$ CSS code. Ancilla preparation for the $[[23,1,7]]$ quantum Golay code is numerically studied in detail through Monte Carlo simulation. The results support the validity of the protocol when the gate failure rate is reasonably low. To the best of our knowledge, this approach is the first attempt to prepare general large block stabilizer states free of correlated errors for FTQC in a fault-tolerant and efficient manner.

\end{abstract}
\maketitle


\section{Introduction}
Quantum computers are difficult to build, since quantum states suffer from decoherence and quantum gates are imperfect~\cite{Nielsen:2000:CambridgeUniversityPress}. However, the theory of quantum error-correcting codes (QECCs)~\cite{Shor:1995:R2493,Steane:1996:793,Gaitan:2008:CRC,QECbook:2013} and fault-tolerant quantum computation (FTQC) ~\cite{Shor:1996:56,Aharonov:1997:176,
Gottesman:9705052,Kitaev:2003:2,
DivencenzoFTPhysRevLett.77.3260, Terhal:2005:012336,KnillFTNature,Aharonov:2006:050504,Folwer2012PhysRevA.86.032324,
QECbook:2013} have shown that an arbitrarily large-scale quantum computation can be achieved if errors are not strongly correlated and their rates are small enough to fall below a threshold~\cite{Aharonov:1997:176,KnillFTNature,
Terhal:2005:012336,Aharonov:2006:050504,Aliferis:2006:97,cross2007comparative,
Aliferis:2008:181}. This is achieved by encoding the state of the computation in a quantum error-correcting code.
Decoherence cannot corrupt an encoded state 
without affecting a large number of qubits.

Larger QECCs, with higher distances and rates, can potentially outperform smaller codes. Various FTQC schemes using multi-qubit large block codes have been proposed to achieve significantly higher code
rates for comparable error protection ability~~\cite{steane1999efficient_Nature, Steane:2003:042322, Steane:300bits, gottesman2013overhead,brun2015teleportation}.
For example, in Ref.~\cite{brun2015teleportation}, multiple logical
qubits are encoded into high-rate large block codes and fault-tolerant (FT) logical gates are implemented by measuring logical states and teleporting logical qubits states back and forth between different code blocks to achieve universality, so that magic state distillation~\cite{Bravyi:2005:022316,Bravyi_Haah_PhysRevA.86.052329} can be avoided. This can be done by preparation of different stabilizer ancilla states, transversal circuits, and bitwise qubit measurements.

High code rates and fault-tolerant non-Clifford gates without magic state distillation are two features that make the FTQC scheme based on large block codes promising. It potentially offers small resource overhead (see also~\cite{paetznick2013universal,
jochym2014using,anderson2014fault,bravyi2015doubled_color,
yoder2016universal} for other schemes with these features).
However, ancilla states are required for logical teleportation and error correction. Straightforward preparation of these ancillas  (e.g., through encoding circuits), is not fault-tolerant:
a single circuit failure can result in a \emph{high-weight} error in
the output state with high probability. This is called a \emph{correlated} error. As a result, the ancilla states need to be verified before they can be used.

The complexity of verifying prepared ancillas (in a straightforward way) grows quickly with code distance. For large codes, verification of encoded logical states such as $|0\>^{\otimes k}_L$ and $|+\>^{\otimes k}_L$ is accomplished by certain identically-prepared auxiliary ancillas. Errors from the
target ancilla are copied into the auxiliary ancillas, which are then bitwise measured. If the measurement outcomes suggest that no error occurs, the ancillas are accepted.
Otherwise,  all the ancillas are discarded and the entire process restarts.
Note that the auxiliary ancillas may also contain correlated errors after their preparation. These errors may propagate back to the target ancilla and the verification may fail even if the target ancilla is clean. Thus, the auxiliary ancillas must also be verified by more auxiliary ancillas. These recursive verifications will eventually dominate the overall resource overhead of quantum computation. In general, for an $[[n,k,d=2t+1]]$ CSS code, it requires $O(t^2)$ noisy ancilla states to generate a qualified one.

More efficient ways to prepare ancillas have been explored in the literature. Previously, verification of the ancilla state $|0\>_L$ has been studied for some quantum codes, such as the Steane code and the quantum Golay  code~\cite{steane2002fast,Steane:2003:042322,reichardt2004improved,divincenzo_aliferis2007,paetznick_ben2011fault}.
In Ref.~\cite{steane2002fast}, Steane proposed a method to filter out correct $|0\>_L^{\otimes k}$ and $|+\>_L^{\otimes k}$ for arbitrary Calderbank-Shor-Steane (CSS) codes~\cite{Calderbank:1996:1098,Steane:1996:793} using only single-round stabilizer measurements with postselection. However, it is highly likely that the overall rejection rate is too high to be of practical value when the code length is large. Moreover, that method can only remove one type (either $X$ or $Z$) of correlated errors, which is good for the scheme in Ref.~\cite{steane1999efficient_Nature} but not sufficient for the one in Ref.~\cite{brun2015teleportation}. In Ref.~\cite{paetznick_ben2011fault},  the permutation symmetry of the quantum Golay code is exploited by permuting the qubits of different code blocks so that the correlation of errors between code blocks can be suppressed before verification.
This simple operation can reduce the resource overhead of ancilla preparation by a factor of four.
In Ref.~\cite{goto2016minimizing}, an efficient method was proposed to fault-tolerantly prepare $|0\>_L$ of the Steane code using only one extra qubit for verification. However, it is unclear how to generalize the above methods to arbitrary  large block codes.

In Ref.~\cite{Ancilla_distillation_1}, we proposed a general distillation protocol to prepare CSS stabilizer states, such as $|0\>^{\otimes k}_L$ and $|+\>^{\otimes k}_L$ states, for arbitrary CSS codes by classical error-correcting codes.
The idea of the distillation protocol  is as follows. Many ancilla states are identically prepared, each of which typically may contain highly correlated errors. They are fed into a distillation circuit, and certain check blocks are measured bitwise. The error syndromes of the remaining blocks can be extracted when the distillation circuit is based on the structure of a classical code. From the measurements, one can estimate the error syndromes of the target blocks by classical decoding.
The protocol works correctly when the distillation circuit is perfect. However, if the CNOT gates in the distillation circuits are noisy, they can introduce correlated errors between different ancilla blocks so that classical decoding may not be able to recover these errors.
Thus, the protocol is not completely fault-tolerant: it was assumed that the distillation circuit is perfect in  Ref.~\cite{Ancilla_distillation_1}, while in reality, noisy gates and measurements may lead to correlated errors  after distillation.

In this paper we will modify the distillation protocol of Ref.~\cite{Ancilla_distillation_1} so that it works \emph{fault-tolerantly} and \emph{efficiently} when the distillation circuit is imperfect.
The goal is for each block after fault-tolerant preparation to contain \emph{no} correlated errors.
Identifying the correct error syndromes for the ancillas is a key step in our distillation protocol.
It is possible to introduce parity checks on the error syndromes by using another classical error-correcting code or an appropriate quantum stabilizer code~\cite{ashikhmin_chingyi2014robust,fujiwara2014ability,ashikhmin2016correction}, which greatly simplifies the process of syndrome verification.
We apply a similar idea here to compute additional syndrome bits encoded by classical \emph{error-detecting} codes.
The estimated syndromes are checked to see whether they are compatible with the additional syndrome bits.
If they agree, we accept the output; otherwise, they are discarded.
After that, quantum error correction can be applied to the remaining blocks to remove both $X$ and $Z$ errors based on the estimated syndromes.
Consequently, cleaner ancillas with no correlated errors can be obtained. The yield of the process depends on the rates of the chosen classical codes and the rejection rate of the output blocks. If an appropriate family of classical codes is chosen for postselection, the yield rate can be boosted from $O(t^{-2})$ to $O(1)$ in practice for an $[[n,k,d=2t+1]]$ CSS code.

Moreover, we prove that the output ancillas from this modified distillation protocol contain no correlated errors if sufficiently powerful classical codes are chosen.
We prove a no-go theorem (Theorem~\ref{thm:nogo}) that a noisy distillation circuit will have correlated errors with high probability,
which suggests that additional treatment is required when the distillation circuit is noisy. Our method of postselection by classical error-detecting codes
is proved to be effective here (Theorem~\ref{thm:main}) when the underlying quantum code is \emph{sparse} (see Def.~\ref{def:epsilon-sparse} and Lemma~\ref{lemma:valid_to_good})
and the chosen classical code has error-correcting ability at least as good as the sparse quantum code (Theorems~\ref{lemma:postselection_nogo} and~\ref{thm:3}).
This may even be true for general quantum codes (not necessarily sparse) and we demonstrate this by simulations of the distillation of the $[[23,1,7]]$ quantum Golay code in Sec.~\ref{sec:numerical}.

In addition, we also propose protocols to distill various entangled ancilla states necessary for FTQC, such as the encoded EPR pairs between two code blocks. (The details are postponed to Appendix~\ref{appendix:ancilla_states}.) In particular, we show how to do error correction and simultaneously measure a logical $Y$ by Steane syndrome extraction.

The paper is organized as follows. We provide preliminary materials in Sec.~\ref{sec:prim}, including the basic ideas of quantum error-correcting codes, CSS codes and Steane syndrome extraction.
In Sec.~\ref{sec:noiseless_prep}, we review the protocol of Ref.~\cite{Ancilla_distillation_1} and formulate it for general multiple-qubit CSS codes.
In Sec.~\ref{sec:FTpreparation}, we propose a \emph{fully} fault-tolerant protocol for  these stabilizer ancilla states.
In Sec.~\ref{sec:numerical}, we numerically study the procedure for preparing $|0\>_L$ for the quantum Golay code by estimating the error weight distribution of the output ancillas for different combinations of small-sized classical codes.
Discussion and suggestions for further improvements are presented in Sec.~\ref{sec:discussion}.

\section{Preliminaries}\label{sec:prim}
Both classical and quantum codes play essential roles in this paper. We start with a brief review of classical and quantum error-correcting codes and introduce our notation.

\subsection{Classical Linear Codes}
Classical error-correcting codes protect digital information by introducing redundancy. The encoded information strings---\emph{codewords}--- need to satisfy a set of \emph{parity checks}, that is, linear constraints.
Errors that cause violations of the parity checks can be detected and (hopefully) be corrected.

Let $\sfH$ be an $(n-k)\times n$ binary matrix. An $[n,k,d]$ classical binary code $\mcal{C}$ associated with \emph{parity-check matrix} $\sfH$ is a $k$-dimensional subspace of all binary $n-$tuples in $\mathbb{Z}_2^n$ such that
\beq
\sfH v^T=0,
\eeq
for all $v\in\mcal{C}$, where $v^T$ is the transpose of $v$ and addition is modulo $2$. Such vectors $v$ are the codewords of $\mcal{C}$. The parameter $d$ is called the \emph{minimum distance} of $\mcal{C}$ such that any two codewords in $\mcal{C}$ differ in at least $d$ bits. This code can correct arbitrary $\lfloor \frac{d-1}{2}\rfloor$-bit errors.

For a received string $\tilde{v}\in \mathbb{Z}_2^n$, if $\sfH \tilde{v}^T\neq 0$, we know that some error occurred.
Hence the rows of $\sfH$ are called \emph{parity checks} of $\mathcal{C}$, and $\sfH \tilde{v}^T$ is called the \emph{error syndrome} of $\tilde{v}$. A useful property of a linear code is that its parity-check matrix can be written in the systematic form:
\beq
\sfH=\left[\sfI_{n-k}|\sfA_{(n-k)\times k}\right].
\eeq
$\mcal{C}$ can also be defined as the row space of a $k\times n$ generator matrix $\sfG$, which satisfies
\beq
\sfH\sfG^T=0.
\eeq
The dual code $\mcal{C}^{\perp}$ of $\mcal{C}$  is the row space of $\sfH$.
More properties of classical codes can be found in~\cite{MacWilliams:1983:NorthHolland}.

\subsection{Stabilizer Codes and CSS Codes}
The Hilbert space of a single qubit is the two-dimensional complex vector space $\mathbb{C}^2$ with an orthonormal basis $\{|0\>, |1\>\}$. Then the Hilbert space of $n-$qubit state is $\mathbb{C}^{2^n}$.
Let $\mathcal{P}_n=\mathcal{P}_1^{\otimes n}$ denote the $n$-fold Pauli group, where
\beq
\mathcal{P}_1=\{\pm I, \pm \upi I, \pm X, \pm \upi X, \pm Y, \pm \upi Y, \pm Z, \pm \upi Z\},
\eeq
and
$$
I=\left[\begin{array}{cc}
1 & 0 \\
0 & 1
\end{array}\right]
, \ X=\left[
\begin{array}{cc}
0 & 1 \\
1 & 0 \\
\end{array}
\right]
,\ \ Y=
\left[
  \begin{array}{cc}
    0 & -\upi \\
    \upi & 0 \\
  \end{array}
\right]
, \ \
 Z=\left[
 \begin{array}{cc}
 1 & 0 \\
 0 & -1 \\
 \end{array}
 \right].
$$

For simplicity, we use the notation $X_j$ to denote an $X$ on qubit $j$, $I^{\otimes j-1}\otimes X\otimes I^{\otimes n-j}$, where $n$ is the number of qubits. $Y_j$ and $Z_j$ are defined similarly. We also introduce the notation $X^e$ for $e=e_1\cdots e_n\in \mathbb{Z}_2^n$, to denote the operator $\otimes_{i=1}^n X^{e_i}$. An $n$-fold Pauli operator can be expressed as
\beq\label{eq:general_error}
\bigotimes_{i=1}^n X^{e_i}Z^{f_i}=X^eZ^f, \qquad e,f\in \mathbb{Z}^n_2.
\eeq
up to an overall phase. Here $(e,f)$ is called the
\emph{binary representation} of the Pauli operator $X^eZ^f$. We define the weight of $E$, $\text{wt}(E)$, as the number of terms in the tensor product which are not equal to the identity.

Suppose  $\mathcal{G}$ is an Abelian subgroup of $\mathcal{P}_n$
with a set of $q$ independent and commuting generators $\{G_1,\dots, G_{q}\}$, and $\mathcal{G}$ does not include $-I^{\otimes n}$.
Every element in $\mcal{P}_n$ has eigenvalues $\pm 1$ or $\pm i$.
An $[[n,k=n-q]]$ quantum stabilizer code $C(\mathcal{G})$
is defined as the $2^{k}$-dimensional subspace of the $n$-qubit state space ($\mathbb{C}^{2^n}$) fixed by  $\mathcal{G}$,
which is the joint-$(+1)$ eigenspace of $G_1, \dots, G_{n-k}$.
Then for a codeword $\ket{\psi}\in C(\mcal{G})$, $$G\ket{\psi}=\ket{\psi}$$ for all $G\in \mathcal{G}$. When $k=0$, $C(\mathcal{G})$ has only one eigenstate up to a global phase, and this state is called the \emph{stabilizer state} of $\mathcal{G}$. Special stabilizer states play crucial roles in FTQC using large block codes~\cite{steane1999efficient_Nature,Steane:2003:042322,
brun2015teleportation}, and the main object of this paper is to prepare these states fault-tolerantly.

If a Pauli error $E$ corrupts $\ket{\psi}$, some eigenvalues of $G_1,\dots, G_{q}$ will be flipped.
Consequently, we gain error information by measuring the stabilizer generators $G_1,\dots, G_{q}$,
and the corresponding measurement outcomes (in bits),
$g_1,\dots, g_{q}$, are called the \emph{error syndrome} of $E$.
(The eigenvalue of a stabilizer generator is $+1$ or $-1$, and its corresponding syndrome bit is then $0$ or $1$, respectively. In the rest of the paper, the measurement outcomes are always represented in binary form.)
A quantum decoder has to choose a good recovery operation based on the measured error syndromes.

CSS codes are an important class of stabilizer codes for FTQC. Their generators consist of tensor products of the identity and either $X$ or $Z$ operators (but not both)~\cite{Calderbank:1996:1098, Steane:1996:793}. More formally, consider two classical codes, $\mathcal{C}_Z$ and $\mathcal{C}_X$ with parameters $[n, k_Z, d_Z]$ and $[n, k_X, d_X]$, respectively, such that $\mathcal{C}_X^\perp \subset \mathcal{C}_Z$. The corresponding parity-check matrices are $\sfH_Z$ $(r_Z\times n)$ and $\sfH_X$ $(r_X\times n)$ with full rank $r_Z=n-k_Z$ and $r_X=n-k_X$.
One can form an $[[n, k=k_X+k_Z-n, d]]$ CSS code $\mathcal{Q}$, where $d\geq\min\{d_Z,d_X\}$. For the special case that $\cC_{X}=\cC_{Z}$, we call such code a \emph{symmetric} CSS code. In general a logical state of $\mathcal{Q}$ can be represented as:
\beq\label{eq:CSS_codeword}
|u\>_L=\sum_{x\in \mathcal{C}_{X}^\perp}|x+u  D\>,
\eeq
where $u\in \mathbb{Z}_2^k$ and $D$ is a $k\times n$ binary  matrix, whose rows are the coset leaders of $\mathcal{C}_Z/\mathcal{C}_X^\perp$.
The $Z$ and $X$ stabilizer generators of $\mathcal{Q}$ are
\beq
G_{Z\,|\,i}=\bigotimes_{j=1}^n Z^{[\sfH_Z]_{i,j}}, \qquad i=1, \dots, r_Z,
\eeq
and
\beq
G_{X\,|\,i}=\bigotimes_{j=1}^n X^{[\sfH_X]_{i,j}}, \qquad i=1, \dots, r_X.
\eeq
Here $[\sfM]_{i,j}$ represents the $(i,j)$ entry of a matrix $\sfM$. The \emph{check matrix} for $\mathcal{Q}$ is defined as 
\beq\label{eq:quantum_parity}
\sfH = \left[
         \begin{array}{cc}
           \sfH_Z & 0 \\
           0 & \sfH_X \\
         \end{array}
       \right].
\eeq

The error syndrome of a Pauli error is a binary string of outcomes of measuring the stabilizers $G_1, \dots ,  G_{n-k}$. According to Eq.~(\ref{eq:general_error}), error correction can be done by treating   $X$ and $Z$ errors separately. For CSS codes, the eigenvalues of $G_{Z\,|\,1}, \dots, G_{Z\,|\,r_Z}$ $(G_{Z\,|\,1},\dots, G_{Z\,|\,r_X})$ correspond to the error syndrome of $X$ $(Z)$ errors. Then its $X$ and $Z$ error syndromes $g_Z\in \mathbb{Z}_2^{r_Z}$, $g_X  \in \mathbb{Z}_2^{r_X}$ are given by
\beq
g_Z^T=\sfH_Z e^T, \ \ \  g_X^T =\sfH_X f^T.
\eeq

The Pauli operators that are not in $\mathcal{G}$ but commute with all the stabilizer generators are called the \emph{logical Pauli operators}. For the CSS codes defined in Eq.~(\ref{eq:CSS_codeword}), it is always possible to find a subset of logical operators involving only tensor product of identity and either $X$ or $Z$ operators. Explicitly, the logical  $X$ operators for the $j$th logical qubit $\bar{X}_j$ can be written as
\beq
\bar{X}_j=\bigotimes_{i=0}^n X_i^{[D]_{j,i}}.
\eeq
The logical $Z$ operators for the $j$th logical qubit can be constructed as~\cite{Steane:1996:793,steane1999efficient_Nature}:
\beq
\bar{Z}_j = \bigotimes Z_i^{[(DD^T)^{-1}D]_{j,i}}.
\eeq

\subsection{Steane Syndrome Extraction}\label{sec:ancilla_states}
For a CSS code $\mathcal{Q}$, Steane suggested a method to extract syndromes as shown in Fig.~\ref{fig:steane_type}~\cite{steane1997active}. This method needs two clean ancilla states, which are called the $X$ and $Z$ ancillas, since they will be bitwise measured in the $X$ and $Z$ bases, respectively. If the $X$ and $Z$ ancillas are in the logical
\begin{figure}[!ht]
\centering\includegraphics[width=80mm]{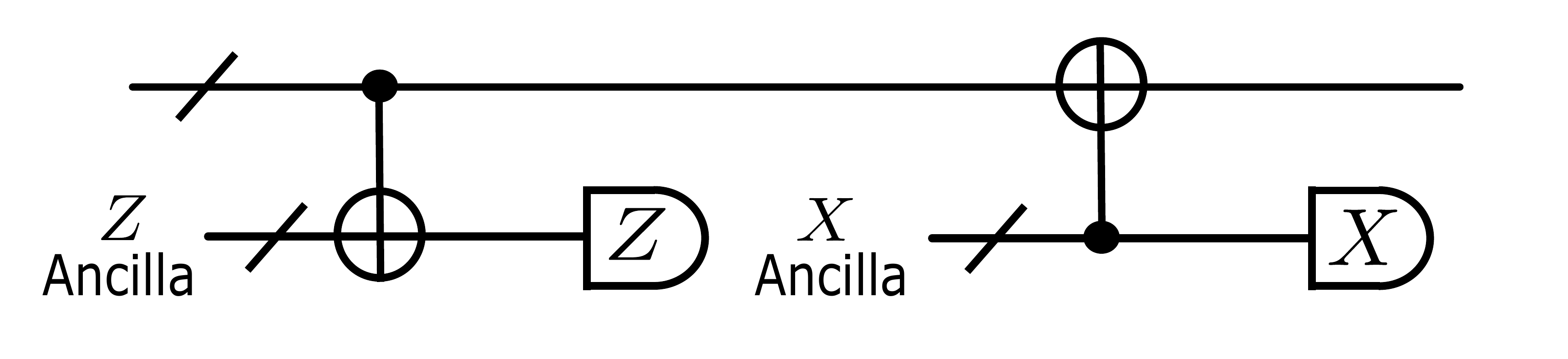}
\caption{\label{fig:steane_type} Quantum circuit for Steane syndrome extraction. }
\end{figure}
states $|0\>_L$ and $|+\>_L=\frac{1}{\sqrt{2}}\left(|0\>_L+|1\>_L\right)$ of $\mathcal{Q}$, the circuit extracts the error syndromes without disturbing the encoded quantum information. If other ancilla states are used, it is possible also to measure a logical Pauli operator on the data block. Since a logical gate can be implemented by measuring a sequence of logical operators~\cite{brun2015teleportation}, this leads to a fault-tolerant implementation of logical gate operation and syndrome extraction, simultaneously. As we can see in Appnedix.~\ref{appendix:ancilla_state_1}, these ancillas are stabilizer states of $\mathcal{Q}$. Namely, they are stabilized by both the stabilizer generators of $\mathcal{G}$ and some logical Pauli operators. For a logical ancilla of a single code block, its corresponding set of stabilizers is:
\beq
S=\{G_{X\,|\,1},\dots , G_{X\,|\,r_X},G_{Z\,|\,1},\dots , G_{Z\,|\,r_Z}, L_1,\cdots, L_k \}.
\eeq
where the $L_i$ are logical Pauli operators that commute with each other. Note that we may also need ancilla states that are entangled among $m > 1$ different blocks. The stabilizer generators and logical Pauli operators of such a logical ancilla can be represented by:
\beq
\begin{split}
S=\{&G^{(1)}_{X\,|\,1},\dots , G^{(1)}_{X\,|\,r_X},G^{(1)}_{Z\,|\,1},\dots , G^{(1)}_{Z\,|\,r_Z}, \dots \\
&G^{(m)}_{X\,|\,1},\dots , G^{(m)}_{X\, | \, r_X}, G^{(m)}_{Z\,|\,1},\dots , G^{(m)}_{Z \,|\, r_Z}, L_1,\dots, L_{mk}\}
\end{split}
\eeq
Two simple examples are the logical $|0\>^{\otimes k}_L$ and $|+\>^{\otimes k}_L$ states, where all logical qubits are in the states $|0\>_L$ and $|+\>_L$, respectively. These two states are stabilized by
\beq\label{eq:stab_0}
S=\{G_{X\,|\,1},\dots, G_{X\,|\,r_X}, G_{Z\,|\,1},\dots , G_{Z\,|\,r_Z}, \bar{Z}_1, \dots,  \bar{Z}_k\},
\eeq
and
\beq
S=\{G_{X\,|\,1},\dots, G_{X\,|\,r_X}, G_{Z\,|\,1},\dots , G_{Z\,|\,r_Z}, \bar{X}_1, \dots, \bar{X}_k\},
\eeq
respectively. In the rest of the paper, these stabilizer generators and logical Pauli operators will be called \emph{stabilizers} for short. The measurement outcomes of their eigenvalues in binary form are called \emph{generalized syndromes} for simplicity.

\subsection{Failures and Errors}


For clarification, we need to distinguish between \emph{failures} and \emph{errors}. Failures are physical processes which cause imperfection on certain qubits at particular time in the circuit. The resulting effect of failures on the quantum state are called \emph{errors}. In general, a single-qubit failure in the circuit can result in multiple-qubit errors.


A basic assumption here is that failures in different places and times of a quantum circuit are \emph{independent}. This assumption can be relaxed, but for simplicity, we retain it for this paper. In this paper, we consider only Pauli failures. Assume that at each time step, every physical qubit independently
undergoes an $X$, $Y$ or $Z$ error with probability $\epsilon/3$. Such failure is called a \emph{memory failure}. For a CNOT gate, failure is modeled as a perfect gate followed by one of the 15 possible 
failures from $IX$, $IY$ , $IZ$, $XI$, $XX$, $XY$ , $XZ$, $YI$, $YX$, $YY$ , $YZ$, $ZI$, $ZX$, $ZY$, and $ZZ$ with equal probability $p_g/15$. We call this a \emph{gate failure}. Also, a measurement of a single physical qubit suffers a classical bit-flip error with probability $p_m$ (an $X$ or $Z$ error preceding
a measurement in the $Z$ or $X$ basis, respectively). We call this a \emph{measurement failure}. In the rest of the paper, we assume that the gate failure rate is equal to the measurement failure rate---i.e., $p_g=q_m=p$. Also, we consider the scenario where the memory failure rate is much smaller than the gate failure rate, that is, $\epsilon \ll p$, so that it can be safely neglected. This is usually the case for a physical system with fast gates and long coherence time, like ion trap~\cite{ion-trap_08} or superconducting qubits~\cite{barends_Martinis2014superconducting}.

Correlated errors are usually very harmful for FTQC. The following definition is based on  Ref.~\cite{steane2002fast}:
\begin{mydef}\label{def:uncorrelatation}
Consider a quantum code $\mathcal{Q}$ which can correct any Pauli error on $t$ qubits. We say that an error $E$ is \emph{uncorrelated} if the probability of $E$ after the preparation of a codeword of $\mathcal{Q}$ scales like
\beq
\Pr(E)\sim O(p^s):
\begin{cases}
&\text{for some }s\geq\text{wt}(E),\ \text{if wt}(E)\leq t; \\
&\text{for some }s \geq t,\ \text{if wt}(E)> t.
\end{cases}
\eeq
\end{mydef}
\noindent
Otherwise, $E$ is said to be \emph{correlated}. The parameter $s$ is called the \emph{order} of error $E$. Note that the coefficients behind the big $O$ notation should not be unreasonably large. Note also that, for uncorrelated errors, the probability distribution of their weights is somewhat like a binomial distribution for error order less than $t$.

To realize the FTQC protocol in~\cite{steane1999efficient_Nature,Steane:2003:042322,brun2015teleportation}, the ancillas should contain no correlated errors, so that correlated errors will not propagate from the ancillas to the data block, introducing uncorrectable errors. Even if errors on an ancilla are correctable, if their order is not strictly greater than or equal to their weights, then they can propagate to the data blocks and cause quantum error correction to fail in the coming Steane syndrome extraction with much higher probability then if they were uncorrelated.
We say an ancilla is \emph{qualified} if it is free of correlated errors.
If both $Z$ and $X$ ancillas are qualified, then single-shot Steane syndrome extraction is sufficient to implement a logical gate operation or a logical state measurement~\cite{brun2015teleportation}.

\section{Ancilla preparation using classical codes with noiseless quantum circuits}\label{sec:noiseless_prep}
Preparing these states with high quality and high throughput is crucial. Here is a naive preparation procedure of these ancillas, through verification using an additional identically prepared ancilla state. Errors are copied from the initial ancilla to an auxiliary ancilla, which are then bitwise measured. If errors are detected, one discards both ancillas. Note that the auxiliary ancillas may also contain correlated errors after their preparation, and need further checks by more auxiliary ancillas. This procedure ultimately would will involve tremendous numbers of ancilla states.

For example, the verification circuit for a qualified ancilla $|0\>_L$ of the $[[23,1,7]]$ Golay code is shown in Fig.~\ref{fig:naive}~\cite{paetznick_ben2011fault}.
In general, to verify a $t-$error-correcting quantum code, one needs a distillation process using at least $t^2 + t$ identically prepared blocks
~\footnote{Overall, it requires more than $2+4+,\dots,+2t=t^2+t$ ancilla blocks for logical 0 state verification. But for general stabilizer ancillas, it requires at least $(t+1)^2$ ancilla blocks.}
, which can be very large for the codes of interest. Denote the overall rejection rate of naive verification by $R_{\text{naive}}(p)$, which is $O(p)$, since any single failure will lead to a rejection. The average yield rate of qualified ancillas is:
\beq\label{eq:naive_yield}
\text{Yield}=\frac{1 - R_{\text{naive}}(p)}{t^2 + t}.
\eeq

\begin{figure}[!ht]
\centering\includegraphics[width=85mm]{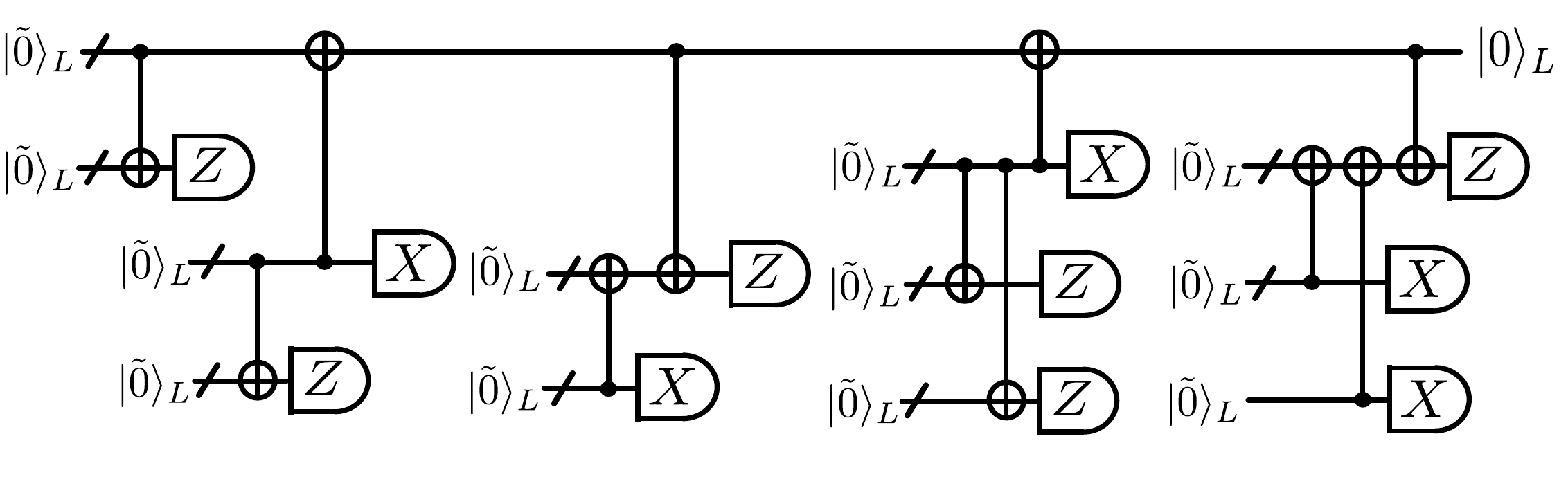}
\caption{\label{fig:naive} A naive circuit that produces a $|0\>_L$ state of the $[[23,1,7]]$ Golay code free of correlated errors. The CNOTs are transversal and measurements are bitwise. All $|\widetilde{0}\>_L$ are identically prepared. If any measurement suggests the existence of errors, all ancilla blocks involved are discarded and the procedure restarts.}
\end{figure}

In other contexts (e.g., entanglement purification~\cite{Bennett:1996:722,Bennett:1996:3824}) we may get around this formidable verification procedure by \emph{distillation}, which takes many of imperfect states and carries out a protocol to produce a smaller number of better states.
In Ref.~\cite{Ancilla_distillation_1}, we have shown that classical error-correcting codes can be applied to distill stabilizer states, such as $|0\>_L$ and $|+\>_L$, if the distillation circuit is perfect. When errors are present in blocks, we can correct them in most cases, rather than discarding the ancillas and starting over. Potentially, this scheme can greatly reduce the resource overhead.

In this section, we review distillation for CSS stabilizer states $|0\>_L^{\otimes k}$ and $|+\>_L^{\otimes k}$ using classical error-correcting codes with a \emph{noiseless} quantum circuit~\cite{Ancilla_distillation_1}.
We will formalize the protocol for general $[[n,k,d]]$ CSS codes. Generalization of the protocol to handle all necessary ancilla states for FTQC can be found in Appendix.~\ref{sec:logical ancilla}.
The problem of noisy distillation circuits will be discussed in detail in the next section.

We consider the ancilla states $|0\>_L^{\otimes k}$ and $|+\>_L^{\otimes k}$ of an $[[n,k,d]]$ CSS code $\mathcal{Q}$ constructed from an $[n,k_X,d_X]$ code $\cC_X$ and an $[n,k_Z,d_Z]$ code $\cC_Z$ such that $\cC_X^{\perp}\subset \cC_Z$.
The distillation protocol in~\cite{Ancilla_distillation_1} consists of two rounds of distillation, where two classical code $\cC_{c_1}$ and $\cC_{c_2}$ are used to correct $X$ and $Z$ errors, respectively.
Consider an $[n_{c_1},k_{c_1},d_{c_1}]$ classical code $\cC_{c_1}$ and an $[n_{c_2},k_{c_2},d_{c_2}]$ classical codes $\cC_{c_2}$ that can correct $t_{c_1}=\lfloor \frac{d_{c_1}-1}{2}\rfloor$ and $t_{c_2}=\lfloor \frac{d_{c_2}-1}{2}\rfloor$ errors, respectively.
Let $\textsf{H}_{c_1} = [\textsf{I}_{r_{c_1}} |  \ \textsf{A}_{c_1} ]$ be the parity-check matrix of $\cC_{c_1}$ in systematic form, where $r_{c_1} = n_{c_1}-k_{c_1}$, $\textsf{I}_{r_{c_1}}$ is the $r_{c_1}\times r_{c_1}$ identity, and $\textsf{A}_{c_1}$ is  an $r_{c_1}\times k_{c_1}$ binary matrix.

Suppose we have $n_{c_1}n_{c_2}$ independently prepared ancillas in the  $|0\>_L^{\otimes k}$ state (for $|+\>^{\otimes k}_L$ state, the situation is similar).
This can be done by an encoding circuit using CNOTs and physical state preparations in $|0\>$ and $|+\>$.
Note that these process are noisy, so after encoding, for each block, there exists a residual Pauli error $E$ of the form $X^eZ^f$. We then divide them into $n_{c_2}$ groups of $n_{c_1}$ blocks. For each group of $n_{c_1}$ noisy ancilla blocks,
let $g_{Z\,|\,i}^{(j)}$ be the eigenvalue of the stabilizer generators $G_{Z\,|\,i}$ of the $j$th block, and $\ell_{\bar{Z}_i}^{(j)}$ be the eigenvalue of $\bar{Z}_i$ for the $j$th block. Then the syndromes of stabilizers for the $n_{c_1}$ ancilla blocks of $\mathcal{Q}$ are:
\begin{equation}
\begin{array}{ccccc}
  g_{Z\,|\,1}^{(1)}       & g_{Z\,|\,2}^{(1)}      & \cdots  & g_{Z\,|\,{r_Z-1}}^{(1)}        &  g_{Z\,|\,{r_Z}}^{(1)}  \\ [3pt]
  g_{Z\,|\,1}^{(2)}       & g_{Z\,|\,2}^{(2)}    & \cdots  &   g_{Z\,|\,{r_Z-1}}^{(2)}      & g_{Z\,|\,{r_Z}}^{(2)} \\ [3pt]
    \vdots        & \vdots         & \ddots  &   \vdots               & \vdots \\ [3pt]
 g_{Z\,|\,1}^{(n_{c_1}-1)} & g_{Z\,|\,2}^{(n_{c_1}-1)}& \cdots  & g_{Z\,|\,{r_Z-1}}^{(n_{c_1}-1)}  & g_{Z\,|\,_{r_Z}}^{(n_{c_1}-1)} \\ [3pt]
  g_{Z\,|\,1}^{(n_{c_1})}   & g_{Z\,|\,2}^{(n_{c_1})}  & \cdots  & g_{Z\,|\,{r_Z-1}}^{(n_{c_1})}    & g_{Z\,|\,{r_Z}}^{(n_{c_1})},
\end{array}
\end{equation}
and
\begin{equation}
\begin{array}{ccccc}
  \ell_{\bar{Z}_1}^{(1)}       & \ell_{\bar{Z}_2}^{(1)}      & \cdots  & \ell_{\bar{Z}_{k-1}}^{(1)}        &  \ell_{\bar{Z}_{k}}^{(1)}  \\ [3pt]
  \ell_{\bar{Z}_1}^{(2)}       & \ell_{\bar{Z}_2}^{(2)}    & \cdots  &   \ell_{\bar{Z}_{k-1}}^{(2)}      & \ell_{\bar{Z}_{k}}^{(2)} \\ [3pt]
    \vdots        & \vdots         & \ddots  &   \vdots               & \vdots \\ [3pt]
 \ell_{\bar{Z}_1}^{(n_{c_1}-1)} & \ell_{\bar{Z}_2}^{(n_{c_1}-1)}& \cdots  & \ell_{\bar{Z}_{k-1}}^{(n_{c_1}-1)}  & \ell_{\bar{Z}_{k}}^{(n_{c_1}-1)} \\ [3pt]
  \ell_{\bar{Z}_1}^{(n_{c_1})}   & \ell_{\bar{Z}_2}^{(n_{c_1})}  & \cdots  & \ell_{\bar{Z}_{k-1}}^{(n_{c_1})}    & \ell_{\bar{Z}_{k}}^{(n_{c_1})}.
\end{array}
\end{equation}
Let $S_1$ be the set of $g_{Z\,|\,i}^{(j)}$ and $\ell_{Z_i}^{(j)}$, which form  a generalized syndrome array:
\beq
[\textsf{s}_1]_{j,i}=
\begin{cases}
&g^{(j)}_{Z\,|\,i}, \ \ \ \ \ \ \ \ \ \ \ \ \ \ \ \ \text{for} \ 1\leq i\leq r_Z; \\
&\ell^{(j)}_{\bar{Z}_{i-(n-n_Z)}}, \ \ \ \ \ \ \ \text{for} \ r_Z < i\leq r_Z+k.
\end{cases}
\eeq
{Here, the subscript 1 means that the eigenvalues of these stabilizers are processed at the first round of distillation.
Similarly, we define $S_2$ as the set of $X$ stabilizer generators  whose syndrome array $\textsf{s}_2$ would be estimated at the second round of distillation. Obviously, $S_1 \ \bigcup \ S_2$ is the complete set of stabilizer generators.}

If the preparation circuits are perfect, the entries of $\textsf{s}_1$ would all be 0s. However, if circuits are imperfect, some generalized syndromes bits will be flipped to 1s.
Let $[\textsf{s}_1]_{:,i}$ denote the $i$th column of $\textsf{s}_1$.
If one can correctly identify a row $j$ of $[\textsf{s}_1]$, $[\textsf{s}_1]_{j,:}$, i.e., the generalized syndromes for the $j$th block, then in principal quantum error correction can be carried out to correct the errors on the $j$th block. Observe that the entries of $[\textsf{s}_1]_{:,i}$ are independent binary random variables, since the ancilla blocks are prepared independently. In Ref.~\cite{Ancilla_distillation_1}, we introduce the idea of distillation by classical   codes, so that a subset of $[\textsf{s}_1]_{:,i}$ can be determined.

\begin{figure}[!ht]
\centering\includegraphics[width=40mm]{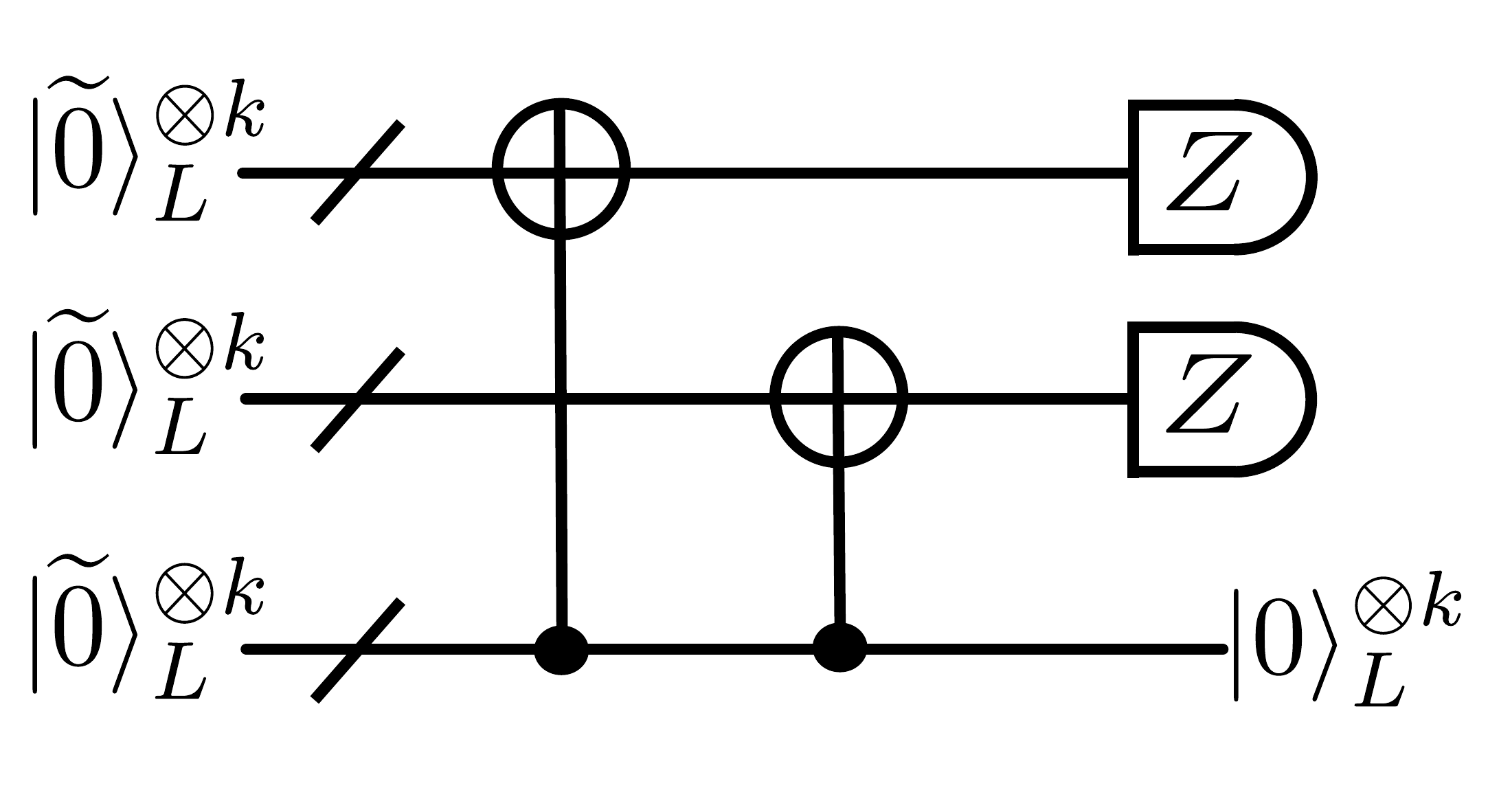}
\caption{\label{fig:0_x} The circuit for ancilla distillation by the classical $[3, 1, 3]$ repetition code to remove $X$ errors. The first two $|\widetilde{0}\>^{\otimes k}_L$ serve as parity-check ancillas.}
\end{figure}

Choose the first $r_{c_1}$ of the ancillas of each group to
hold the classical parity checks, and do transversal CNOTs from the remaining $k_{c_1}$ ancillas onto each of the parity-check ancillas according to the pattern of 1s in the rows of $\textsf{A}_{c_1}$. If $[\textsf{A}_{c_1}]_{i,j}=1$, we apply a transversal CNOT from the $(r_{c_1}+j)$th ancilla to the $i$th ancilla.
Note that $X$ errors will not propagate from the $i$th block to $(r_{c_1}+j)$th block through the CNOTs applied. Then measure all the qubits on each of the first $r_{c_1}$ ancilla blocks in the $Z$ basis, which destroys the states of those blocks. As an example,
Fig.~\ref{fig:0_x} demonstrates the distillation circuit
using the $[3,1,3]$ repetition code. From the measurement outcomes $\nu^{(1)}, \cdots, \nu^{(r_{c_1})}\in \mathbb{Z}_2^n$, we obtain $r_{c_1}$ strings $\sigma^{(i)}\in \mathbb{Z}_2^{r_Z+k},\ i=1,\dots, r_{c_1}$, which can be shown to be
\beq
\textsf{H}_{c_1}\textsf{s}_1=
\left[
  \begin{array}{c}
    \nu^{(1)}\left(\textsf{H}_{Z}'\right)^T \\
    \vdots \\
    \nu^{(r_{c_1})}\left(\textsf{H}_{Z}'\right)^T \\
  \end{array}
\right]\equiv
\left[
  \begin{array}{c}
    \sigma^{(1)} \\
    \vdots \\
    \sigma^{(r_{c_1})}\\
  \end{array}
\right]\equiv \sigma,
\eeq
where
\beq\label{eq:general_HZ}
\textsf{H}_Z'=\left[\begin{array}{c}
                             \textsf{H}_Z \\
                             \hline
                             \textsf{L}_Z\rule{0pt}{2.6ex}
                           \end{array}
\right]\equiv\left[\begin{array}{c}
                             \textsf{H}_Z\\
                             \hline
                             [\bar{z}_1^T\cdots
                             \bar{z}_k^T]^T \rule{0pt}{2.6ex}
                           \end{array}\right],
\eeq
and $\bar{z}_j$ is the binary representation of $\bar{Z}_j$.
In other words, the $i$th column of $\sigma$ matrix has the generalized syndrome bits of $[\textsf{s}_1]_{:,i}$.
Then we can use a decoder for $\cC_{c_1}$ to estimate the values of $[\textsf{s}_1]_{:,i}$ for all $i$ and thus obtain the estimated generalized syndrome array $[\widetilde{\textsf{s}}_1]$. After that, for code block $r_{c_1}+1 \leq j \leq n_{c_1}$, one can use  $\{\tilde{g}_{Z\,|\,i}^{(j)}\}$ and a classical decoder for $\cC_Z$ to estimate the $X$ errors. Suppose the decoder of $\mathcal{C}_Z$ always generates an estimate $\tilde{e}$ of an error vector $e$ such that $\textsf{H}_Z(e+\tilde{e})=0$, and assume the decoded $X$ error on block $j$ is $E_X^{(j)}$. Then for each logical operator $\bar{Z}_i^{(j)}$, $i=1,\dots,n$, $j=r_{c_1}+1,\dots, n_{c_1}$, follow the rules:
\begin{enumerate}
  \item If $[E^{(j)}_X, \bar{Z}_i^{(j)}] = 0$ and $\ell_{\bar{Z}_i}^{(j)} = 0$, then do nothing;
  \item If $\{E^{(j)}_X, \bar{Z}_i^{(j)}\} = 0$ and $\ell_{\bar{Z}_i}^{(j)} = 0$, then replace $E_X^{(j)}$ by $E_X^{(j)} \bar{X}_{i}^{(j)} $;
  \item If $[E^{(j)}_X, \bar{Z}_i^{(j)}] = 0$ and $\ell_{\bar{Z}_i}^{(j)} = 1$, then replace $E_X^{({j})}$ by $E_X^{(j)} \bar{X}_{i}^{(j)} $;
  \item If $\{E^{(j)}_X, \bar{Z}_i^{(j)}\} = 0$ and $\ell_{\bar{Z}_i}^{(j)} = 1$, then do nothing.
\end{enumerate}
The updated $E_X^{(j)}$ are the decoded $X$ errors, and one can correct them on the output block or just keep track of them. Alternatively, for a symmetric
CSS code, we may decode $\cC_Z^\perp$ directly from $[\widetilde{\textsf{s}}_1]_{j,:}$ to get
$E^{(j)}_X$ directly. However it may be very difficult to find an efficient decoder for $\cC_Z^\perp$ in practice.


Suppose the gate failure rate in the {encoding} circuit of $|0\>_L^{\otimes k}$ is $p$.
A gate failure may propagate through in the whole circuit so that, after encoding, the ancilla block contains some $X$ and $Z$ errors of weight up to the code length with probability $O(p)$. After the distillation process removes $X$ errors, the remaining $k_{c_1}$ ancillas from each group will have lower rates of $X$ errors than they started
with. The probability that $X$ errors of arbitrary weight remain on each block drops
from $O(p)$ to $O(cp^{t_{c_1}+1})$, where $c$ is a constant that depends
on the details of $\cC_{c_1}$ and its decoder. This is because the estimated generalized syndromes are wrong with probability $O(p^{t_c+1})$. But correlated $Z$ errors on the target qubits can propagate
via the CNOTs back onto the these $k_{c_1}$ blocks. The rate of $Z$ errors on the remaining $k_{c_1}$ ancillas will increase to  $O((M + 1)p)$, where $M$ is the maximum number of parity checks each control qubit is included in (or equivalently, the maximum number of 1s in each column of the matrix $\textsf{A}_{c_1}$). If $p$ is not too large, the rate of
$Z$ errors has grown roughly by a constant factor $M$, while the rate of $X$ errors has been substantially reduced. From the original large number of ancillas, we retain a fraction $k_{c_1}/n_{c_1}$, which have much lower $X$ error
rates and somewhat higher $Z$ error rates.

\begin{figure}[!ht]
\centering\includegraphics[width=60mm]{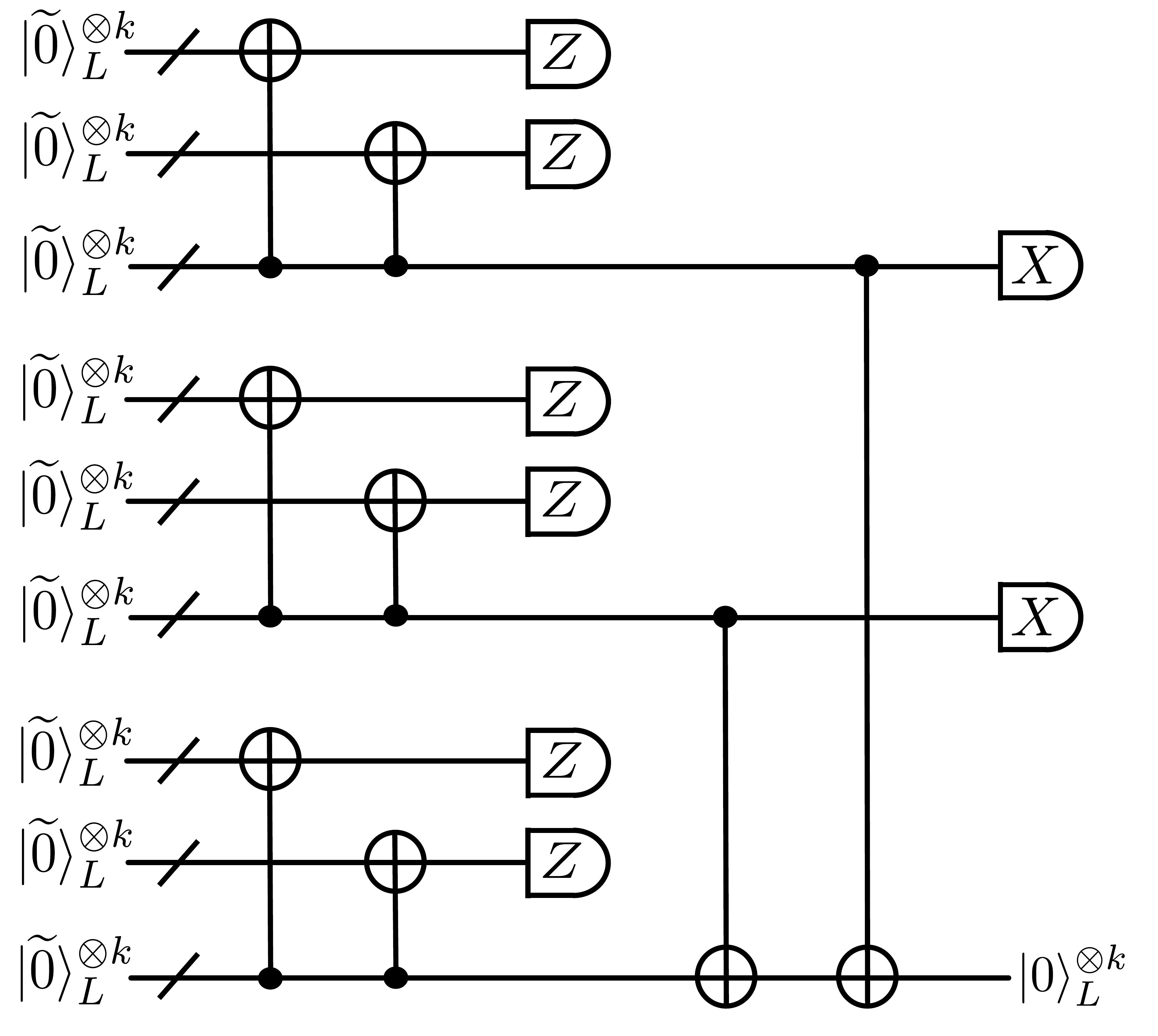}
\caption{\label{fig:0_x_z} The circuit for ancilla distillation by the classical $[3, 1, 3]$ repetition code to produce ancillas free of (low-weight) $X$ and $Z$ errors. }
\end{figure}
Each group will output $k_{c_1}$ blocks after the first round of distillation. To remove the remaining $Z$ errors, we divide the $k_{c_1}n_{c_2}$ output blocks into $k_{c_1}$ groups of $n_{c_2}$ blocks. Note that here it is important that
ancillas that were grouped together in the first round
\emph{not} be grouped together in the second round, because their $Z$ errors are now correlated between different blocks in the same group. This randomization process will ensure that the flipping of the generalized syndromes for different blocks in each group are independent so that classical error correction can work properly. Similarly, we can use the classical code $\cC_{c_2}$
to remove $Z$ errors. The process is similar to the first found, each qubit of the first $r_{c_2}=n_{c_2}-k_{c_2}$ ancillas is measured in the $X$ basis to estimate $\{\tilde{g}_{X\,|\, i}^{(j)}\}$ by classical decoding of $\cC_{c_2}$. (For $|0\>_L$, the eigenvalues of $\bar{X}_i$ do not need to be evaluated.) Since the error rate for $X$ is now higher, one may choose $\cC_{c_2}$ with a stronger error correction ability than $\cC_{c_1}$. An example of the whole distillation procedure using the $[3,1,3]$ repetition code is shown in Fig.~\ref{fig:0_x_z}. At the end, the overall yield rate is $\frac{k_{c_1}k_{c_2}}{n_{c_1}n_{c_2}}$.

\section{Fault-tolerant Ancilla State Preparation using classical codes}\label{sec:FTpreparation}
In the FTQC, CNOTs and measurements in a distillation circuit can be noisy, which could destroy the whole  protocol, producing unqualified ancillas as outputs. In this section, we will first study the negative effects of noisy circuits on the output ancillas, and show that the previous distillation protocol fails in this situation. We will then modify the distillation protocol to be a fully fault-tolerant, and show that the output ancillas are qualified.

In general, a fault-tolerant ancilla preparation procedure has two stages: ancilla preparation by noisy circuits and ancilla distillation by noisy circuits.
Before we go on, we introduce the notation that will be used later. Let $\mathscr{G}$ be the set of failures in the whole procedure and $|\mathscr{G}|$ represents the number of failures.
Note that we only consider failures that are Pauli operators.
Hence any single failure $\mathrm{g}\in \mathscr{G}$ can be decomposed as a product of an $X$ part and a $Z$ part, up to a phase: $\mathrm{g}=\mathrm{g}_X\mathrm{g}_Z$.  We define the collection of $\mathrm{g}_X$'s $(\mathrm{g}_Z\text{'s})$ from $\mathscr{G}$ as $\mathscr{G}_X$ $(\mathscr{G}_Z)$. Obviously, $|\mathscr{G}_{X }|, |\mathscr{G}_{Z}|\leq |\mathscr{G}|$.

We can define a subset $\mathscr{G}(\step)\subseteq \mathscr{G}$, which contains failures that occur at the $\step$th step of the circuit. Let $\text{Q}_{\mathscr{G}(\step)}$ be the \emph{support} of $\mathscr{G(\step)}$. For example, for a single CNOT gate at the $\step$th step, if $\mathrm{g}\in\mathscr{G}(\step)$ acts nontrivially on the control (target) qubit, then $\text{Q}_{\mathscr{G}(\step)}$ contains the control (target) qubit; if $\mathrm{g}\in\mathscr{G}(\step)$ affects both the control and target qubits, then $\text{Q}_{\mathscr{G}(\step)}$ contains both qubits. Let $\text{Q}_{\mathscr{G}(\step)}^{(j)}$ denote the support of failures on block~$j$ at the $\step$th step.
\begin{figure}[!htp]
\centering\includegraphics[width=45mm]{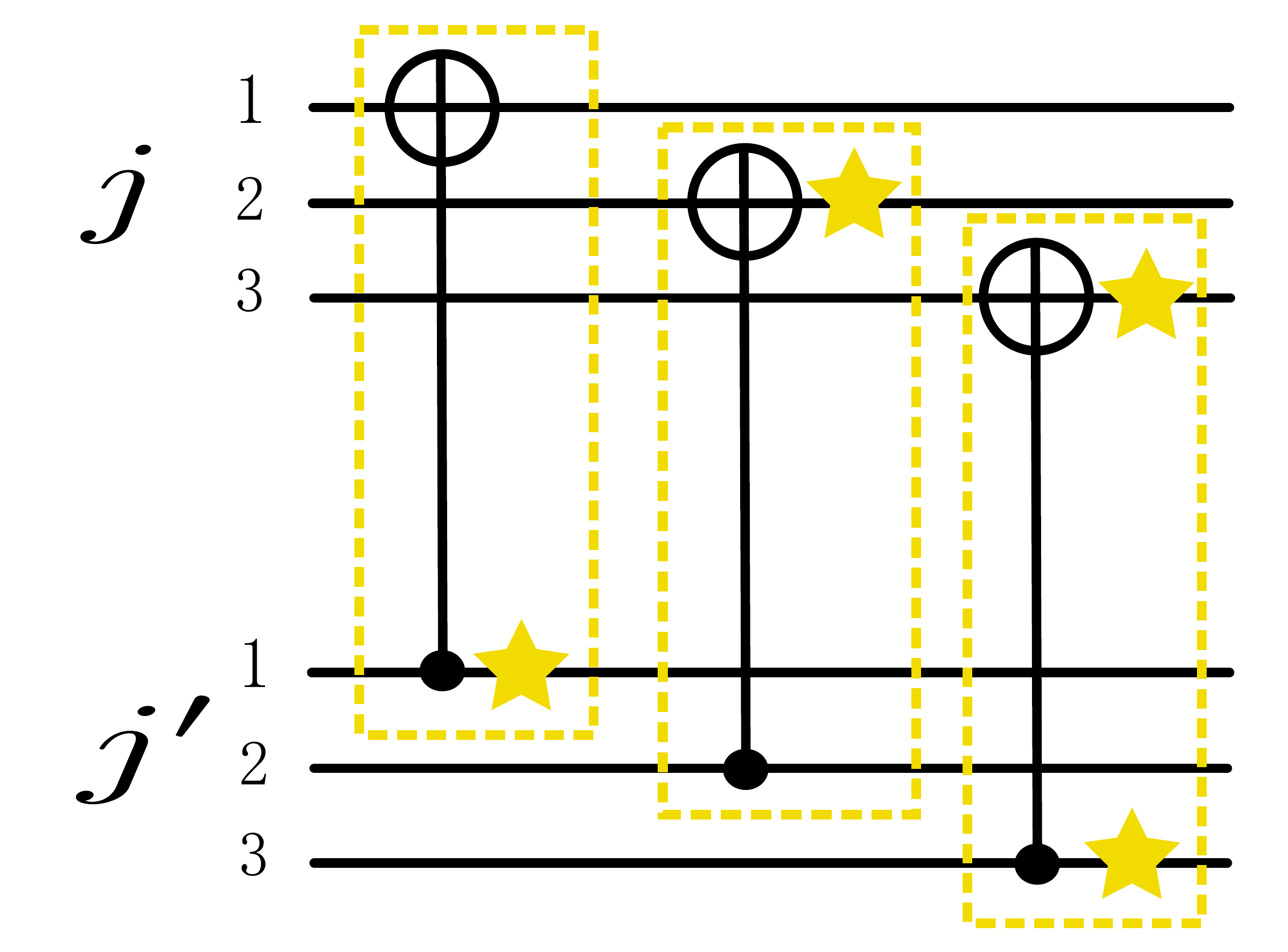}
\caption{\label{fig:supporting}(Color online) A single step in the preparation circuit involving two blocks $j$ and $j'$. The yellow stars represent the supporting qubits of failures that occur on the CNOT gates.
}
\end{figure}
Then $\text{Q}_{\mathscr{G}(\step)}=\bigcup_j \text{Q}_{\mathscr{G}(\step)}^{(j)}$.

We also define $\mathcal{I}\left(\text{Q}^{(j)}_{\mathscr{G}(\step)}\right)\subseteq\{1,\dots,n\}$, where $n$ is the code length, as the set of qubit indices of $\text{Q}^{(j)}_{\mathscr{G}(\step)}$ in block~$j$. For example, consider a single step in the preparation procedure as shown in Fig.~\ref{fig:supporting}. The corresponding $\text{Q}_{\mathscr{G}(\step)}^{(j)}$ and $\text{Q}_{\mathscr{G}(\step)}^{(j')}$ are qubits 2, 3 of block~$j$ and qubit 1, 3 of block $j'$.

\begin{figure}[!htp]
\centering\includegraphics[width=50mm]{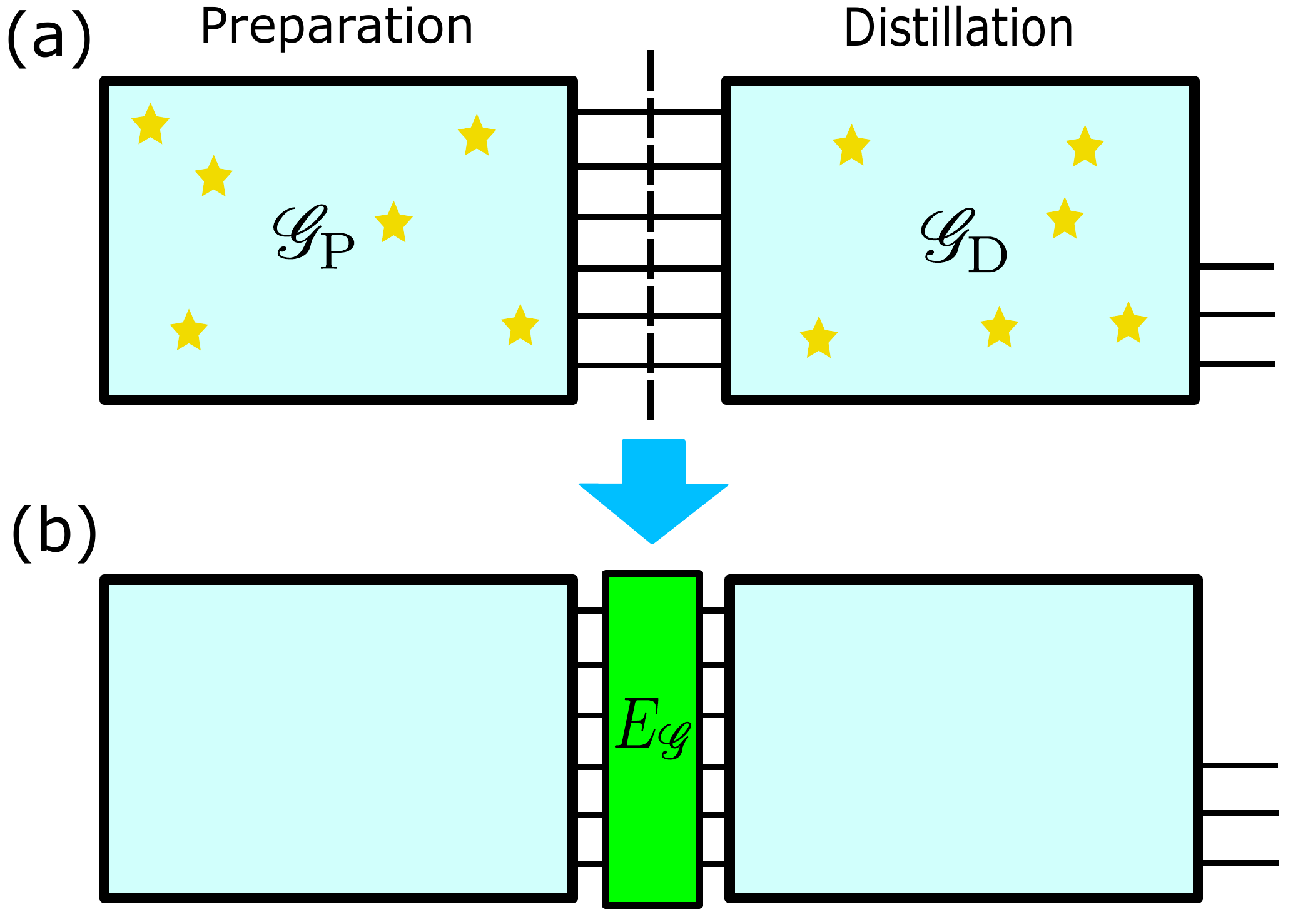}
\caption{\label{fig:equivalent}(Color online) (a) Failure set $\mathscr{G}$ (whose supports are represented by yellow stars) can be decomposed into failures occurring in the preparation stage $\mathscr{G}_{\rm P}$ and in the distillation stage $\mathscr{G}_{\rm D}$. (b) The effect of $\mathscr{G}$ is equivalent to perfect ancilla preparation and perfect distillation circuits with an effective failure process in between.
}
\end{figure}
Our strategy is to transform this fault-tolerant ancilla preparation procedure into one with three stages: 1) perfect ancilla preparation; 2) an effective failure process; and 3) perfect distillation circuits. Thus $\mathscr{G}$ can be decomposed as $\mathscr{G}=\mathscr{G}_{\rm P}\bigcup\mathscr{G}_{\rm D}$, where $\mathscr{G}_{\rm P}$ and $\mathscr{G}_{\rm D}$ are failure sets in the preparation and distillation stages, respectively. (See, for example, Fig.~\ref{fig:equivalent}.)
It is always possible for a noisy preparation circuit to be modeled as a perfect one followed by some failures,
and a distillation circuit can be modeled as a perfect one with preceding failures.
Let $$E_\mathscr{G}=\bigcup_j E_\mathscr{G}^{(j)}$$ be the  effective failures of $\mathscr{G}$ at the second stage,
where $E_\mathscr{G}^{(j)}$ is set of the effective failures  on block $j$.
One can also define $\text{Q}^{(j)}_{E_\mathscr{G}}$, $\mathcal{I}\left(\text{Q}^{(j)}_{E_\mathscr{G}}\right)$ in a similarly way. Define $S^{(j)}_{\mu}$ as the stabilizers of block whose eigenvalues need to be estimated in the $\mu$th round of distillation, for  $\mu=1,2$. Let $S^{(j)}_\mu|_{E_\mathscr{G}}\subseteq S^{(j)}_{\mu}$ be the subset of stabilizers  anticommuting with $E_\mathscr{G}$ on block $j$ (the set $E_\mathscr{G}$ is a Pauli operator)
and $\mathcal{I}\left(S^{(j)}_\mu|_{E_\mathscr{G}}\right)$ be the set of indices of stabilizers in $S^{(j)}_\mu|_{E_\mathscr{G}}$.

\subsection{A no-go theorem for the previous protocol with noisy circuits}\label{sec:nogo}
Observe that in a distillation circuit, the qubits that are measured in the $Z$ basis will never be the control qubits of CNOT gates.
If a noisy measurement is modeled as a perfect measurement with a preceding random Pauli $X$ failure, this $X$ failure will not propagate through the CNOTs.
Hence, this type of $X$ failure on the measurement can be directly absorbed into the effective failure preceding a perfect distillation.
Similarly for the $Z$ failures before the measurements in the $X$ basis.
Thus, measurement failures only have a very limited effect.
However, the effect of noisy CNOTs in the distillation circuit is  more complicated and will be discussed as follows.

\subsubsection{Perfect preparation circuit with noisy distillation circuits}
First, we consider the case that the input ancilla blocks are free of errors, while the distillation circuits are noisy. This seems absurd at a first glance, since then no distillation is needed. Nevertheless, it will give insight about the distribution of errors caused by the noisy CNOTs. For such a restricted set of failures $\mathscr{G}$,   the \emph{effective} support of $\mathscr{G}_{X}$ on block $j$ is defined as
\beq
\text{QE}_{\mathscr{G}_{X}}^{(j)}=\bigsqcup_{\step}
\text{Q}^{(j)}_{\mathscr{G}_{X}(\step)},
\eeq
where the operator $\bigsqcup$ takes the union of all $\text{Q}^{(j)}_{\mathscr{G}_{X}(\step)}$ except those qubits affected by an even number of different elements in $\mathscr{G}_{X}(\step)$.
For example, consider two CNOT gates with the same control qubit.
Suppose  that each of the two CNOTs suffers an $X$ failure on the control qubit, and $\mathscr{G}_X$ contains only these two failures.
Then $\text{QE}_{\mathscr{G}_{X}}=\emptyset$ since the two $X$ failures will cancel each other out.
$\text{QE}_{\mathscr{G}_{X}}^{(j)}$ is similarly defined.

Recall that the whole array of generalized syndromes before the $\mu$th round of distillation can be listed as follows:
\begin{equation}
{\small
\begin{array}{ccccc}
  [\textsf{s}_\mu]_{1,1}       &[\textsf{s}_\mu]_{1,2}     & \cdots  & [\textsf{s}_\mu]_{1,|S_\mu|-1}        &  [\textsf{s}_\mu]_{1,|S_\mu|}  \\ [3pt]
  [\textsf{s}_\mu]_{2,1}       & [\textsf{s}_\mu]_{2,2}    & \cdots  &   [\textsf{s}_\mu]_{2,|S_\mu|-1}      & [\textsf{s}_\mu]_{2,|S_\mu|} \\ [3pt]
    \vdots        & \vdots         & \ddots  &   \vdots               & \vdots \\ [3pt]
  [\textsf{s}_\mu]_{n_{c_\mu}-1,1}       & [\textsf{s}_\mu]_{n_{c_\mu}-1,2}    & \cdots  &   [\textsf{s}_\mu]_{n_{c_\mu}-1,|S_\mu|-1}      & [\textsf{s}_\mu]_{n_{c_\mu}-1,|S_\mu|} \\ [3pt]
   [\textsf{s}_\mu]_{n_{c_\mu},1}       & [\textsf{s}_\mu]_{n_{c_\mu},2}    & \cdots  &   [\textsf{s}_\mu]_{n_{c_\mu},|S_\mu|-1}      & [\textsf{s}_\mu]_{n_{c_\mu},|S_\mu|}
\end{array}
}
\end{equation}
where $\mu=1,2$. With the existence of CNOT failures in the distillation circuits, the generalized syndromes in this array can be correlated between different rows. As an illustration, we study the following example first.

\begin{figure}[!htp]
\centering\includegraphics[width=85mm]{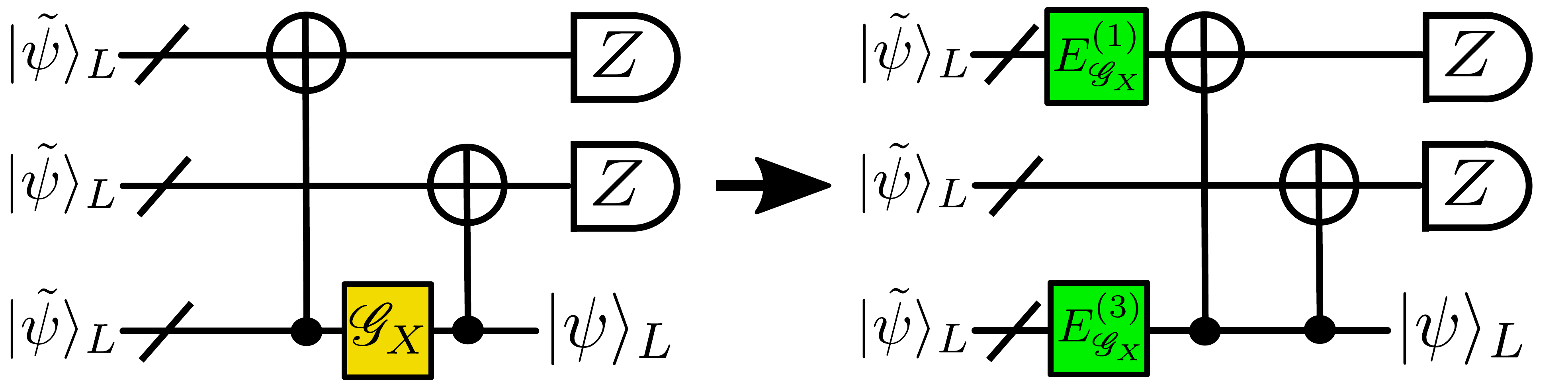}
\caption{\label{fig:distillation_error}An example of failures $\mathscr{G}_X$  associated with the CNOT gates on the third block.
It is equivalent to have effective failures on the first and third blocks before a perfect distillation circuit.
}
\end{figure}
\begin{example}\label{ex:benign}
Consider the distillation circuit that uses the $[3,1,3]$ code to remove $X$ failures. Let  $\mathscr{G}$ be a failure set on the third block  due to the transversal CNOTs from block~3 to block~1 as shown in the left circuit of Fig.~\ref{fig:distillation_error}.
Since only $X$ failures are supposed to be removed at this round, we only need to consider $\mathscr{G}_X$ and its support $\text{QE}_{\mathscr{G}_X}^{(3)}$. Apparently, $\left|\text{QE}_{\mathscr{G}_X}^{(3)}\right|=\left|\mathscr{G}_X\right|$, since $\mathscr{G}_X$ only has failures in one time step.
The circuit is equivalent to a perfect distillation circuit with preceding failures  $ E_{\mathscr{G}_X}^{(1)}\bigcup E_{\mathscr{G}_X}^{(3)}$
on the first and third blocks. Since the circuit consists only of transversal gates, we have
\beq
\mathcal{I}\left({\text{Q}_{E_{\mathscr{G}_X}}^{(1)}}\right)=\mathcal{I}
\left(\text{Q}_{E_{\mathscr{G}_X}}^{(3)}\right)
=\mathcal{I}\left(\text{QE}_{\mathscr{G}_X}^{(3)}\right).
\eeq
Thus, $\mathcal{I}\left(S^{(3)}_\mu|_{E_\mathscr{G}}\right) =\mathcal{I}\left(S^{(1)}_\mu|_{E_\mathscr{G}}\right)
\equiv \mathcal{I}_{S_\mu\,|\,{E_{\mathscr{G}_X}}}$. For those syndrome columns with index $i\in\mathcal{I}_{S_\mu\,|\,{E_{\mathscr{G}_X}}}$, $[\textsf{s}_\mu]_{1,i}=[\textsf{s}_\mu]_{3,i}=1$. Since the   $[3,1,3]$ code can correct only one error, it will mistakenly decode syndrome column $i\in\mathcal{I}_{S_\mu\,|\,{E_{\mathscr{G}_X}}}$ and diagnose a flip in $[\textsf{s}_{\mu}]_{2,i}$, leaving $E_{\mathscr{G}_X}^{(3)}$ on block 3 uncorrected after distillation.

Since the failures occur independently on the qubits, we are left with an $X$ error of weight $w=\left|\text{QE}_{\mathscr{G}_X}^{(3)}\right|$  with probability $p^{|\mathscr{G}|}$.
By Def.~\ref{def:uncorrelatation}, this is an  uncorrelated error since $|\mathscr{G}|\geq
|\mathscr{G}_X|=\left|\text{QE}_{\mathscr{G}_X}^{(3)}\right|$, so this is fine. This fact suggests that certain types of CNOT failures should be harmless, and only those output qubits in the  support of the faulty CNOTs would potentially be affected.
\end{example}

According to this observation, we can show that there are a large number of gate failures that are harmless as in Example~\ref{ex:benign}. We have following result:
\begin{mylemma}\label{lemma:distillation_error}
Consider a single round of the distillation protocol and, without loss of generality, suppose $X$ errors are to be removed.
Assume all the input ancilla states are error free. Suppose a gate failure set $\mathscr{G}$ occurs during distillation.  If for any $j,j'$  such that $\text{QE}^{(j)}_{\mathscr{G}_X},\text{QE}^{(j')}_{\mathscr{G}_X}\neq \emptyset$ and
\beq\label{eq:benign_failure}
\mathcal{I}\left(\text{QE}_
{\mathscr{G}_X}^{(j)}\right)= \mathcal{I}\left(\text{QE}_
{\mathscr{G}_X}^{(j')}\right),
\eeq
then for each output block, it is either free of $X$ errors or contains an $X$ error of weight $w= \left|\mathcal{I}_{\text{Q}\,|\,\mathscr{G}_X}\right|$,
where $\mathcal{I}_{\text{Q}\,|\,\mathscr{G}_X}$ is the set $\mathcal{I}\left(\text{QE}_{\mathscr{G}_X}^{(j)}\right)$ for some $j$ such that $\text{QE}^{(j')}_{\mathscr{G}_X}\neq \emptyset$.
\end{mylemma}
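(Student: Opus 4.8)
The plan is to generalize the reasoning of Example~\ref{ex:benign} from a single faulty block to an arbitrary failure set $\mathscr{G}$ whose effective supports all have the same qubit index set. First I would set up the bookkeeping: decompose $\mathscr{G}$ into its $X$-part $\mathscr{G}_X$ (the only part relevant for this round, by the argument preceding Example~\ref{ex:benign} that $Z$-failures and measurement-$X$-failures can be absorbed into the effective failure before a perfect distillation circuit), and push all failures through the transversal CNOTs so that the noisy round becomes a perfect distillation circuit preceded by an effective error $E_{\mathscr{G}_X} = \bigcup_j E_{\mathscr{G}_X}^{(j)}$. Because every gate in the round is transversal, a failure with effective support $\text{QE}^{(j)}_{\mathscr{G}_X}$ on a control block $j$ propagates to exactly the same qubit positions on each target block it is CNOTed into; thus for every block $j'$ that is touched at all we have $\mathcal{I}(\text{Q}^{(j')}_{E_{\mathscr{G}_X}}) = \mathcal{I}_{\text{Q}\,|\,\mathscr{G}_X}$, the common index set from the hypothesis~\eqref{eq:benign_failure}.

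Next I would translate this into a statement about the generalized-syndrome array $\textsf{s}_\mu$. Since every affected block carries an effective $X$-error supported on the \emph{same} qubit index set $\mathcal{I}_{\text{Q}\,|\,\mathscr{G}_X}$, each such block anticommutes with exactly the same subset of $Z$-type stabilizers, i.e. $\mathcal{I}(S^{(j)}_\mu|_{E_\mathscr{G}}) = \mathcal{I}_{S_\mu\,|\,\mathscr{G}_X}$ is independent of $j$ among the affected blocks. Hence each column $[\textsf{s}_\mu]_{:,i}$ of the syndrome array is either identically zero (if $i \notin \mathcal{I}_{S_\mu\,|\,\mathscr{G}_X}$) or has $1$'s in precisely the rows corresponding to affected blocks and $0$'s elsewhere. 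Crucially, the \emph{pattern} of affected rows is the same binary vector $w \in \mathbb{Z}_2^{n_{c_\mu}}$ for every nonzero column. The classical decoder of $\cC_{c_\mu}$ is applied column-by-column, so it produces the same correction pattern — call it $\hat w = \text{dec}(w)$ — for every nonzero column, and the zero columns are left untouched. Consequently the estimated array $\widetilde{\textsf{s}}_\mu$ differs from $\textsf{s}_\mu$ only by adding the vector $w + \hat w$ into each column indexed by $\mathcal{I}_{S_\mu\,|\,\mathscr{G}_X}$.

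The conclusion then follows by reading off, block by block, the residual error after the round's quantum error-correction step. On a block $j$ in the retained set, the true $X$-error has syndrome $[\textsf{s}_\mu]_{j,:}$ restricted to $\mathcal{I}_{S_\mu\,|\,\mathscr{G}_X}$, while the decoder believes the syndrome to be the $j$th row of $\widetilde{\textsf{s}}_\mu$; the residual error after correction therefore has syndrome equal to the difference $(w+\hat w)_j$ times the indicator of $\mathcal{I}_{S_\mu\,|\,\mathscr{G}_X}$. If $(w+\hat w)_j = 0$, the block emerges $X$-error-free (the decoder of $\cC_Z$, fed the all-zero syndrome on that block, returns the trivial correction, and no residual logical discrepancy arises because the effective error was itself supported inside $\mathcal{I}_{\text{Q}\,|\,\mathscr{G}_X}$). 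If $(w+\hat w)_j = 1$, the residual $X$-error is, up to a stabilizer of $\mathcal{Q}$ and the logical-operator bookkeeping of rules~1--4, exactly $X$ supported on $\mathcal{I}_{\text{Q}\,|\,\mathscr{G}_X}$, hence of weight $\left|\mathcal{I}_{\text{Q}\,|\,\mathscr{G}_X}\right|$. Either way the output block is as claimed.

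I expect the main obstacle to be the careful handling of the \emph{logical} components in rules~1--4: one must check that when the effective error $E_{\mathscr{G}_X}^{(j)}$ has a nontrivial logical action (it can, since its support is arbitrary), the combination of classical decoding for $\cC_Z$ together with the $\ell_{\bar Z_i}^{(j)}$-dependent corrections still leaves a residual of weight exactly $\left|\mathcal{I}_{\text{Q}\,|\,\mathscr{G}_X}\right|$ and not, say, $\left|\mathcal{I}_{\text{Q}\,|\,\mathscr{G}_X}\right|$ plus the weight of some low-weight logical representative. The point to be verified is that the effective error on the uncorrected block is genuinely equivalent — modulo the stabilizer group, which does not change weight in the relevant sense here — to $X$ restricted to $\mathcal{I}_{\text{Q}\,|\,\mathscr{G}_X}$, so that the weight statement is about that specific representative and the logical rules merely re-bookkeep it without inflating its weight. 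A secondary, more routine point is confirming that the ``even number of failures cancel'' convention built into the $\bigsqcup$ operator is exactly what makes $E_{\mathscr{G}_X}^{(j)}$ equal (as a Pauli, up to phase) to $X$ on $\text{QE}^{(j)}_{\mathscr{G}_X}$, so that the hypothesis~\eqref{eq:benign_failure} on index sets is the correct and sufficient condition.
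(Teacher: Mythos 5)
Your proposal follows essentially the same route as the paper's own proof: transversality forces the effective error supports (and hence the sets of anticommuting stabilizers) to share a single index set across all affected blocks, so the nonzero syndrome columns are identical, the column-wise classical decoder treats them identically, and each output block is therefore either fully corrected or left with the full-weight error on $\mathcal{I}_{\text{Q}\,|\,\mathscr{G}_X}$. Your version is in fact more explicit than the paper's (which is quite terse), and the caveats you flag about the logical-operator bookkeeping and the cancellation convention in $\bigsqcup$ are legitimate points the paper glosses over rather than gaps in your argument.
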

\begin{proof}
Since nonempty $\mathcal{I}\left(\text{QE}^{(j)}_{\mathscr{G}_X}\right)
=\mathcal{I}\left(\text{Q}^{(j)}_{E_{\mathscr{G}_X}}\right)$ are all identical, according to the transversal property of the distillation circuit, when propagating the error back before the distillation circuit, the $\mathcal{I}\left({S^{(j)}_\mu\,|\,_{E_{\mathscr{G}_X}}}\right)$ are identical for such a $S^{(j)}_\mu\,|\,_{E_{\mathscr{G}_X}}\neq \emptyset$.
Let $\mathcal{I}_{S_\mu\,|\,{E_{\mathscr{G}_X}}}
=\mathcal{I}\left({S^{(j)}_\mu\,|\,_{E_{\mathscr{G}_X}}}\right)$ for such $j$. For any output block $r_{c_\mu}+1\leq j\leq n_{c_\mu}$,
the estimated syndrome is that either
$[\widetilde{\textsf{s}}_\mu]_{j,i}=[\textsf{s}_\mu]_{j,i},\forall i\in \mathcal{I}_{S_\mu\,|\,{E_{\mathscr{G}_X}}}$, or $[\widetilde{\textsf{s}}_\mu]_{j,i}\neq[\textsf{s}_\mu]_{j,i}$, $\forall i\in \mathcal{I}_{S_\mu\,|\,{E_{\mathscr{G}_X}}}$. That is, either all the syndromes related to $\text{Q}^{(j)}_{E_{\mathscr{G}_X}}$ are correctly estimated, or they all are correctly estimated,
Consequently, after quantum error correction, the qubits with index in $\mathcal{I}_{\text{Q}\,|\,\mathscr{G}_X}$ for each output block are either all free of error, or all left with $X$ errors of weight $\left|\mathcal{I}_{\text{Q}\,|\,\mathscr{G}_X}\right|$.
\end{proof}
Lemma~\ref{lemma:distillation_error} implies that if gate failures satisfying Eq.~(\ref{eq:benign_failure}) are the only possible type of failures, then the output blocks are qualified,
since the probability of weight $w\leq |\mathscr{G}|$ errors remaining in the output blocks will be $O(p^{|\mathscr{G}|})$. Such a set of gate failures is said to be \emph{benign}. (Note that Example~\ref{ex:benign} is a special case, in that only the third block satisfies $\text{QE}^{(3)}_{\mathscr{G}_X}\neq \emptyset$.)

\subsubsection{Noisy preparation and distillation circuits}
Now we  consider the general case where the preparation circuit is also noisy.
After encoding, errors of arbitrary weight can exist with probability $O(p)$, and the syndrome flips are independent among different blocks.
However, there are additional correlated syndrome flips across different blocks due to the faulty CNOTs in the distillation circuit, which may compromise the validity of syndrome estimation.
This is summarized in the following no-go theorem, which says that the previous distillation protocol cannot handle the situation where the distillation circuit is noisy.
\begin{thm}[No go]\label{thm:nogo}
Consider a single round of the distillation protocol by  an $[n_{c_\mu}, k_{c_\mu}, d_{c_\mu}]$ classical code $\cC_{c_\mu}$ with parity check matrix $\left[\sfI_{n_{c_\mu}}|\sfA_{c_{\mu}}\right]$. Denote the number of 1s in the $j$th column of $\textsf{A}_{c_\mu}$  by $m_j$ and let $M=\max\{m_j\}$. If the distillation circuit is noisy with failure rate $p$, then the probability of an error of arbitrary weight $w$ remaining on the output blocks will be $O(p^2)$ if $t_{c_\mu}\leq M$, and $O(p^{t_{c_\mu}-M+2})$ if $t_{c_\mu} > M$,
where $t_{c_\mu}=\lfloor\frac{d_{c_\mu}-1}{2}\rfloor$.
\end{thm}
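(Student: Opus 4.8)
The plan is to pin down the minimal \emph{order} --- the number of independent faults --- of a configuration that leaves an uncorrectable arbitrary-weight error on an output block, and show it equals $2$ when $t_{c_\mu}\le M$ and $t_{c_\mu}-M+2$ when $t_{c_\mu}>M$; the stated probability estimate then follows since each fault contributes one factor of $p$. The construction side starts from the three-stage equivalence already set up in Sec.~\ref{sec:FTpreparation} (Fig.~\ref{fig:equivalent}): write $\mathscr G=\mathscr G_{\rm P}\cup\mathscr G_{\rm D}$ and replace the noisy preparation and distillation circuits by perfect ones separated by an effective fault set $E_{\mathscr G}$. A single preparation fault in $\mathscr G_{\rm P}$ deposits an $E_{\mathscr G}^{(B)}$ of arbitrary weight on one data block $B=r_{c_\mu}+j$, but this alone is harmless: on each column of the generalized-syndrome array $\mathsf s_\mu$ where $E_{\mathscr G}^{(B)}$ is active it produces only the weight-$1$ pattern $e_B$, which the $\mathcal C_{c_\mu}$-decoder corrects, after which quantum error correction on $B$ removes $E_{\mathscr G}^{(B)}$ up to a low-weight residue. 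So the arbitrary-weight error survives only if faults in $\mathscr G_{\rm D}$ corrupt the column decoding at row $B$, and by Lemma~\ref{lemma:distillation_error} and the notion of benign faults this requires faults violating Eq.~(\ref{eq:benign_failure}).

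The quantitative input is the reach of such a fault. An $X$-on-control fault on the $\ell$-th transversal CNOT emanating from $B$, pushed before the perfect distillation circuit, lands as the same single-qubit operator on $B$ and on its first $\ell$ check blocks $i_1,\dots,i_\ell$; picking its qubit in the support of $E_{\mathscr G}^{(B)}$, on each column where both operators are active the true $\mathsf s_\mu$-column is shifted by $e_{i_1}+\dots+e_{i_\ell}$ (the two contributions at $B$ cancel). Because $\mathsf H_{c_\mu}=[\mathsf I\,|\,\mathsf A_{c_\mu}]$ is systematic with $j$-th column of $\mathsf A_{c_\mu}$ equal to $e_{i_1}+\dots+e_{i_{m_j}}$, and $\ell\le m_j\le M$, this pushes the decoded column at row $B$ wrong precisely once the induced classical error exceeds the $\mathcal C_{c_\mu}$-radius, which a single fault achieves exactly when $t_{c_\mu}\le M$ (take $m_j=M$, $\ell=M$). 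When $t_{c_\mu}>M$ a single fault ``covers'' at most $M$ of the $m_j$ check blocks attached to $B$ while the decoder still corrects up to $t_{c_\mu}$, so $t_{c_\mu}-M+1$ further faults (one extra factor of $p$ each) are needed to inflate a column that is active for $E_{\mathscr G}^{(B)}$ past weight $t_{c_\mu}$. Together with the single preparation fault, and checking that a mis-estimated row $B$ forces quantum error correction to work from a syndrome that only partially pins down $E_{\mathscr G}^{(B)}$ --- so the residue has weight growing with $\mathrm{wt}(E_{\mathscr G}^{(B)})$ --- one gets an arbitrary-weight output error of order $2$ or $t_{c_\mu}-M+2$. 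The matching upper bound is the complementary assertion: by the reduction above and Lemma~\ref{lemma:distillation_error}, any configuration with strictly fewer faults is either benign or leaves the $E_{\mathscr G}^{(B)}$-active $\mathsf s_\mu$-columns at row $B$ inside the $\mathcal C_{c_\mu}$-ball, hence correctly decoded and ultimately correctable.

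The hard part is the step glossed over above: showing that once the $\mathcal C_{c_\mu}$-decoder mis-reads row $B$, the post-correction residue on $B$ is genuinely of large weight rather than a low-weight operator or a pure logical, which means controlling the quantum decoder on a syndrome corrupted in a structured but non-local way; and, for the upper bound, making the count tight --- verifying that the single-CNOT reach $M$ and the classical decoding radius $t_{c_\mu}$ combine in exactly the stated fashion and that no alternative route (using $Z$-type faults, faults on other blocks, or the feedback of errors onto the surviving blocks) beats it. The remaining ingredients --- the three-stage reduction, the per-fault propagation, and the reduction to the benign case --- are bookkeeping with the machinery already in place in Sec.~\ref{sec:FTpreparation}.
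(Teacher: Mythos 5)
Your overall strategy --- the three-stage reduction of Fig.~\ref{fig:equivalent}, the identification of $M$ as the reach of a single faulty CNOT, and the comparison of that reach against the classical decoding radius $t_{c_\mu}$ --- matches the paper's, and your $O(p^2)$ count is fine when $t_{c_\mu}<M$ strictly. But your construction sources the arbitrary-weight error from a preparation fault on the output block $B$ itself and then tries to shield it from correction, and this costs one order of $p$. The columns of $\textsf{s}_\mu$ that carry the syndrome information of $E^{(B)}_{\mathscr{G}}$ (the ones you must corrupt for $E^{(B)}_{\mathscr{G}}$ to survive quantum error correction) receive weight only $M$ from the single CNOT fault, precisely because of the cancellation at row $B$ that your construction engineers; pushing such a column past the radius therefore takes $t_{c_\mu}-M+1$ further independent faults, so your configuration has order $1+1+(t_{c_\mu}-M+1)=t_{c_\mu}-M+3$, not $t_{c_\mu}-M+2$. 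The same mismatch already appears at the boundary $t_{c_\mu}=M$ of your first case: there the ``both-active'' column has weight exactly $t_{c_\mu}$ and is decoded correctly, so two faults do not suffice on your route.

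The paper avoids this by placing no fault on the output block at all: one CNOT fault flips a column in $M$ rows, $t_{c_\mu}-M+1$ independent single preparation faults on \emph{other} blocks each add one more $1$ to that same column (and to no other affected column), that column alone is mis-decoded, and the output block ends up with an estimated syndrome wrong in exactly one bit. The arbitrary-weight error is then the \emph{spurious correction} $\widetilde{E}_X^{(j_3)}E_X^{(j_3)}$ produced by quantum decoding of that one-bit-corrupted syndrome --- it is manufactured by the decoder rather than deposited in advance. This also dissolves the step you yourself flag as ``the hard part'': your route needs a large fraction of $E^{(B)}_{\mathscr{G}}$ to survive a correction based on a structurally corrupted syndrome, which is doubtful (the residue's syndrome is a low-complexity object whose coset may contain low-weight representatives), whereas the paper only needs the weaker statement that the coset singled out by one flipped syndrome bit can contain errors of weight $>t$. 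To repair your argument, drop the preparation fault on $B$ and let the classical mis-decoding alone generate the damaging correction --- which is exactly the paper's proof.
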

\begin{proof}
Without loss of generality, we consider the round of distillation that is supposed to remove the $X$ errors.  Suppose we have a failure set $\mathscr{G}$, which contains a single CNOT failure in the distillation circuit involving block $j_1$. The support of $\mathscr{G}_X$ is $\text{QE}^{(j_1)}_{\mathscr{G}_X}$. After  the failure propagates to the beginning of the distillation circuit, it will introduce failures in a certain number of blocks. In the worst case, $M$ blocks will be affected. Denoted this set of $M$ blocks by $\mathcal{B}_{\mathscr{G}_X}$.
Since the distillation circuit gates are all transversal, $\mathscr{G}_X$ is benign and $\mathcal{I}\left({S^{(j)}_\mu|_{E_{\mathscr{G}_X}}}\right)$ are all identical to $\mathcal{I}\left({S^{(j_1)}_\mu|_{E_{\mathscr{G}_X}}}\right)$  for $j\in \mathcal{B}_{\mathscr{G}_X}$  (denote it by $\mathcal{I}_{S_\mu\,|\,{E_{\mathscr{G}_X}}}$). Overall, $[\textsf{s}_\mu]_{j,i}$ will be flipped to 1 for $j\in \mathcal{B}_{\mathscr{G}_X}$ and $i \in \mathcal{I}_{S_\mu\,|\,{E_{\mathscr{G}_X}}}$.

If $M\geq t_{c_\mu}$, then consider an additional single failure $\mathscr{R}$ that occurs in the previous stage on a block $j_2\notin\mathcal{B}_{\mathscr{G}_X}$ such that for some $i_1 \in \mathcal{I}_{S_\mu\,|\,{E_{\mathscr{G}_X}}}$, $[\textsf{s}_\mu]_{j_2,i_1}=1$,
and for all other $i\in \mathcal{I}_{S_\mu\,|\,{E_{\mathscr{G}_X}}}$, $[\textsf{s}_\mu]_{j_2,i_1}=0$. Thus, $[\widetilde{\textsf{s}}_\mu]_{,:i_1}\neq [\textsf{s}_\mu]_{,:i_1}$ because $[\textsf{s}_\mu]_{,:i_1}$ has more than $t_{c_\mu}$ 1s. But all the other columns of syndromes are still correctly estimated.
In the worst case, for a specific output block $r_{c_\mu}+1\leq  j_3\leq n_{c_\mu}$, syndromes for all stabilizers are correctly estimated except the $i_1$th one:
In general, the decoded error $\widetilde{E}_X^{(j_3)}$ based on $[\widetilde{\textsf{s}}_{\mu}]_{j_3,:}$ can be quite different from the actual $X$ error on the $j$th block, ${E}_X^{(j_3)}={E}_{\mathscr{G}_X}^{(j_3)}{E}_{\mathscr{R}_X}^{(j_3)}$. As a result, the residual error $\widetilde{E}_X^{(j_3)}E_X^{(j_3)}$ can have any weight.
Thus, there is a probability  $O(p^2)$ that an error of any weight $w$  remains in the output blocks, since $\mathscr{G}$ and $\mathscr{R}$ are independent single failures.

For $M < t_{c_\mu}$, if $t_{c_\mu}-M+1$ failures occur  in ${M-t_{c_\mu}+1}$ separated blocks outside $\mathcal{B}_{\mathscr{G}_X}$ in the previous stages, the situation is similar to the case where $M\geq t_{c_\mu}$, and this  occurs with probability $O(p^{t_{c_\mu}-M+2})$.
\end{proof}

\subsection{Fault-tolerant ancilla state distillation}\label{sec:ftdistillation}
In this subsection we show how calculating additional syndrome bits \cite{ashikhmin_chingyi2014robust,fujiwara2014ability,ashikhmin2016correction} can lead to the postselection of compatible error syndromes in the distillation protocol. As we will see, one can get round Theorem~\ref{thm:nogo} by such a postselection, so that the resulting  distillation protocol is fully fault-tolerant.

Consider the two groups generated by the stabilizer generators $S_1, S_2$ of the first and second rounds of distillation:
\beq
\mathcal{SC}_\mu=\<S_\mu\>, \quad \quad \mu=1,2.
\eeq
\begin{mydef}
For $\mu=1,2$,
the eigenvalues associated with \emph{all} the elements of  $\mathcal{S}_\mu$  is called the \emph{complete syndrome}  for group $\mathcal{SC}_\mu$,  and is denoted by $[\textsf{sc}_{\mu}]_{j,i}$  for the $i$th syndrome bit in the $j$th block.
\end{mydef}
Note that   $\textsf{sc}_\mu$ can be estimated from the bitwise measurements of the check blocks. Denote the estimated complete syndrome array by $\widetilde{\textsf{sc}}_\mu$.
As before, let $\mathcal{SC}_\mu^{(j)}|_{E_{\mathscr{G}}}$  be the subset of stabilizers in $\mathcal{SC}_\mu^{(j)}$ that anticommute  with $E_\mathscr{G}$.
Using the distillation protocol proposed in the previous section, one can estimate the eigenvalue of any element in $\mathcal{SC}_\mu$ (syndromes) with only additional classical computation and no additional physical operations.

Now consider three elements $\text{SC}_{\mu,1}\in S_\mu$, $\text{SC}_{\mu,2}\in S_\mu$ and $\text{SC}_{\mu,3}=\text{SC}_{\mu,1}\text{SC}_{\mu,2} \in \mathcal{SC}_\mu \backslash S_\mu$. Denote the corresponding estimated generalized syndrome bits on a specific output block $j$ by $[\widetilde{\textsf{sc}}]_{j,1}$, $[\widetilde{\textsf{sc}}]_{j,2}$ and $[\widetilde{\textsf{sc}}]_{j,3}$, which are supposed to satisfy $[\widetilde{\textsf{sc}}]_{j,3}=[\widetilde{\textsf{sc}}]_{j,1}
+[\widetilde{\textsf{sc}}]_{j,2}$.
In such case,  the estimated syndrome bits are said to be \emph{compatible}. In general, although the values of $[\widetilde{\textsf{sc}}]_{j,3}$ and $[\widetilde{\textsf{sc}}]_{j,1}
+[\widetilde{\textsf{sc}}]_{j,2}$ are correlated, they are not always the same when failures occur. If $[\widetilde{\textsf{sc}}]_{j,3}\neq[\widetilde{\textsf{sc}}]_{j,1}
+[\widetilde{\textsf{sc}}]_{j,2}$, it strongly implies that the estimated syndrome set is problematic and the corresponding ancilla block cannot be used. This method can be extended to check any number of products of elements in $S_{\mu}$.

We therefore propose a modified distillation protocol by estimating additional syndrome bits and checking  whether the estimated syndrome bits are compatible with each other. If they are not, we discard the distilled ancillas and try again.

In this paper, we use  a classical error-detecting code to define a suitable set of additional stabilizers in $\mathcal{SC}_\mu \backslash S_\mu$.

Consider an $[n_{d_\mu}, k_{d_\mu}=|S_\mu|]$ classical code $\cC_{d_\mu}$ with parity-check matrix $\textsf{H}_{d_\mu}=[\textsf{I}_{n_{d_\mu}-|S_\mu|} \ |\  \textsf{A}_{d_\mu}]$, where
$\textsf{A}_{d_\mu}$ is an $(n_{d_\mu}-|S_\mu|)\times |S_\mu|$ binary matrix. We define a set of additional stabilizers $S'_\mu$ where
\beq
S'_{\mu_j}=\prod_{i=1}^{|S_\mu|} S_{\mu_i}^{[\textsf{A}_{d_\mu}]_{j,i}}, \quad \quad j=1,\dots, n_{d_\mu}-|S_\mu|,
\eeq
where $S'_{\mu_j}$ is the $j$th element of $S'_{\mu}$ and $S_{\mu_i}$  us the $i$th element of $S_\mu$.
$S'_\mu$ and $S_\mu$ can be combined to form an extended set of stabilizer
$SE_{\mu}=S'_\mu \bigcup S_\mu$, whose eigenvalues need to be estimated. The corresponding extended syndrome array of $SE_\mu$ for all blocks can be defined as
\beq
\textsf{se}_\mu=[\textsf{s}'_\mu|\ \textsf{s}_\mu].
\eeq
After classical decoding, each output block $j$ has an estimated syndrome vector $[\widetilde{\textsf{se}}_\mu]_{j,:}$. By the definition of $S'_\mu$, $[\widetilde{\textsf{se}}_\mu]_{j,:}$ is compatible only if
\beq\label{eq:postselection}
\textsf{H}_{d_\mu}[\widetilde{\textsf{se}}_\mu]_{j,:}^T=0.
\eeq
If the block does not pass this test, its estimated syndrome bits are likely to be chaotic. Since for each $j$, the values of $[\widetilde{\textsf{se}}_\mu]_{j,i}$ are strongly correlated between different $i$, we can not do error correction using $\cC_{d_\mu}$ but instead simply discard the outputs. Such process is called \emph{postselection}:
we accept the output ancillas if Eq.~(\ref{eq:postselection}) holds, and  we do an error correction procedure based on $[\widetilde{\textsf{s}}_{\mu}]$ for the remaining blocks as before. This scheme   is fault-tolerant only if the high-weight errors remaining on the output blocks are greatly suppressed by postselection, and the ancillas that pass the postselection are qualified. In addition, we want that the rejection rate not to be too high. In the next subsection, we will address these issues in detail.

\subsection{Fault-tolerance of the protocol}
In this section, we study the conditions such that the proposed protocol in the previous subsection is fault-tolerant.
A glossary is provided in Appendix.~\ref{appendix:table} (Table~\ref{tab:notation}) to help the readers with the various symbols used in the text.

In the previous subsection, we use a classical error-detecting code to define additional syndrome bits and check whether they are compatible.
In the ideal case, if the complete syndrome set is estimated and checked for compatibility, the postselection will work better than the nonideal case, since it has the most error information.

\begin{mydef}
Suppose the complete syndrome set of block $j$ for $\mathcal{SC}_\mu^{(j)}$ is estimated ($\mu=1,2$).
For any $\text{SC}_{\mu, i_l}\in \<S_\mu\>$ with a decomposition  $\text{SC}_{\mu,i_1}\times \cdots \times  \text{SC}_{\mu,i_k}$, where $\text{SC}_{\mu,i_j}\in S_{\mu}$, if the estimated syndrome bits satisfy
\beq\label{eq:valid_syndrome_cond}
[\widetilde{\textsf{sc}}_\mu]_{j,i_l}= [\widetilde{\textsf{sc}}_\mu]_{j,i_1}+\cdots \ + [\widetilde{\textsf{sc}}_\mu]_{j,i_k},
\eeq
then the estimated complete syndrome set is said to be \emph{valid} for block $j$.
\end{mydef}

\begin{remark}
Even if the estimated complete syndrome set is valid, it is not necessarily correct; instead, it simply means that the estimated syndromes are compatible with each other and thus more trustworthy.
\end{remark}

\begin{mydef}
A postselection is said to be \emph{ideal} if those output blocks with valid estimated complete syndrome sets are kept and the rest are discarded in a round of distillation.
\end{mydef}

As we have seen in the proof of Lemma~\ref{lemma:distillation_error}, certain syndrome patterns caused by a benign failure set are harmless. We can define such syndrome patterns as follows.
\begin{mydef}
Consider the complete syndrome set for all the ancilla blocks. If the values of the syndrome bits for each block are identical, then this complete syndrome array is said to be \emph{good}. The subset of blocks containing these non-zero syndrome bits are denoted as $\mathcal{B}_{\text{f}}$.
\end{mydef}
An example of a good syndrome pattern is shown in Fig.~\ref{fig:good_pattern} for five blocks.
\begin{figure}[!htp]
\centering\includegraphics[width=50mm]{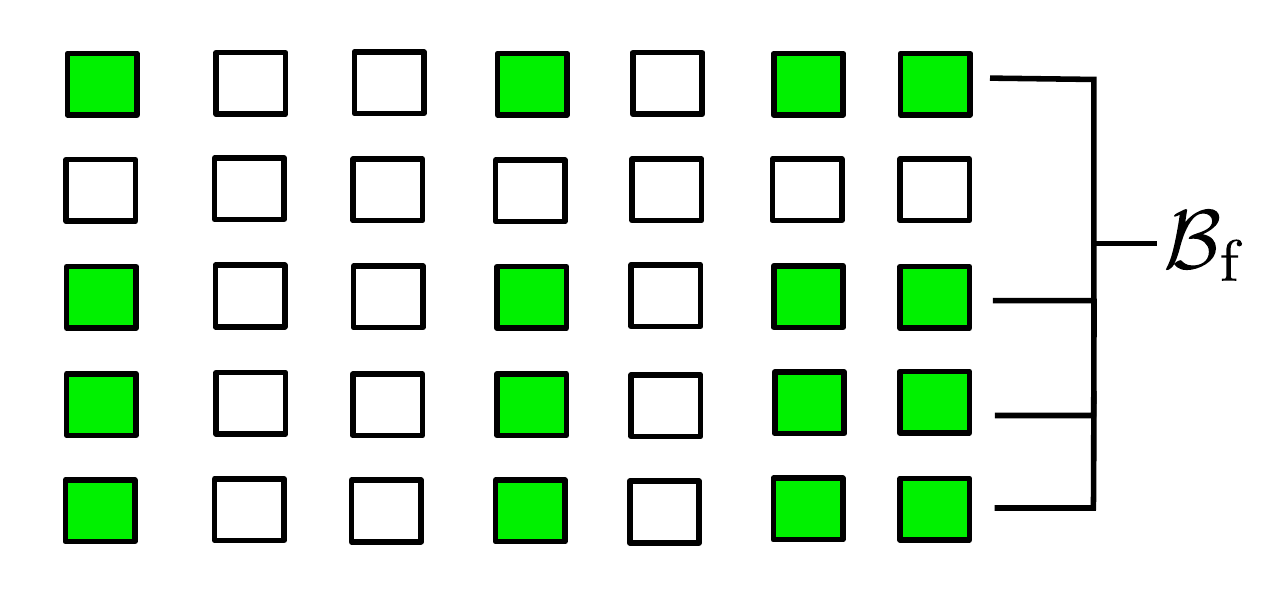}
\caption{\label{fig:good_pattern}(Color online) An illustration of a good syndrome pattern for syndrome array $\textsf{sc}_\mu$, with $|S_\mu|=3$.  The green boxes represent 1-value syndromes and white ones represent 0-value syndromes. }
\end{figure}
One can directly have the following lemma from the observation
 in Sec.~\ref{sec:nogo}:
\begin{mylemma}
A benign failure set causes a good syndrome pattern before distillation.
\end{mylemma}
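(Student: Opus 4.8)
The plan is to read this off directly from the structural picture set up in Sec.~\ref{sec:nogo} and used in the proof of Lemma~\ref{lemma:distillation_error}, so essentially no new machinery is needed. Fix one round of distillation and assume, as in the statement, that the input ancilla blocks are error-free, so the complete syndrome array $\textsf{sc}_\mu$ is produced entirely by the benign gate-failure set $\mathscr{G}$. Say this round removes $X$ errors (the other round is identical under $X\leftrightarrow Z$); then the stabilizers $\mathcal{SC}_\mu=\langle S_\mu\rangle$ are all of $Z$-type, so only the $X$-parts $\mathscr{G}_X$ of the gate failures can flip bits of $\textsf{sc}_\mu$ --- a $Z$-fault gives a $Z$-type effective error that commutes with every element of $\mathcal{SC}_\mu$.

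First I would rewrite the noisy distillation circuit as a perfect one preceded by the effective failure $E_{\mathscr{G}_X}=\bigcup_j E^{(j)}_{\mathscr{G}_X}$ on the input blocks, exactly as in Fig.~\ref{fig:equivalent} and in the proof of Lemma~\ref{lemma:distillation_error}. Because every CNOT of the distillation circuit is transversal, an $X$ fault on qubit $i$ of a block can only be copied to qubit $i$ of other blocks, so $\mathcal{I}(\text{Q}^{(j)}_{E_{\mathscr{G}_X}})=\mathcal{I}(\text{QE}^{(j)}_{\mathscr{G}_X})$ for every block $j$. Then I would invoke the defining property of a benign set, Eq.~(\ref{eq:benign_failure}): all the \emph{nonempty} sets $\mathcal{I}(\text{QE}^{(j)}_{\mathscr{G}_X})$ equal one common index set $\mathcal{I}_0$. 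Let $\mathcal{B}$ be the collection of blocks with nonempty effective support.

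It then remains to compute the true syndrome rows. For $j\notin\mathcal{B}$ the block is clean and carries no effective error, so $[\textsf{sc}_\mu]_{j,:}=\mathbf{0}$. For $j\in\mathcal{B}$ the effective error is an $X$-type Pauli supported exactly on the qubit set $\mathcal{I}_0$; since an $X$-type Pauli is determined up to phase by its support, this is literally one and the same operator $X^{e_0}$, with $\mathrm{supp}(e_0)=\mathcal{I}_0$, for every $j\in\mathcal{B}$. Its syndrome against $\mathcal{SC}_\mu$ is then a single fixed bit-vector $\mathbf{v}$ --- the commutation pattern of $X^{e_0}$ with the generators of $\mathcal{SC}_\mu$ --- the same for every $j\in\mathcal{B}$. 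Hence each row of $\textsf{sc}_\mu$ is either $\mathbf{0}$ or the fixed vector $\mathbf{v}$, which is exactly the definition of a good syndrome array, with $\mathcal{B}_{\text{f}}=\mathcal{B}$ (and $\mathcal{B}_{\text{f}}=\emptyset$ when $\mathbf{v}=\mathbf{0}$). The only step needing care is the chain ``identical effective support $\Rightarrow$ identical operator $\Rightarrow$ identical syndrome row,'' which uses transversality of the distillation CNOTs together with the fact that the pre-distillation syndrome of a block depends on the error on that block alone; I do not expect a real obstacle, as this merely unpacks the flip pattern ``$[\textsf{s}_\mu]_{j,i}=1$ for $j\in\mathcal{B}_{\mathscr{G}_X}$, $i\in\mathcal{I}_{S_\mu\,|\,{E_{\mathscr{G}_X}}}$'' already observed in Sec.~\ref{sec:nogo}.
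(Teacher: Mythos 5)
Your argument is correct and is essentially the paper's own justification: the paper states this lemma without proof, appealing only to "the observation in Sec.~IV.A," and your write-up is exactly the unpacking of that observation (Example~1 and the proof of Lemma~1) --- benignness plus transversality force every block's effective error to be either trivial or one fixed $X$-type operator supported on the common index set $\mathcal{I}_0$, so every row of $\textsf{sc}_\mu$ is either zero or a single fixed vector, which is the definition of a good pattern. The one loose step is your claim that $\mathcal{I}\left(\text{Q}^{(j)}_{E_{\mathscr{G}_X}}\right)=\mathcal{I}\left(\text{QE}^{(j)}_{\mathscr{G}_X}\right)$ for \emph{every} block $j$ --- a block can acquire a nonempty effective error while hosting no physical failure (block~1 in Example~1), and showing that each nonempty effective support equals exactly $\mathcal{I}_0$ (rather than a proper subset) needs the small parity remark that every propagated copy has full support $\mathcal{I}_0$ --- but since you then correctly work with the set of blocks having nonempty \emph{effective} support, the conclusion stands.
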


Another  useful property of a good syndrome pattern is as follows.
\begin{mylemma}\label{lemma:good_pattern}
All the nonzero columns of the syndrome array of a good syndrome pattern are identical.
\end{mylemma}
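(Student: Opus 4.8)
The plan is to unpack the definitions of a \emph{good syndrome pattern} and of the generalized syndrome array, and then argue by direct comparison of columns. Recall that the syndrome array $\textsf{sc}_\mu$ has one row per ancilla block and one column per stabilizer bit in $\mathcal{SC}_\mu$, and that ``good'' means: for every block $j$, the whole row $[\textsf{sc}_\mu]_{j,:}$ is either all zero or equal to one fixed nonzero row vector $w\in\mathbb{Z}_2^{|S_\mu|}$ (the one shared by all blocks in $\mathcal{B}_{\text{f}}$). I would state this normal form explicitly at the start of the proof, since everything follows from it.

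First I would fix a column index $i$ and ask when $[\textsf{sc}_\mu]_{:,i}$ is nonzero: it is nonzero precisely when some block $j\in\mathcal{B}_{\text{f}}$ has $[\textsf{sc}_\mu]_{j,i}=1$, which by the normal form happens exactly when the $i$th entry of the common vector $w$ equals $1$. So nonzero columns are exactly the columns $i$ with $w_i=1$. Next I would compute such a column entrywise: for $j\in\mathcal{B}_{\text{f}}$ we get $[\textsf{sc}_\mu]_{j,i}=w_i=1$, and for $j\notin\mathcal{B}_{\text{f}}$ we get $[\textsf{sc}_\mu]_{j,i}=0$. Hence every nonzero column equals the indicator vector $\mathbb{1}_{\mathcal{B}_{\text{f}}}\in\mathbb{Z}_2^{n_{c_\mu}}$ of the block subset $\mathcal{B}_{\text{f}}$ — a vector that does not depend on $i$ at all. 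Therefore all nonzero columns coincide, which is the claim. I would also note in passing that the zero columns are exactly those with $w_i=0$, so the dichotomy is clean.

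The main (very modest) obstacle is purely bookkeeping: being careful that ``good'' as defined quantifies the ``identical across blocks'' condition row-wise (all blocks share one syndrome string), and then translating that into the column-wise statement. One subtlety worth a sentence is degeneracy — if $\mathcal{B}_{\text{f}}=\emptyset$ (no failures), the array is all zeros and the lemma holds vacuously, so I would dispose of that case first. Another is that I should make sure $w$ is well-defined: the definition of a good pattern forces every nonzero row to be the \emph{same} string, which is exactly what lets me factor each nonzero column through the single vector $\mathbb{1}_{\mathcal{B}_{\text{f}}}$. No deeper machinery (transversality, properties of $\cC_{c_\mu}$, Lemma~\ref{lemma:distillation_error}) is needed here; this lemma is the elementary linear-algebra companion to the earlier ``benign failures produce good patterns'' observation, and its role is to set up the later argument that good patterns are automatically compatible with the error-detecting code $\cC_{d_\mu}$.
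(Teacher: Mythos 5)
Your proof is correct and is exactly the intended argument: the paper states Lemma~\ref{lemma:good_pattern} without any proof, treating it as immediate from the definition of a good pattern, and your unpacking (every nonzero row equals a common vector $w$, so column $i$ is nonzero iff $w_i=1$ and then equals the indicator vector of $\mathcal{B}_{\text{f}}$) is the only step there is. Your reading of the (somewhat ambiguously worded) definition is the right one — it is the reading the paper itself relies on in the proof of Lemma~\ref{lemma:goodpass} — and your handling of the degenerate all-zero case is a harmless extra.
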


Since a good syndrome pattern is harmless, we may expect that it will pass ideal postselection. Indeed, we have the following lemma:
\begin{mylemma}\label{lemma:goodpass}
If the complete syndrome pattern is good, then the estimated complete syndrome set is valid.
\end{mylemma}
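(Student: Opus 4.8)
The plan is to show that a good syndrome pattern forces the \emph{true} complete syndrome array $\textsf{sc}_\mu$ to have a simple ``outer-product'' structure $[\textsf{sc}_\mu]_{j,i}=c_i\,w_j$, that this structure survives the column-wise classical decoding producing $\widetilde{\textsf{sc}}_\mu$, and that validity is then immediate from the multiplicativity of Pauli anticommutation parities.

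First I would pin down what ``estimated complete syndrome array'' means operationally. For each $\text{SC}_{\mu,i}\in\mathcal{SC}_\mu$, the bitwise measurements of the check blocks determine the classical syndrome $\sigma^{(:,i)}=\textsf{H}_{c_\mu}[\textsf{sc}_\mu]_{:,i}$ of the $i$th column (this depends $\mathbb{Z}_2$-linearly on the column, because the binary representation of $\text{SC}_{\mu,i}$ is the mod-$2$ sum of those of its generator factors and the parity extraction is linear), and then each column is decoded \emph{independently}, $[\widetilde{\textsf{sc}}_\mu]_{:,i}=\mathrm{Dec}\bigl(\sigma^{(:,i)}\bigr)$, where $\mathrm{Dec}$ is the deterministic decoder of $\cC_{c_\mu}$. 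The only property of $\mathrm{Dec}$ I will use is the normalization $\mathrm{Dec}(\mathbf{0})=\mathbf{0}$, which holds for any coset-leader decoder; this independence of the per-column decoding is what makes the validity condition non-vacuous.

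Next I would invoke Lemma~\ref{lemma:good_pattern}: since the pattern is good, all nonzero columns of $\textsf{sc}_\mu$ coincide with one common vector $w\in\mathbb{Z}_2^{n_{c_\mu}}$ with $\mathrm{supp}(w)=\mathcal{B}_{\text{f}}$, so there is a $\{0,1\}$-valued function $i\mapsto c_i$ on $\mathcal{SC}_\mu$ with $[\textsf{sc}_\mu]_{j,i}=c_i\,w_j$ for all $j,i$. Because $\textsf{sc}_\mu$ is a genuine syndrome array, each of its rows respects the multiplicative relations among stabilizers: if $\text{SC}_{\mu,i_l}=\text{SC}_{\mu,i_1}\cdots\text{SC}_{\mu,i_k}$ with each factor in $S_\mu$, anticommutation parities add mod $2$, so $[\textsf{sc}_\mu]_{j,i_l}=\sum_{r}[\textsf{sc}_\mu]_{j,i_r}$ for every $j$. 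Substituting the outer-product form and picking any $j\in\mathcal{B}_{\text{f}}$ (so $w_j=1$; the case $w=\mathbf{0}$ is trivial since then all columns, hence all estimates, vanish) forces the purely combinatorial identity $c_{i_l}=\sum_{r}c_{i_r}\bmod 2$ among the $c_i$.

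Finally I would push this through the decoder. Since $\sigma^{(:,i)}=c_i\,\tau$ with $\tau:=\textsf{H}_{c_\mu}w$ takes only the two values $\mathbf 0$ and $\tau$, and $\mathrm{Dec}(\mathbf 0)=\mathbf 0$, the estimated column is $[\widetilde{\textsf{sc}}_\mu]_{:,i}=c_i\,\widehat{w}$ with $\widehat{w}:=\mathrm{Dec}(\tau)$ --- \emph{the same} coefficients $c_i$ --- so $[\widetilde{\textsf{sc}}_\mu]_{j,i_l}=\widehat{w}_j\,c_{i_l}=\widehat{w}_j\sum_{r}c_{i_r}=\sum_{r}[\widetilde{\textsf{sc}}_\mu]_{j,i_r}$ for every block $j$ and every admissible decomposition, which is precisely~(\ref{eq:valid_syndrome_cond}). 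The step needing the most care is bookkeeping rather than any estimate: being explicit that $\widetilde{\textsf{sc}}_\mu$ is assembled column-by-column, that $\mathrm{Dec}$ enters only as a deterministic map fixing $\mathbf 0$, and that the reduction of the true-syndrome identity to $c_{i_l}=\sum_r c_{i_r}$ legitimately uses $w\neq\mathbf 0$ and $\tau+\tau=\mathbf 0$; once the outer-product structure is in hand the conclusion is a one-liner.
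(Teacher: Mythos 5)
Your proposal is correct and follows essentially the same route as the paper's proof: both rest on Lemma~\ref{lemma:good_pattern} (all nonzero columns coincide), on the decoder being a deterministic column-wise map sending the zero column to zero and identical columns to identical estimates, and on the mod-2 relations among the column indicators being inherited by the estimates. Your outer-product packaging $[\textsf{sc}_\mu]_{j,i}=c_i w_j$ is just a cleaner, more explicit organization of the paper's case-by-case argument on pairs of columns.
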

\begin{proof}
If the number of blocks with syndrome bits flipped to 1 is no more than $t_c$, then the estimated complete syndrome sets are all correct, that is, $[\widetilde{\textsf{sc}}_\mu]_{j,i}=[\textsf{sc}_\mu]_{j,i}$ for all $j$ and $i$, and the estimated complete syndrome sets are automatically valid.
Otherwise, the estimated syndromes can either be wrong or correct, depending on the classical decoding procedure.

Consider two columns $i_1$ and $i_2$ and block $j\in\mathcal{B}_{\text{f}}$. 
Suppose $\text{SC}_{\mu,i_3}=\text{SC}_{\mu,i_1}\text{SC}_{\mu,i_2}\in \<S_\mu\>$.
If $[\textsf{sc}_\mu]_{j,i_1}=1$ and $[\textsf{sc}_\mu]_{j,i_2}=0$, then $[\textsf{sc}_\mu]_{j,i_3}=1$.  Then $[\widetilde{\textsf{sc}}_\mu]_{j,i_1}=[\widetilde{\textsf{sc}}_\mu]_{j,i_3}$ because $[\textsf{sc}]_{:,i_1}=[\textsf{sc}]_{:,i_3}$ when the syndrome pattern is good. Meanwhile, $[\textsf{sc}_\mu]_{:,i_2}$ must be all 0 according to Lemma~\ref{lemma:good_pattern}, and thus  $[\widetilde{\textsf{sc}}_\mu]_{j,i_2}=0$. Then one obtains
\begin{equation*}
[\widetilde{\textsf{sc}}_\mu]_{j,i_1}+[\widetilde{\textsf{sc}}_\mu]_{j,i_2}=[\widetilde{\textsf{sc}}_\mu]_{j,i_3}.
\end{equation*}
The above equation holds for arbitrary values of $[\textsf{sc}_\mu]_{j,i_1}$ and $[\textsf{sc}_\mu]_{j,i_2}$, . Similarly, we can see that
\begin{equation*}
[\widetilde{\textsf{sc}}_\mu]_{j,i_l}= [\widetilde{\textsf{sc}}_\mu]_{j,i_1}+\cdots \ + [\widetilde{\textsf{sc}}_\mu]_{j,i_k}
\end{equation*}
holds in general for $\text{SC}_{\mu,i_l}=\text{SC}_{\mu,i_1}\cdots \text{SC}_{\mu,i_k}\in \<S_\mu\>$.
\end{proof}

A natural question arises: can we mainly rely on such a postselection to reduce high weight errors? In other words, can we just use some classical codes of very small distance for $\cC_{c_1}$ and $\cC_{c_2}$, say distance three,  to roughly estimate the general syndrome bits, and then take error detection codes $\cC_{d_1}$ and $\cC_{d_2}$ of large distance to remove the blocks with incompatible syndrome bits? Then after quantum error correction, would the output ancillas be qualified?

To answer these questions, first consider a perfect distillation circuit  and check how well the postselection mechanism can perform. Can we suppress the error rate to order higher than $t_c+1$? The answer is negative because of the correlation among the estimated general syndrome bits after classical decoding. Here, we show the limitation of the postselection mechanism alone:
\begin{thm}\label{lemma:postselection_nogo}
For a single round of the distillation process, if the classical error-correcting code $\cC_{c_\mu}$ can correct $t_{c_\mu} < t-1$ errors, where the underlying quantum code can correct $t$ errors, then the probability of any error of weight $w > t$ on the output is at most $O(p^{t_{c_\mu}+1})$ even for ideal postselection.
\end{thm}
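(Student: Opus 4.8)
The plan is to prove this as a no-go by explicit construction: exhibit a failure set of size $t_{c_\mu}+1$ that escapes ideal postselection yet leaves a weight-$>t$ error on an output block, so that ideal postselection cannot push the suppression of such errors beyond order $t_{c_\mu}+1$. The conceptual starting point is to recast the postselection test in linear-algebraic terms. Because the map sending a Pauli to its vector of commutation relations with $\<S_\mu\>$ is $\mathbb{Z}_2$-linear, the set of \emph{valid} (compatible) complete-syndrome vectors for a single block is exactly the image of this anticommutation map, i.e.\ a linear subspace; moreover a block with true error $E$ is accepted by ideal postselection precisely when the decoder's error pattern $\Delta_{j} := [\widetilde{\textsf{sc}}_\mu]_{j,:}+[\textsf{sc}_\mu]_{j,:}$ is itself a genuine anticommutation pattern of some Pauli $Q^{(j)}$. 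This reduces ``surviving ideal postselection with a wrong syndrome'' to ``the decoder's mistake at block $j$ mimics a legitimate Pauli $Q^{(j)}$,'' and in that case the subsequent quantum recovery is off by $Q^{(j)}$, leaving a residual $\equiv Q^{(j)}$ modulo the stabilizer. The whole construction therefore aims to force $\Delta_{j^\ast}$ at some output block $j^\ast$ to equal the pattern of a $Q^{(j^\ast)}$ whose minimal-weight coset representative exceeds $t$.

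First I would recall, from the proof of Theorem~\ref{thm:nogo}, that a single syndrome column is misdecoded by $\cC_{c_\mu}$ only when it carries more than $t_{c_\mu}$ ones, which already forces at least $t_{c_\mu}+1$ independent failures; and that such a misdecode displaces the estimate by a nonzero codeword $c\in\cC_{c_\mu}$ of weight at least $d_{c_\mu}\ge 2t_{c_\mu}+1$, so that the mistake is transported onto the $\mathrm{supp}(c)$ blocks, in particular onto fresh output blocks outside the original failure support. I would then place exactly $t_{c_\mu}+1$ effective failures so that a chosen \emph{coset} of syndrome columns---those in $\mathcal{I}\left(S^{(j)}_\mu|_{E_\mathscr{G}}\right)$ for the intended operator---are simultaneously driven just over threshold, while all remaining columns stay at weight $\le t_{c_\mu}$ and decode correctly. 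Because all the over-threshold columns carry the \emph{same} $\mathbf{1}$-pattern on the failing blocks, they are misdecoded by the same codeword $c$; hence at a fresh block $j^\ast\in\mathrm{supp}(c)$ the mistake pattern $\Delta_{j^\ast}$ is the indicator of that coset, which is a homomorphism and therefore a legitimate anticommutation pattern. This is what lets the block pass ideal postselection.

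The remaining step is to arrange that the operator $Q^{(j^\ast)}$ realizing this coset-indicator has weight strictly greater than $t$, so that the quantum recovery at $j^\ast$ leaves an uncorrectable, correlated residual. Here the hypothesis $t_{c_\mu}<t-1$ is essential: it guarantees $t_{c_\mu}+1<t$, so the $t_{c_\mu}+1$ pushed columns can be matched to the commutation functional of a genuinely high-weight operator of the quantum code (for instance an operator whose coset minimal weight is $t+1$, which exists precisely because the code corrects no more than $t$ errors) without the classical code being strong enough to localize the mistake. Since the construction uses $t_{c_\mu}+1$ independent failures, the surviving weight-$>t$ configuration occurs with probability $\Theta(p^{t_{c_\mu}+1})$, establishing that even ideal postselection leaves such errors at order $t_{c_\mu}+1$ and cannot suppress them further.

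I expect the main obstacle to be reconciling the two competing demands on the pushed coset: \emph{validity/survival} forces $\Delta_{j^\ast}$ to be a homomorphism (a coset indicator), while \emph{harmfulness} forces its representative $Q^{(j^\ast)}$ to have weight $>t$; yet a small set of low-weight failures most naturally induces only low-weight, single-qubit-type functionals whose representatives are correctable. Overcoming this tension is the technical heart of the argument: one must show that the spreading action of the classical misdecode codeword $c$, combined with the existence of a weight-$>t$ operator of the quantum code whose commutation functional coincides with an over-threshold coset reachable by $t_{c_\mu}+1$ failures, can be realized simultaneously. Verifying this simultaneous realizability---and checking that every product column, not merely the generators, stays consistent so that the block genuinely passes Eq.~(\ref{eq:valid_syndrome_cond})---is where the argument requires the most care, and is where the value $t_{c_\mu}+1$ ultimately originates.
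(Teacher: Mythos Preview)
Your overall strategy---construct an explicit failure set of size $t_{c_\mu}+1$ that survives ideal postselection yet leaves a weight-$>t$ residual---is exactly what the paper does. But you are making the construction much harder than necessary, and the ``main obstacle'' you identify is one the paper sidesteps entirely.

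The key point you are missing is that the failures may be placed in the \emph{preparation} stage, where the encoding circuit is not fault-tolerant. A single preparation failure on one block can therefore produce an effective error $E$ of \emph{arbitrary} weight on that block, in particular weight $>t$, while still counting as only one failure in the order bookkeeping. The paper's construction simply places the \emph{same} high-weight error $E$ on $t_{c_\mu}+1$ distinct input blocks, one preparation failure per block. The resulting complete syndrome array is then a \emph{good} pattern in the sense of the paper (all nonzero rows identical), and Lemma~\ref{lemma:goodpass} immediately gives that every output block has a valid estimated complete syndrome set and hence passes ideal postselection. Since $t_{c_\mu}+1$ rows are flipped, every column in $\mathcal{I}_{\mathcal{SC}_\mu|E_{\mathscr{G}_X}}$ is misdecoded, and the residual on an output block in the misdecode support is exactly $E$, of weight $>t$. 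Done, at cost $t_{c_\mu}+1$ failures.

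By contrast, your route tries to build a high-weight residual out of the \emph{misdecode displacement} induced by low-weight failures, and you correctly sense the tension: with $t_{c_\mu}+1$ single-qubit-type effective errors you will only flip the syndrome columns of a low-weight functional, and the misdecode codeword $c$ transports that \emph{same} functional to the fresh block $j^\ast$, so $Q^{(j^\ast)}$ is again low-weight. There is no mechanism in your construction that turns a low-weight functional into a high-weight one; the ``simultaneous realizability'' you hope for does not occur without allowing the per-block effective error itself to be high-weight. Once you allow that (via preparation failures), the linear-algebraic machinery about $\Delta_{j}$ and coset indicators becomes unnecessary: identical errors on all affected blocks give a good pattern, and Lemma~\ref{lemma:goodpass} handles validity for free.
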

\begin{proof}
Without loss of generality, let us assume this round of distillation is to remove $X$ errors. Suppose the failure set $\mathscr{G}$ occurs in the preparation stage and the distillation circuit is perfect. Assume the support of $\mathscr{G}_X$ is over $t_{c_\mu}+1$ blocks. Denote this set of $t_{c_\mu}+1$ blocks by $\mathcal{B}_{\mathscr{G}_X}$.
Suppose that $\mathcal{I}\left(\text{Q}^{(j)}_{E_{\mathscr{G}_X}}\right)$ is the same for $j\in \mathcal{B}_{{\mathscr{G}_X}}$:
\beq
\mathcal{I}\left(\text{Q}^{(j)}_{E_{\mathscr{G}_X}}\right) \triangleq \mathcal{I}_{\text{Q}\,|\,E_{\mathscr{G}_X}},\quad\quad \quad j\in \mathcal{B}_{{\mathscr{G}_X}}.
\eeq
This means that each block in $\mathcal{B}_{\mathscr{G}_X}$ has exactly the same effective $X$ error $E^{(j)}_{\mathscr{G}_X}$ before distillation. Obviously, $\text{Q}^{(j)}_{E_{\mathscr{G}_X}}$ for $j\in \mathcal{B}_{\mathscr{G}_X}$ shares the same index set, denoted by $\mathcal{I}_{\text{Q}\,|\,E_{\mathscr{G}_X}}$. Thus, $\mathcal{I}\left(\mathcal{SC}_\mu^{(j)}|_{E_{\mathscr{G}_X}}\right)$ is also the same for all $j\in \mathcal{B}_{\mathscr{G}_X}$:
\beq
\mathcal{I}\left(\mathcal{SC}_\mu^{(j)}|_{E_{\mathscr{G}_X}}\right)\triangleq \mathcal{I}_{\mathcal{SC}_\mu\,|\,E_{\mathscr{G}_X}}, \quad\quad \quad j\in \mathcal{B}_{\mathscr{G}_X}.
\eeq
Obviously, $[\textsf{sc}_\mu]_{j,i}=1$ for $\forall i\in \mathcal{I}_{\mathcal{SC}_\mu\,|\, E_{\mathscr{G}_X}}$, $\forall j\in \mathcal{B}_{\mathscr{G}_X}$, otherwise $[\textsf{sc}_\mu]_{j,i}=0$. This syndrome pattern is good, and according to Lemma~\ref{lemma:goodpass} the estimated complete syndrome set will be valid for all blocks. Thus, any output block will pass ideal postselection. On the other hand, since $|\mathcal{B}_{\mathscr{G}_X}|=t_{c_\mu}+1>t_{c_\mu}$, $[\widetilde{\textsf{sc}}_\mu]_{:,i}\neq [\textsf{sc}_\mu]_{:,i}$ for $i\in\mathcal{I}_{\mathcal{SC}_\mu\,|\,E_{\mathscr{G}_X}}$. But one still has $[\widetilde{\textsf{sc}}_\mu]_{:,i_1}=[\widetilde{\textsf{sc}}_\mu]_{:,i_2}$ for any $i_1,i_2\in\mathcal{I}_{\mathcal{SC}_\mu\,|\,E_{\mathscr{G}_X}}$. Therefore, for any specific output block $r_{c_\mu}+1\leq j\leq n_{c_\mu}$ either  $[\widetilde{\textsf{sc}}_\mu]_{j,i}=[\textsf{sc}_\mu]_{j,i}, \forall i\in\mathcal{I}_{\mathcal{SC}_\mu|E_{\mathscr{G}_X}}$ or $[\widetilde{\textsf{sc}}_\mu]_{j,i}\neq[\textsf{sc}_\mu]_{j,i}, \forall i\in\mathcal{I}_{\mathcal{SC}_\mu|E_{\mathscr{G}_X}}$. In the worst case, $[\widetilde{\textsf{sc}}_\mu]_{j,i}\neq[\textsf{sc}_\mu]_{j,i}, \forall i\in\mathcal{I}_{\mathcal{SC}_\mu|E_{\mathscr{G}_X}}$, and   $X$ errors are left on qubits in $\mathcal{I}_{\text{Q}\,|\,E_{\mathscr{G}_X}}$ after quantum error correction. Note that the failure set $\mathscr{G}$ occurs with probability $O(p^{|\mathscr{G}|})\leq O(p^{t_{c_\mu}+1})$ and $\left|\mathcal{I}_{\text{Q}\,|\,E_{\mathscr{G}_X}}\right|$ can be arbitrary. So there always exists a $\mathscr{G}$ that leaves an $X$ error of weight $w=\left|\mathcal{I}_{\text{Q}\,|\,E_{\mathscr{G}_X}}\right|>t$  on output block $j$ with probability $O(p^{t_{c_\mu}+1})$, after ideal postselection and quantum error correction.
\end{proof}

Complementary to Lemma~\ref{lemma:goodpass}, one may infer that the pattern of a complete syndrome array is close to a good one if any of the output blocks passes ideal postselection. For simplicity, we define the concept of $\epsilon$-sparse for a quantum code $\mathcal{Q}$:
\begin{mydef} \label{def:epsilon-sparse}
Consider a  quantum code $\mathcal{Q}$.
Let $\mathbb{S}_\cap$ ($\mathbb{S}_{\slashed{\cap}}$) be the set of  complete syndrome pairs $(\textsf{sc1}_\mu, \textsf{sc2}_\mu)$, with $\textsf{sc1}_\mu\neq \textsf{sc2}_\mu$, such that they correspond to two errors $E_{1\mu},E_{2\mu}$ whose supports have an nonempty (empty) intersection. If $|\mathbb{S}_\cap|/|\mathbb{S}_{\slashed{\cap}}|<\epsilon$, $\cQ$ is called an \emph{$\epsilon$-sparse code}.
\end{mydef}
Quantum codes such as surface codes and quantum LDPC codes have low $\epsilon$ when the code length is large.
\begin{mylemma}\label{lemma:valid_to_good}
Suppose $\mathcal{Q}$ is $\epsilon$-sparse. If the estimated complete syndrome set is valid for a specific output block, then the corresponding syndrome array for all input blocks is, with probability {at least $1-O(\epsilon)$},  either a good syndrome pattern or close to a good syndrome pattern.
\end{mylemma}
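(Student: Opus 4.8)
The plan is to condition on the event that the designated output block $j_0$ passes ideal postselection, unpack what that forces on the per-column decodings, and then split the underlying error configurations into those that obviously yield a good-or-nearly-good input syndrome array and a residual family whose relative weight is bounded by the $\epsilon$-sparseness parameter.

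First I would unwind the hypothesis. Since $[\widetilde{\textsf{sc}}_\mu]_{j_0,:}$ is valid, it is the genuine complete syndrome of some Pauli error $\widetilde{E}^{(j_0)}$; write $\mathcal{I}_0$ for the qubit support of $\widetilde{E}^{(j_0)}$ and $\mathcal{J}_0 = \mathcal{I}(\mathcal{SC}_\mu^{(j_0)}|_{\widetilde{E}^{(j_0)}})$ for the stabilizer columns it lights up, so $[\widetilde{\textsf{sc}}_\mu]_{j_0,i}=1$ exactly for $i\in\mathcal{J}_0$. Next I would use two structural facts: (i) the $n_{c_\mu}$-length column decoder acts independently on each true column $[\textsf{sc}_\mu]_{:,i}$, reproduces it exactly whenever its weight is at most $t_{c_\mu}$, and is deterministic, so equal true columns are sent to equal estimated columns; and (ii) by the decomposition of $\mathscr{G}$ into a preparation part (each failure touching a single row) and a distillation part which, by the observations in Sec.~\ref{sec:nogo}, is benign and thus contributes \emph{identical} nonzero entries to all rows it affects, i.e.\ a good sub-pattern whose nonzero columns are identical by Lemma~\ref{lemma:good_pattern}. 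Together (i) and (ii) imply that the \emph{only} mechanism producing two input blocks $j_1,j_2$ with genuinely different syndrome rows is a pair of preparation errors $E^{(j_1)},E^{(j_2)}$ with different supports.

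The core of the proof is the dichotomy: either the true input syndrome array is good or close to a good pattern, or there exist two input blocks whose preparation errors have \emph{overlapping} qubit supports and distinct complete syndromes. To see it, suppose the latter fails, so that distinct input blocks carry preparation errors with pairwise disjoint supports. Then the nonzero columns of the true array group according to which disjoint error region a stabilizer detects; since equal columns decode identically, the estimated row $[\widetilde{\textsf{sc}}_\mu]_{j_0,:}$ is obtained by XOR-ing the decoded images of the column-uniform contributions of those regions touching at least one stabilizer that also anticommutes with $\widetilde{E}^{(j_0)}$. Validity of block $j_0$ then forces those superposed contributions to be jointly consistent, which, for disjoint regions, can hold only if a single region is effectively visible to $j_0$; propagating this constraint across the other output blocks and folding the benign distillation sub-pattern back in leaves an array all of whose nonzero columns agree up to a bounded benign deviation, i.e.\ good or close to good, contradicting the assumption. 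Finally I would count: by Def.~\ref{def:epsilon-sparse}, the distinct-syndrome pairs arising from overlapping-support errors form at most an $\epsilon$ fraction of all distinct-syndrome pairs, whereas every good or nearly-good configuration passes ideal postselection (Lemma~\ref{lemma:goodpass}); hence among the configurations with $j_0$ valid the ones with an array far from good carry total probability $O(\epsilon)$, which is exactly the stated bound.

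I expect the main obstacle to be the dichotomy step. One has to fix the notion of ``close to a good syndrome pattern'' sharply enough (deviations confined to a bounded benign sub-array) that the disjoint-support case provably lands inside it, and one has to manage the interplay between the independent preparation errors and the benign distillation-CNOT errors when reconstructing the decoded array --- keeping, throughout, the big-$O$ constants under control so that the probability mass of the overlapping-support family is genuinely comparable to its cardinality, which is precisely where the remark after Def.~\ref{def:uncorrelatation} on bounded coefficients is invoked.
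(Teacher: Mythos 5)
Your overall strategy --- condition on validity of block $j_0$, separate the error configurations into a ``good or nearly good'' family and a residual family, and charge the residual to the $\epsilon$-sparseness --- is in the right spirit, and your observations (i) and (ii) about the column decoder and the benign distillation failures match the paper's setup. But the two load-bearing steps are not the paper's, and I do not think either one holds as you state it.

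First, the dichotomy. You claim that if the true array is far from good, then some two \emph{input blocks} carry preparation errors with overlapping qubit supports and distinct syndromes, and you argue the contrapositive by asserting that pairwise-disjoint supports plus validity of $j_0$ force ``a single region effectively visible to $j_0$.'' This is exactly the case the paper cannot dismiss combinatorially: the pattern $\mathscr{P}$ of Fig.~\ref{fig:pattern_decompose} is built from blocks whose errors need not overlap at all, yet it is not good or approximately good and still yields a valid $[\widetilde{\textsf{sc}}_\mu]_{j_0,:}$ (because several columns are wrongly decoded in a way that happens to be consistent on row $j_0$). The paper's point is not that such configurations are impossible, but that they are much \emph{less probable} than a good configuration producing the same estimate. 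To get that, it explicitly constructs a good pattern $\mathscr{P}_{\text{Good}}$ (zero out every column with $[\widetilde{\textsf{sc}}_\mu]_{j_0,i}=0$, and copy one fixed nonzero column into every column with $[\widetilde{\textsf{sc}}_\mu]_{j_0,i}=1$), writes $\mathscr{P}=\mathscr{P}_{\text{Good}}\cdot\mathscr{P}'$, and shows $\Pr(\mathscr{P})\leq \Pr(\mathscr{P}_{\text{Good}})\Pr(\mathscr{P}')\ll\Pr(\mathscr{P}_{\text{Good}})$. Your proposal contains no analogue of this relative-probability comparison, and without it the conclusion does not follow.

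Second, the use of $\epsilon$-sparseness. In the paper, Def.~\ref{def:epsilon-sparse} is invoked \emph{within a single block}: it bounds the probability that the effective error of $\mathscr{G}_{\text{Good}}$ and the effective error of $\mathscr{G}'$ on the same block $j'$ have intersecting supports, which is precisely what is needed for the two failure sets to be independent so that the probability factorizes. You instead apply it \emph{across blocks} (overlap of supports of preparation errors on $j_1$ and $j_2$), and then convert the cardinality ratio $|\mathbb{S}_\cap|/|\mathbb{S}_{\slashed{\cap}}|<\epsilon$ directly into a probability bound on the bad family. Probability is far from uniform over these configurations (it is governed by the number of elementary failures needed to produce each one), so the cardinality-to-probability transfer is a genuine gap --- one you flag yourself as ``the main obstacle'' but do not close. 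The fix is to adopt the paper's route: the sparseness buys independence of the two decomposed failure sets with probability $1-O(\epsilon)$, and the $O(\epsilon)$ exceptional event is exactly the slack in the lemma's conclusion.
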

\begin{proof}
Without loss of generality, assume output block $j$ has  a valid complete syndrome set and its $X$ errors are removed. The first observation is that if $[\widetilde{\textsf{sc}}_\mu]_{j,i}=[\textsf{sc}_\mu]_{j,i}$, then it is likely the whole column of syndrome bits  $[\widetilde{\textsf{sc}}_\mu]_{:,i}$  was correctly decoded by $\cC_{c_\mu}$ $([\widetilde{\textsf{sc}}_\mu]_{:,i}=[\textsf{sc}_\mu]_{:,i})$, and thus the number of 1s in $[\textsf{sc}_\mu]_{j,i}$ is fewer than $t_{c_\mu}$.
Similarly, if $[\widetilde{\textsf{sc}}_\mu]_{j,i}\neq[\textsf{sc}_\mu]_{j,i}$, then the column of syndrome bits $[\widetilde{\textsf{sc}}_\mu]_{:,i}$ is incorrectly estimated.
(Note that it is possible that $[\textsf{sc}_\mu]_{:,i}\neq [\widetilde{\textsf{sc}}_\mu]_{:,i}$, but the estimate on the $j$th block is still correct: $[\widetilde{\textsf{sc}}_\mu]_{j,i}=[\textsf{sc}_\mu]_{j,i}$. See Fig.~\ref{fig:pattern_decompose} for an example. We will address this case later.)

Since $[\widetilde{\textsf{sc}}_\mu]_{j,:}$ is valid, Eq.~(\ref{eq:valid_syndrome_cond}) must be satisfied, that is, for $\text{SC}_{\mu,i_3}=\text{SC}_{\mu,i_1}\text{SC}_{\mu,i_2}\in \<S_\mu\>$,
$[\widetilde{\textsf{sc}}_\mu]_{j,i_1}
+[\widetilde{\textsf{sc}}_\mu]_{j,i_2}
=[\widetilde{\textsf{sc}}_\mu]_{j,i_3}$ must hold.
If $[\widetilde{\textsf{sc}}_\mu]_{:,i_1}\neq [\textsf{sc}_\mu]_{:,i_1}$ and $[\widetilde{\textsf{sc}}_\mu]_{:,i_2}\neq [\textsf{sc}_\mu]_{:,i_2}$, then $[\widetilde{\textsf{sc}}_\mu]_{j,i_1}$
and $[\widetilde{\textsf{sc}}_\mu]_{j,i_2}$ are likely to be incorrectly estimated. It then is necessary that $[\widetilde{\textsf{sc}}_\mu]_{j,i_3}=[\textsf{sc}_\mu]_{j,i_3}$ . Consequently, it is likely that $[\widetilde{\textsf{sc}}_\mu]_{:,i_3} = [\textsf{sc}_\mu]_{:,i_3}$. Similarly,
if $[\widetilde{\textsf{sc}}_\mu]_{:,i_1}\neq [\textsf{sc}_\mu]_{:,i_1}$ but $[\widetilde{\textsf{sc}}_\mu]_{:,i_2}= [\textsf{sc}_\mu]_{:,i_2}$, then it is likely $[\widetilde{\textsf{sc}}_\mu]_{:,i_3} \neq [\textsf{sc}_\mu]_{:,i_3}$.

\begin{figure}[!htp]
\centering\includegraphics[width=50mm]{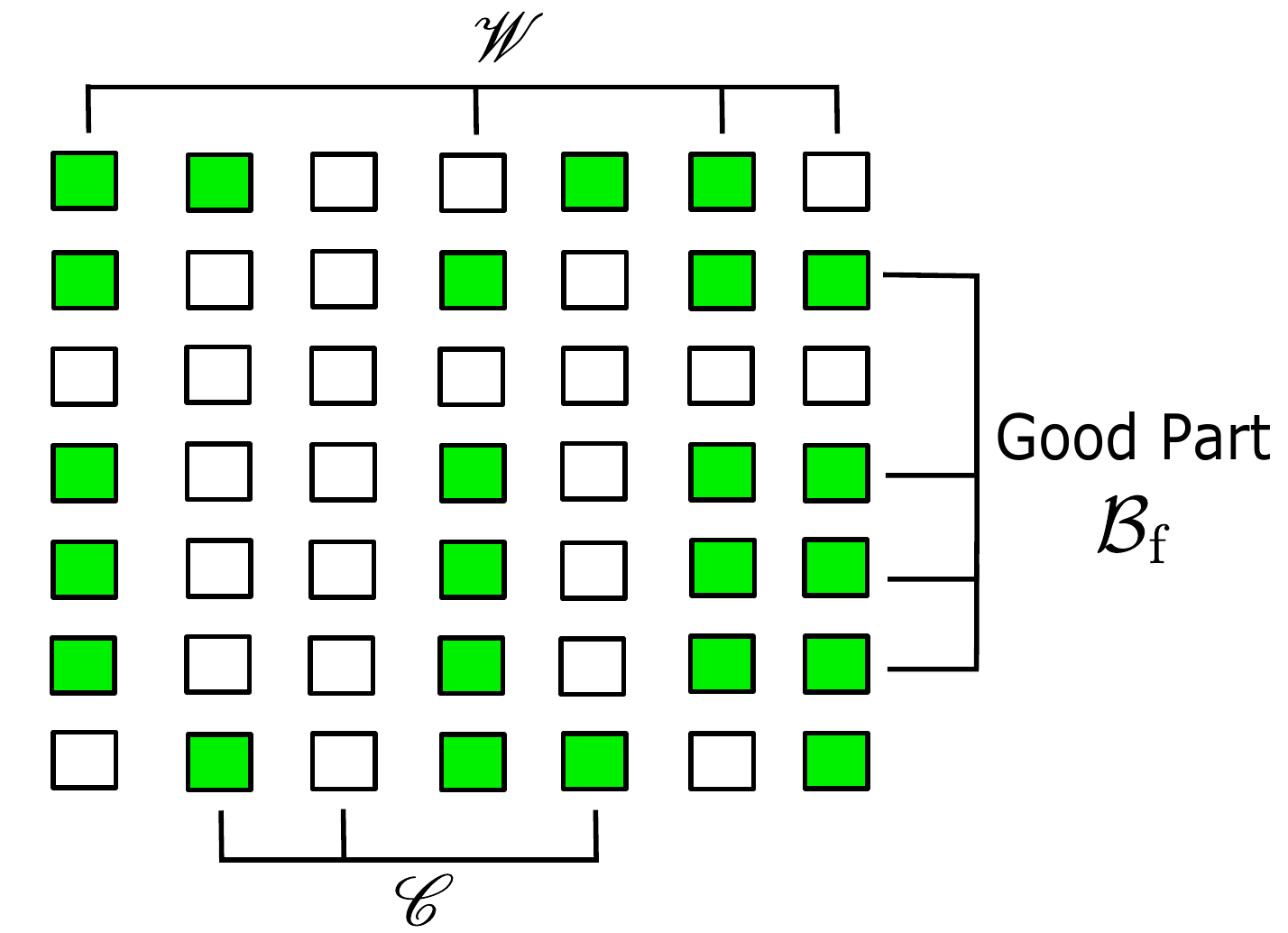}
\caption{\label{fig:good_pattern_approx}(Color online) An illustration of an approximately good syndrome pattern for syndrome array $\textsf{sc}_\mu$. Here, we assume that $\cC_{c_\mu}$ can correct two errors and $|S_\mu|=3$. The green boxes represent syndrome bit 1s  and the white ones represent syndrome bit 0s. The good part of this pattern, $\mathcal{B}_{\text{f}}$, represents the maximum subset of rows with identical syndromes.}
\end{figure}

\begin{figure*}[!htp]
\centering\includegraphics[width=130mm]{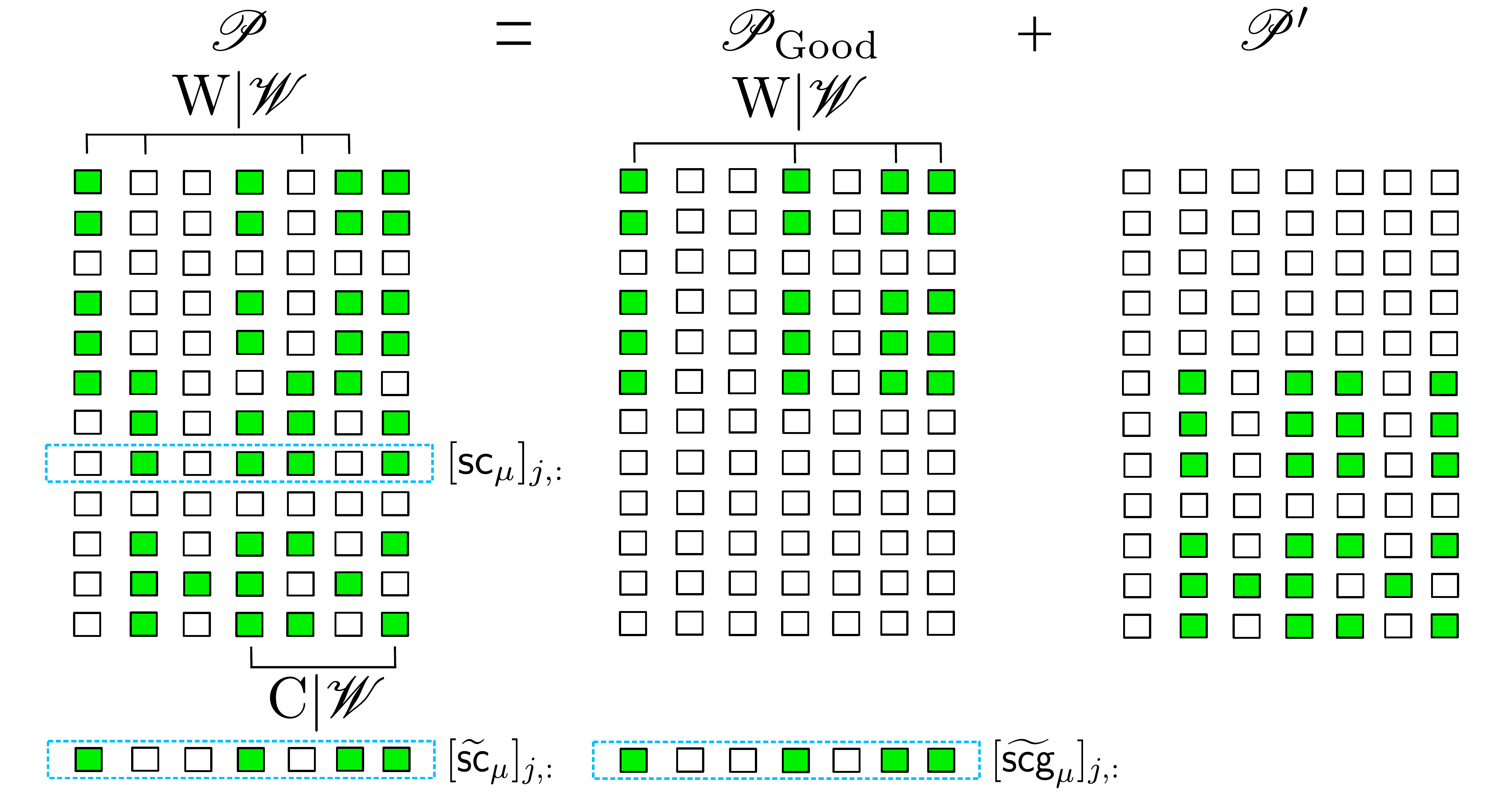}
\caption{\label{fig:pattern_decompose}(Color online) An illustration of a pattern $\mathscr{P}$ that is not close to good pattern, but has valid estimated complete syndrome set on output block $j$ after classical decoding.
Here, we assume that $\cC_{c_\mu}$ can correct two errors, and $|S_\mu|=3$. The green boxes represent 1-value syndromes and white ones represent 0-value syndromes.
The complete syndromes of output block $j$ are circled by a dashed line and its estimated syndrome $[\widetilde{\textsf{sc}}_\mu]_{j,:}$ is shown at the bottom of the left figure. $\mathscr{P}$ can always be decomposed into a good syndrome pattern $\mathscr{P}_{\text{Good}}$ and another pattern $\mathscr{P}'$, such that $\mathscr{P}_{\text{Good}}$ generates exactly the same $[\widetilde{\textsf{sc}}_\mu]_{j,:}$ after classical decoding. In this example, $\mathscr{P}_{\text{Good}}$ is constructed as follows: 1. each column $[\textsf{scg}]_{:,i}$ such that $[\widetilde{\textsf{sc}}]_{j,i}=0$ is set to be an all-zero syndrome column; 2. each column $[\textsf{scg}]_{:,i}$ such that $[\widetilde{\textsf{sc}}]_{j,i}=1$ is set to be the first column of $\mathscr{P}$ (which gives $[\widetilde{\textsf{sc}}]_{j,i}=1$ after classical decoding.).}.
\end{figure*}

We can generalize this observation to more than three columns of syndromes. First, denote the set of   columns such that $[\widetilde{\textsf{sc}}_\mu]_{:,i} \neq [\textsf{sc}_\mu]_{:,i}$ as $\mathscr{W}$. Take any $2x$ columns $[\textsf{sc}_\mu]_{:.i_1},\dots,[\textsf{sc}_\mu]_{:.i_{2x}}\in \mathscr{W}$, and  suppose $\text{SC}_{\mu,i_l}=\text{SC}_{\mu,i_1}\times \cdots \times \text{SC}_{\mu,i_{2x}}\in \<S_\mu\>$.
Since
\begin{equation*}
[\widetilde{\textsf{sc}}_\mu]_{j,i_1}
+,\dots,+[\widetilde{\textsf{sc}}_\mu]_{j,i_{2x}}
=[\widetilde{\textsf{sc}}_\mu]_{j,i_l},
\end{equation*}
$[\widetilde{\textsf{sc}}_\mu]_{j,i_l}=[\textsf{sc}_\mu]_{j,i_l}$ from the validity of output block $j$. Thus, the column of syndrome $[\textsf{sc}_\mu]_{:,i_l}$ is likely equal to
$[\widetilde{\textsf{sc}}_\mu]_{:,i_l}$ and contains less than $t_c$ $1$s. Consequently, the columns in $\mathscr{W}$ have a large amount of overlap in syndrome bits that are 1s.
Denote the set of columns such that $[\widetilde{\textsf{sc}}_\mu]_{:,i} = [\textsf{sc}_\mu]_{:,i}$ as $\mathscr{C}$.
Consider  any $[\textsf{sc}_\mu]_{:,i}\in \mathscr{W}$ and  $[\textsf{sc}_\mu]_{:,i^\prime}\in \mathscr{C}$.
Suppose $\text{SC}_i \text{SC}_{i'} = \text{SC}_{i''} \in \<S_\mu\>$. Since
$[\widetilde{\textsf{sc}}_\mu]_{j,i}
+[\widetilde{\textsf{sc}}_\mu]_{j,i^\prime}
=[\widetilde{\textsf{sc}}_\mu]_{j,i^{\prime\prime}}$,
it is likely that $[\textsf{sc}_\mu]_{:,i^{\prime\prime}}\in \mathscr{W}$. Since we know that $[\textsf{sc}_\mu]_{:,i}$ and $[\textsf{sc}_\mu]_{:,i^{\prime\prime}}$ have large overlaps, it also implies that for any $[\textsf{sc}_\mu]_{:,i^\prime}\in \mathscr{C}$, the number of 1s it contains is likely to be small, which is consistent with our previous observation.
Such syndrome array parttern is called an \emph{approximately good} syndrome pattern. Figure~\ref{fig:good_pattern_approx}  illustrates such a pattern. The overlaps  of different columns in $\mathscr{W}$ are called the good part of the pattern. So the syndrome pattern is likely to be close to a good pattern when any specific output block has a valid estimated complete syndrome set.

Now we consider those syndrome patterns whose columns in $\mathscr{W}$ do not   have a large amount of overlap, but still generate a valid estimated complete syndrome set for output block $j$. In other words, the complete syndrome pattern is not good or approximately good. An example of such a pattern is shown  in Fig.~\ref{fig:pattern_decompose}. One can further separate the column set of $\mathscr{W}$ into two types: $\textrm{W}|\mathscr{W}$ and $\textrm{C}|\mathscr{W}$. For $[\textsf{sc}_\mu]_{:,i}\in\textrm{W}|\mathscr{W}$, one has $[\widetilde{\textsf{sc}}_\mu]_{:,i}\neq[\textsf{sc}_\mu]_{:,i}$ and  $[\widetilde{\textsf{sc}}_\mu]_{j,i}\neq[\textsf{sc}_\mu]_{j,i}$; while for $[\textsf{sc}_\mu]_{:,i}\in \textrm{C}|\mathscr{W}$, one has $[\widetilde{\textsf{sc}}_\mu]_{:,i}\neq[\textsf{sc}_\mu]_{:,i}$ but $[\widetilde{\textsf{sc}}_\mu]_{j,i}=[\textsf{sc}_\mu]_{j,i}$.

It is always possible to find a good pattern $\mathscr{P}_{\text{Good}}$ whose complete syndrome array $\textsf{scg}_\mu$ generates exactly the same estimated complete syndromes for block $j$ as $\textsf{sc}_\mu$ $([\widetilde{\textsf{scg}}_\mu]_{j,:}=[\widetilde{\textsf{sc}}_\mu]_{j,:})$ after classical decoding. This $\mathscr{P}_{\text{Good}}$ can be constructed as follows:
If $[\widetilde{\textsf{sc}}_\mu]_{j,i}=0$, then  $[\textsf{scg}_\mu]_{:,i}$ is set to be an  all-zero column; if $[\widetilde{\textsf{sc}}_\mu]_{j,i}=1$, then
choose an arbitrary column $[\textsf{sc}_\mu]_{:,i'}$ in $\mathscr{P}$ such that $[\widetilde{\textsf{sc}}_\mu]_{j,i'}=1$, and then
set  $[\textsf{scg}_\mu]_{:,i}=[\textsf{sc}_\mu]_{:,i'}$.
We can then define another pattern $\mathscr{P}'$ as the difference between $\mathscr{P}$ and $\mathscr{P}_{\text{Good}}$, whose complete syndrome set array is denoted by $\textsf{sc}_\mu'$. From this construction, since $[\widetilde{\textsf{sc}}_\mu]_{j,:}$ is valid, $\textsf{scg}_\mu$ (and also $\textsf{sc}_\mu'$) is legal, in the sense that its syndromes are compatible in each block.
An example of this construction is shown in Fig.~\ref{fig:pattern_decompose}. Note that both $\mathscr{P}$ and $\mathscr{P}_{\text{Good}}$ give the same $[\widetilde{\textsf{sc}}_\mu]_{j,:}$ by the same classical decoding procedure.

Now, we show that the probability of pattern $\mathscr{P}$ is much smaller than that of $\mathscr{P}_{\text{Good}}$.
Suppose the failure sets $\mathscr{G}$, $\mathscr{G}_{\text{Good}}$ and $\mathscr{G}'$ induce  $\mathscr{P}$, $\mathscr{P}_{\text{Good}}$ and $\mathscr{P}'$, respectively. For each $j'$ such that $[\textsf{sc}_\mu]_{j',:}\neq 0$, we have $[\textsf{scg}_\mu]_{j',:}\neq [\textsf{sc}'_\mu]_{j',:}$, if $[\textsf{sc}_\mu]_{j,:}$ is not all zero (see Fig.~\ref{fig:pattern_decompose}, for example).
Then, for each $j'$, $\Pr\left\{\text{Q}^{(j')}_{E_{\mathscr{G}_{X \text{Good }}}}\bigcap
\text{Q}^{(j')}_{E_{\mathscr{G}'_X}} \neq \emptyset\right\}< \epsilon$ by assumption. (This probability should be much smaller, since the nontrivial contribution of $|\mathbb{S}_\cap|/|\mathbb{S}_{\slashed{\cap}}|$ comes  generally from high weight errors in Def.~\ref{def:epsilon-sparse}, which are less likely to happen.)
This suggests that for every time step $\step$ of the procedure, $\Pr\left\{\text{Q}^{(j)}_{\mathscr{GM}_{X\text{Good}}(\step)}\bigcap \text{Q}^{(j)}_{\mathscr{GM}_{X}'(\step)}\neq \emptyset \right\}<  \epsilon$.
Here, $\mathscr{GM}_{X \text{Good }}$ and $\mathscr{GM}_{X}'$ are the failure sets which have the highest probability among those failure sets $\mathscr{G}_{X \text{Good }}$ and $\mathscr{G}_{X}'$ causing $\mathscr{P}_{\text{Good}}$  and $\mathscr{P}'$. As a result, $\mathscr{GM}_{X\text{Good}}$ and $\mathscr{GM}'_X$ are independent with probability at least $1-O(\epsilon)$. We then have
\beq
\max_{\mathscr{G}}\Pr(\mathscr{P})\leq \max_{\mathscr{G}_{\text{Good}},\mathscr{G}'}\Pr(\mathscr{P}_{\text{Good}})\Pr(\mathscr{P}'),
\eeq
which implies that there exists at least one decomposition of $\mathscr{P}$ such that $\Pr(\mathscr{P})\ll\Pr(\mathscr{P}_{\text{Good}})$ for all $\mathscr{G}$ causing $\mathscr{P}$.
Thus, one can always find another failure set that generates a good syndrome pattern which causes the same valid complete estimated syndrome set $[\widetilde{\textsf{sc}}_\mu]_{j,:}$ with a much higher probability.

Overall, the corresponding syndrome array for all input blocks is most likely to be either good syndrome pattern or close to good pattern if $[\widetilde{\textsf{sc}}_\mu]_{j,:}$ is valid.

\end{proof}

Theorem~\ref{lemma:postselection_nogo} states the limitation on suppression of the high weight errors by a postselection protocol alone, and suggests that $t_{c_\mu}$ must be no less than $t-1$ to get qualified output ancillas. Moreover, we can prove that when this is true, an output block that passes ideal postselection is qualified, for sufficiently large block codes:
\begin{thm} \label{thm:3}
Suppose $\mathcal{Q}$ is an $\epsilon$-sparse quantum code. Then if the estimated complete syndrome set for one output block is valid, and $t_{c_\mu}\ge t-1$, this output block is qualified with probability at least $1-O(\epsilon)$.
\end{thm}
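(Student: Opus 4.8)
Without loss of generality I would take this round of distillation to be the one that removes $X$ errors, and condition throughout on the event $\mathcal{V}$ that the estimated complete syndrome set $[\widetilde{\textsf{sc}}_\mu]_{j,:}$ of the output block $j$ in question is valid. The first step is to invoke Lemma~\ref{lemma:valid_to_good}: since $\mathcal{Q}$ is $\epsilon$-sparse, conditioned on $\mathcal{V}$ the underlying complete syndrome array $\textsf{sc}_\mu$ of the input blocks is, with probability at least $1-O(\epsilon)$, either a good syndrome pattern or an approximately good one. The residual $O(\epsilon)$-probability configurations are absorbed into the claimed bound, so it suffices to show that block $j$ carries only uncorrelated errors (in the sense of Def.~\ref{def:uncorrelatation}) in these two cases.

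\textbf{Good pattern.} Let $\mathcal{B}_{\text{f}}$ be the set of blocks carrying the common nonzero syndrome. Using the correspondence between benign failure sets and good patterns (the lemma above together with the analysis of Sec.~\ref{sec:nogo}), the most probable failure set $\mathscr{G}$ realizing $\textsf{sc}_\mu$ satisfies $|\mathscr{G}|\ge |\mathcal{B}_{\text{f}}|$ whenever the common effective error has large weight, since aligning the effective errors of several independently prepared blocks needs one failure per block, whereas a single faulty distillation CNOT can align many blocks but only on a small common support. If $|\mathcal{B}_{\text{f}}|\le t_{c_\mu}$, the classical code $\cC_{c_\mu}$ decodes every syndrome column correctly, so $[\widetilde{\textsf{sc}}_\mu]_{j,:}=[\textsf{sc}_\mu]_{j,:}$ and the logical-syndrome-aware quantum error correction of Sec.~\ref{sec:noiseless_prep} acts with the true syndrome; the residual on block $j$ then commutes with every stabilizer and fixes the encoded state, i.e.\ it is a stabilizer of $\mathcal{Q}$, it vanishes whenever the effective error on block $j$ has weight $\le t$, and in all cases its weight is bounded by $|\mathscr{G}|$, so it is uncorrelated. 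If instead $|\mathcal{B}_{\text{f}}|>t_{c_\mu}$, then $t_{c_\mu}\ge t-1$ forces $|\mathcal{B}_{\text{f}}|\ge t$ and hence $\Pr(\mathscr{G})=O(p^{|\mathscr{G}|})=O(p^{t})$; so even the worst-case mis-decoding of $\cC_{c_\mu}$, which by the mechanism of Theorem~\ref{lemma:postselection_nogo} can leave an $X$ error of arbitrary weight $w>t$ on block $j$, does so with probability only $O(p^{t})$, which is admissible for weight $w>t$, while residuals of weight $w\le t$ again demand at least $w$ independent failures.

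\textbf{Approximately good pattern.} Here I would reuse the decomposition constructed in the proof of Lemma~\ref{lemma:valid_to_good}: write the realized pattern as $\mathscr{P}=\mathscr{P}_{\text{Good}}\oplus\mathscr{P}'$, where $\mathscr{P}_{\text{Good}}$ is a genuinely good pattern producing the same estimate $[\widetilde{\textsf{sc}}_\mu]_{j,:}$ under $\cC_{c_\mu}$-decoding and $\mathscr{P}'$ is a legal residual pattern driven by $|\mathscr{G}'|$ further failures. The good part is controlled exactly as above, and the perturbation $\mathscr{P}'$ contributes at most $O(p^{|\mathscr{G}'|})$ to any error of weight $\le|\mathscr{G}'|$ on block $j$, hence is uncorrelated on its own. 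Because $\mathcal{Q}$ is $\epsilon$-sparse, with probability at least $1-O(\epsilon)$ the support of the effective error coming from $\mathscr{G}_{\text{Good}}$ and that coming from $\mathscr{G}'$ are disjoint on every block, so the two cannot conspire to align more than $t_{c_\mu}$ blocks on a common support and thereby fool $\cC_{c_\mu}$ into a large, high-probability residual. Combining the pieces, block $j$ is free of correlated errors except on an event of probability $O(\epsilon)$, which together with the $O(\epsilon)$ from Lemma~\ref{lemma:valid_to_good} yields the stated $1-O(\epsilon)$.

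\textbf{Main obstacle.} The routine parts are the good-pattern sub-cases, which follow from benignness plus $t_{c_\mu}\ge t-1$. The delicate step is the approximately-good case: one must rule out that a small perturbation $\mathscr{P}'$ stacked on a good pattern can, through the mis-decoding of $\cC_{c_\mu}$, be amplified into a high-weight residual that still meets the validity constraint yet occurs with probability exceeding its order. This is exactly where both hypotheses enter — $\epsilon$-sparseness to decouple the supports of the good part and the perturbation, so that aligning enough blocks to defeat $\cC_{c_\mu}$ genuinely costs more than $t_{c_\mu}$ independent failures, and $t_{c_\mu}\ge t-1$ to turn that cost into order $\ge t$ — and it is the estimate I expect to require the most care.
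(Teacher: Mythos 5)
Your skeleton matches the paper's: condition on validity, invoke Lemma~\ref{lemma:valid_to_good} to reduce to good or approximately good patterns at cost $O(\epsilon)$, and then argue qualification in each case. For the good pattern your treatment is essentially the paper's, though stated less carefully: in the sub-case $|\mathcal{B}_{\text{f}}|>t_{c_\mu}$ you assert $\Pr(\mathscr{G})=O(p^{|\mathscr{G}|})=O(p^{t})$, which presumes $|\mathscr{G}|\ge|\mathcal{B}_{\text{f}}|$; that fails when a few faulty distillation CNOTs align many blocks (a single CNOT failure can flip rows in up to $M$ blocks). The paper avoids this by explicitly splitting $\mathscr{G}$ into a benign distillation-circuit part $\mathscr{G}_X$ and a remainder $\mathscr{R}_X$, and treating $\mathscr{G}_X=\emptyset$ (order $|\mathscr{R}_X|>t_{c_\mu}\ge t-1$, so arbitrary weight is admissible) separately from $\mathscr{G}_X\neq\emptyset$ (residual weight bounded by $|\mathscr{G}_X|$). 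Your dichotomy sentence shows you see this, but the sub-case as written conflates the two mechanisms.

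The genuine gap is in the approximately good case, which you yourself flag as the delicate step. The paper does \emph{not} reuse the $\mathscr{P}_{\text{Good}}\oplus\mathscr{P}'$ decomposition there (that device belongs to the proof of Lemma~\ref{lemma:valid_to_good}, where it shows non-approximately-good patterns are improbable). Instead it again splits the failures into a benign $\mathscr{G}_X$ supported on the good part $\mathcal{B}_{\text{f}}$ and a remainder $\mathscr{R}_X$, and then uses the \emph{validity} of $[\widetilde{\textsf{sc}}_\mu]_{j,:}$ together with the structure of an approximately good pattern (columns outside $\mathcal{I}_{\mathcal{SC}_\mu\,|\,E_{\mathscr{G}_X}}$, and products $\text{SC}_{\mu,i_1}\text{SC}_{\mu,i_2}$ of columns inside it, lie in the correctly decoded set $\mathscr{C}$) to force the mis-decoding on $\mathcal{I}_{\mathcal{SC}_\mu\,|\,E_{\mathscr{G}_X}}$ to be all-or-nothing. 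That coherence is what bounds the worst-case residual weight by $|\mathscr{G}_X|$ at order $|\mathscr{G}_X|+|\mathscr{R}_X|$. Your proposal conditions on validity but never actually uses it in this case; the sparseness-based decoupling of the supports of $\mathscr{G}_{\text{Good}}$ and $\mathscr{G}'$ addresses a different worry and does not exclude a \emph{partially} wrong estimate on the benign support, which would feed the quantum decoder an inconsistent syndrome and could leave a high-weight residual at order as low as $2$ --- exactly the correlated error you must rule out. Supplying the product-column argument from validity closes this hole.
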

\begin{proof}
Consider a round of distillation to remove $X$ errors with output block $j$. Since $[\widetilde{\textsf{sc}}_\mu]_{j,:}$ can pass ideal postselection, it is valid. From Lemma~\ref{lemma:valid_to_good}, the syndrome pattern $\mathscr{P}$ is either good or approximately good with probability at least $1-O(\epsilon)$.
\begin{figure}[!htp]
\centering\includegraphics[width=70mm]{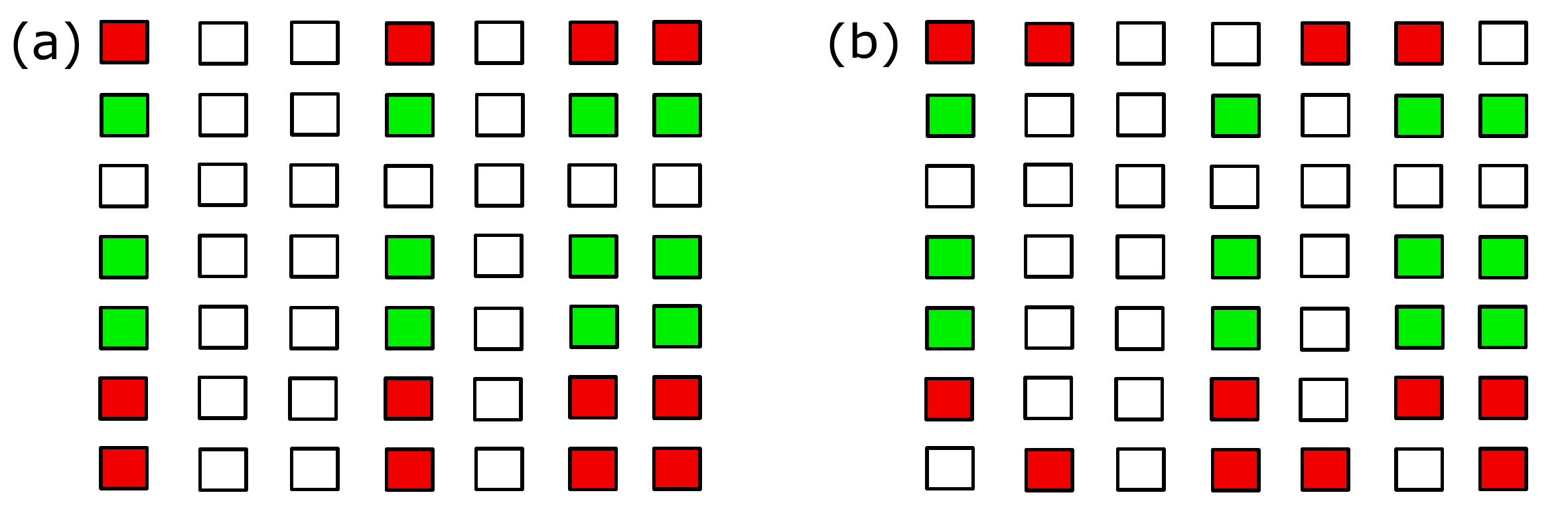}
\caption{\label{fig:approximate_good_decompose}(Color online) An illustration of an approximately good syndrome pattern for syndrome array $\textsf{sc}_\mu$ caused by $\mathscr{G}_{X}$ (green boxes) and $\mathscr{R}_X$ (red boxes) for good pattern (a) and approximately good pattern (b).}
\end{figure}

Consider first the case when $\mathscr{P}$ is good (see Fig.~\ref{fig:approximate_good_decompose}~(a) for an illustration). If the number of nonzero rows of $\textsf{sc}$, say $|\mathcal{B}_{\text{f}}|$, is less or equal to $t_c$, then  $[\widetilde{\textsf{sc}}_\mu]_{j,:}=[\textsf{sc}_\mu]_{j,:}$ after classical decoding and no $X$ error is left after quantum error correction.
If $|\mathcal{B}_{\text{f}}|>t_{c_\mu}$,  we assume $\mathscr{P}$ is caused by a combination of two failure sets:
1) a benign CNOT failure set $\mathscr{G}_X$ in the distillation circuit that flips certain rows of the syndrome array  to 1 (which are shown as the green squares in Fig.~\ref{fig:approximate_good_decompose}~(a));
and 2) the remaining CNOT failures in the distillation circuit and the  failures from the preparation stage which together form a failure set $\mathscr{R}_X$ that flips the rest of the nonzero rows of the syndrome array (which are shown as the red squares in Fig.~\ref{fig:approximate_good_decompose}~(a)).
Denote by $\mathcal{B}_{\mathscr{G}_X}$ and $\mathcal{B}_{\mathscr{R}_X}$ the index sets of the rows of the syndrome array affected by $\mathscr{G}_X$ and $\mathscr{R}_X$, respectively.  Consider the following two cases:
\begin{enumerate}
  \item  $\mathscr{G}_X\neq \emptyset$. For all $j'\in \mathcal{B}_{\mathscr{G}_X}$, one has $\text{QE}^{(j')}_{\mathscr{G}_X}
      =\text{Q}^{(j')}_{E_{\mathscr{G}_X}}\neq \emptyset$, since the $\mathscr{G}_X$ is benign. Since the distillation circuit is transversal, for all $j'\in \mathcal{B}_{\mathscr{G}_X}$, one has
\beq
\mathcal{I}\left(\mathcal{SC}_\mu^{(j')}|_{E_{\mathscr{G}_X}}\right)
=\mathcal{I}_{\mathcal{SC}_\mu\,|\,E_{\mathscr{G}_X}}=\mathcal{I}_{\mathcal{SC}_\mu\,|\,E_{\mathscr{R}_X}},
\eeq
where the second equality is because the syndrome pattern is good. Then, for output block $j$,  either $[\widetilde{\textsf{sc}}_\mu]_{j,i}=[\textsf{sc}_\mu]_{j,i},\forall i\in\mathcal{I}_{\mathcal{SC}_\mu\,|\,E_{\mathscr{G}_X}}$,
or $[\widetilde{\textsf{sc}}_\mu]_{j,i}\neq [\widetilde{\textsf{sc}}_\mu]_{j,i},\forall i\in\mathcal{I}_{\mathcal{SC}_\mu\,|\,E_{\mathscr{G}_X}}$   after classical decoding.
In the worst case, $[\widetilde{\textsf{sc}}_\mu]_{j,i}\neq [\textsf{sc}_\mu]_{j,i}, \forall i\in\mathcal{I}_{\mathcal{SC}_\mu\,|\,E_{\mathscr{G}_X}}$, which will leave an $X$ error on output block $j$. The support is the same as $\mathcal{I}\left(\text{QE}^{(j')}_{\mathscr{G}_X}\right)$. This case occurs with probability $O(p^{|\mathscr{G}_X|+|\mathscr{R}_X|})$. Since $w=\left|\mathcal{I}\left(\text{QE}^{(j')}_{\mathscr{G}_X}\right)\right|\leq |\mathscr{G}_X|$, block $j$ is qualified.

\item  $\mathscr{G}_X=\emptyset$. Now  all the colored squares in Fig.~\ref{fig:approximate_good_decompose}~(a) will be red, and this occurs with probability $O(p^{|\mathscr{R}_X|})$. In this case, an $X$ error of arbitrary weight (depending on $\mathcal{I}_{\mathcal{SC}_\mu\,|\,E_{\mathscr{G}_X}}$) may remain in output block $j$. Since now $\left|\mathscr{R}_X\right|\geq\left|\mathcal{B}_{\mathscr{R}_X}\right|>t_{c_\mu}\geq t-1$, it is uncorrelated, and block $j$ is still qualified.
\end{enumerate}

Now we consider the case of an approximately good syndrome pattern $\mathscr{P}_A$ (see Fig.~\ref{fig:approximate_good_decompose}~(b) for an illustration).
The argument is similar here.
If the maximum number of nonzero rows in $[\textsf{sc}_{\mu}]_{:,i}$ is less then $t_{c_\mu}$, $[\widetilde{\textsf{sc}}_\mu]_{j,:}=[\textsf{sc}_\mu]_{j,:}$ after classical decoding, and it leaves no error in the output block after quantum error correction. If this is not the case, we can also decompose the failures that lead to $\mathscr{P}_A$ into two sets:
a benign CNOT failure set $\mathscr{G}_X$ in the distillation circuit that flips certain identical rows of syndrome bits (the green squares in Fig.~\ref{fig:approximate_good_decompose}~(b)), and the remaining CNOT failures in the distillation circuit combined with the failures $\mathscr{R}_X$ in the preparation stage, which flip the syndrome bits of the rest of the nonzero rows (the red squares in Fig.~\ref{fig:approximate_good_decompose}~(b)).
$\mathcal{B}_{\mathscr{G}_X}$ and $\mathcal{B}_{\mathscr{R}_X}$ can be defined in the same way as for good pattern, and $\mathcal{B}_{\mathscr{G}_X}$ lies in the good part $\mathcal{B}_{\text{f}}$ of $\mathscr{P}_A$.  If $\mathscr{G}_X=\emptyset$, the situation is similar to the case of a good syndrome pattern. An $X$ error of arbitrary weight
may remain in output block $j$ with probability $O(p^{|\mathscr{R}_X|})$, where $\left|\mathscr{R}_X\right|\geq\left|\mathcal{B}_{\mathscr{R}_X}\right|>t_{c_\mu}\geq t-1$. When $\mathscr{G}_X\neq \emptyset$, for all $j'\in \mathcal{B}_{\mathscr{G}_X}$ one has $\text{QE}^{(j')}_{\mathscr{G}_X}=\text{Q}^{(j')}_{E_{\mathscr{G}_X}}\neq \emptyset$.
Since the distillation circuit is transversal, for all $j'\in \mathcal{B}_{\mathscr{G}_X}\subset \mathcal{B}_{\text{f}}$, one has
\beq
\mathcal{I}\left(\mathcal{SC}_\mu^{(j')}|_{E_{\mathscr{G}_X}}\right)
=\mathcal{I}_{\mathcal{SC}_\mu\,|\,E_{\mathscr{G}_X}}.
\eeq
Since $\mathscr{P}_A$ is an approximately good pattern, for $i\notin \mathcal{I}_{\mathcal{SC}_\mu\,|\,E_{\mathscr{G}_X}}$, $[\textsf{sc}_\mu]_{:,i}\in \mathscr{C}$ and $[\widetilde{\textsf{sc}}_\mu]_{j,i}=[\textsf{sc}_\mu]_{j,i}$ for output block $j$ (see Fig.~\ref{fig:good_pattern_approx}).
For $i_1, i_2\in\mathcal{I}_{\mathcal{SC}_\mu\,|\,E_{\mathscr{G}_X}} $ and $\text{SC}_{\mu,i_3}=\text{SC}_{\mu,i_1}\text{SC}_{\mu,i_2}\in \<S_\mu\>$, $[\textsf{sc}_\mu]_{:,i_3}=[\textsf{sc}_\mu]_{:,i_1}+ [\textsf{sc}_\mu]_{:,i_2}\in \mathscr{C}$, so either
\beq
[\widetilde{\textsf{sc}}_{\mu}]_{j,i_1}=[\textsf{sc}_{\mu}]_{j,i_1}, \ \ \ [\widetilde{\textsf{sc}}_{\mu}]_{j,i_2}=[\textsf{sc}_{\mu}]_{j,i_2},
\eeq
or
\beq
[\widetilde{\textsf{sc}}_{\mu}]_{j,i_1}\neq[\textsf{sc}_{\mu}]_{j,i_1}, \ \ \ [\widetilde{\textsf{sc}}_{\mu}]_{j,i_2}\neq [\textsf{sc}_{\mu}]_{j,i_2},
\eeq
because of the validity of $[\widetilde{\textsf{sc}}_\mu]_{j,:}$. As a result,  either $[\widetilde{\textsf{sc}}_\mu]_{j,i}=[\textsf{sc}_\mu]_{j,i}$ $\forall i\in \mathcal{I}_{\mathcal{SC}_\mu\,|\,E_{\mathscr{G}_X}}$,
or $[\widetilde{\textsf{sc}}_\mu]_{j,i}\neq [\widetilde{\textsf{sc}}_\mu]_{j,i}$ $\forall i\in \mathcal{I}_{\mathcal{SC}_\mu\,|\,E_{\mathscr{G}_X}}$  after classical decoding. In the worst case, $[\widetilde{\textsf{sc}}_\mu]_{j,i}\neq [\textsf{sc}_\mu]_{j,i}, \forall i\in\mathcal{I}_{\mathcal{SC}_\mu\,|\,E_{\mathscr{G}_X}}$, which will leave $X$ errors on the qubits corresponding to the index set $\mathcal{I}\left(\text{QE}^{(j')}_{\mathscr{G}_X}\right)$ on block $j$ with probability $O(p^{|\mathscr{G}_X|+|\mathscr{R}_X|})$. Since $w=\left|\mathcal{I}\left(\text{QE}^{(j')}_{\mathscr{G}_X}\right)\right|\leq |\mathscr{G}_X|$, block $j$ is qualified.

In summary, output block $j$ is qualified with probability $1-O(\epsilon)$ if it can pass ideal postselection.
\end{proof}
Combining the above two theorems, we have the following main theorem of this paper:
\begin{thm}\label{thm:main}
Suppose  $\mathcal{Q}$ is an $\epsilon$-sparse quantum code. For a single round of the distillation process, if and only if the following conditions are satisfied:
\begin{enumerate}
  \item The  failure rate $p$ of any noisy quantum gates  is sufficiently small;
  \item The classical error-correcting code $\cC_{c_\mu}$ can correct $t_{c_\mu}\ge t-1$ errors;
\end{enumerate}
then the output blocks will be qualified after ideal postselection with probability at least $1-O(\epsilon)$.
\end{thm}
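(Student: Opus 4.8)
The plan is to obtain Theorem~\ref{thm:main} as the conjunction of the two preceding results, handling the ``if'' and the ``only if'' directions separately.

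For the ``if'' direction (sufficiency), I would proceed as follows. By definition an ideal postselection retains exactly those output blocks whose estimated complete syndrome set is valid and discards all others. Fix one retained output block $j$. Since $\mathcal{Q}$ is $\epsilon$-sparse (Def.~\ref{def:epsilon-sparse}) and $t_{c_\mu}\ge t-1$ by hypothesis, Theorem~\ref{thm:3} applies verbatim to $j$ and gives that $j$ is qualified with probability at least $1-O(\epsilon)$. Because the number of output blocks per group is the constant $k_{c_\mu}$ fixed by the chosen classical code, a union bound upgrades this to: all retained blocks are simultaneously qualified with probability $1-O(\epsilon)$. Condition~1 enters only to guarantee that the hidden constants in the big-$O$ estimates of Lemma~\ref{lemma:valid_to_good} and Theorem~\ref{thm:3}---which combine powers of $p$ with the weight distributions of Def.~\ref{def:uncorrelatation}---are not ``unreasonably large'' compared with the leading term, so that the stated orders genuinely control the error tail; this is precisely the small-$p$ regime.

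For the ``only if'' direction (necessity), I would argue by contrapositive, showing that violating either condition defeats the conclusion. Suppose Condition~2 fails, i.e.\ $t_{c_\mu}<t-1$. Then, by (the proof of) Theorem~\ref{lemma:postselection_nogo}, there is a preparation-stage failure set $\mathscr{G}$ of size $t_{c_\mu}+1$, hence occurring with probability of order $p^{t_{c_\mu}+1}$, whose induced syndrome array is a \emph{good} pattern; by Lemma~\ref{lemma:goodpass} the estimated complete syndrome set is then valid on every block, so the affected output block \emph{passes} ideal postselection, yet after quantum error correction it still carries an $X$ error $E$ of weight $w>t$. Since $t_{c_\mu}+1\le t-1<t$, this is an error whose order $s=t_{c_\mu}+1$ is strictly smaller than $t$ while its weight exceeds $t$, which violates Def.~\ref{def:uncorrelatation}; hence $E$ is \emph{correlated} and block $j$ is not qualified, failing to be qualified with a probability of order $p^{t_{c_\mu}+1}$ that does not vanish with $\epsilon$ and so cannot be absorbed into $O(\epsilon)$. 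If instead Condition~1 fails, the asymptotic statements of all the constituent lemmas lose force (the coefficients warned against after Def.~\ref{def:uncorrelatation} can swamp the leading order), so the guarantee again fails. Together the two directions give the stated equivalence.

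I do not expect a genuine obstacle here: the substantive content has already been carried by Theorems~\ref{lemma:postselection_nogo} and~\ref{thm:3}, and what remains is packaging. The one point requiring care is checking that the bad failure mode of Theorem~\ref{lemma:postselection_nogo} really survives ideal postselection---which is exactly what Lemma~\ref{lemma:goodpass} (good $\Rightarrow$ valid) supplies---and that the per-block $1-O(\epsilon)$ bound of Theorem~\ref{thm:3} aggregates correctly over the constantly many output blocks in a group. A secondary subtlety is phrasing the equivalence cleanly given that Condition~1 is a regularity hypothesis rather than a sharp threshold; it is best read as ``$p$ small enough that the leading-order analysis applies.''
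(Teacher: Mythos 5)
Your proposal matches the paper's own (implicit) argument exactly: the paper derives Theorem~\ref{thm:main} simply by ``combining'' Theorem~\ref{lemma:postselection_nogo} for necessity and Theorem~\ref{thm:3} for sufficiency, which is precisely your decomposition, and your added details (Lemma~\ref{lemma:goodpass} ensuring the bad failure set of the no-go theorem survives ideal postselection, and the union bound over the constantly many output blocks) are correct fillings-in of what the paper leaves unstated.
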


\begin{remark}
Here, we emphasize that the sparsity of a quantum code is technically assumed for the convenience of proof, but it may not be necessary in practice. This is supported numerically by the state distillation for the quantum Golay code in the next section. In fact, for $\cC_{c_\mu}$ with sufficient sparse parity check matrix, (e.g., $M\leq t_{c_\mu}/2$), one can see that the probability to generate a syndrome pattern which is not good or approximately good but can pass the post-selection is less than $O(p^{t_{c_\mu}})$. In such scenario, the output ancillas are qualified. This is true for arbitrary CSS code.
\end{remark}
Although ideal postselection is not practical, one can use a more efficient error-detecting code, as long as the code length of $\cC_{d_1}$ $(\cC_{d_2})$ is less than $|\mathcal{SC}_1|$ $(|\mathcal{SC}_2|)$, to approximate it. Since one just needs to check parities  of the estimated syndrome bits and no actual decoding is required, a broad range of classical codes with good error-detecting ability can be chosen.

In general,  if we have a family of $\epsilon$-sparse codes, $\epsilon$ will approach zero when the code length is sufficiently large.
We also need to emphasize that Theorem~\ref{thm:main} just guarantees the output states are qualified  when $p$ is small and the length of an $\epsilon$-sparse code $\mathcal{Q}$ is sufficiently large. However, the constant behind big-O notation in Def.~\ref{def:uncorrelatation} could be large if $M$ for the $\textsf{A}$ matrix is not small enough. This fact suggests that one need to consider classical codes with sparse $\textsf{A}$ matrices for FTQC purpose.

Asymptotically, the rejection rate for each round of distillation is $O(p^2)$, because at least two failures are needed to cause a rejection of the output blocks. Since the rejection rate is $O(p)$ for naive verification and other ancilla preparation schemes~\cite{steane2002fast,paetznick_ben2011fault}, our distillation protocol is potentially more efficient in the small $p$ regime. Suppose the rejection rates are, respectively, $R_1(p)$ and $R_2(p)$ for the two rounds of distillation.
The overall yield rate for the fault-tolerant distillation protocol would be
\beq
\text{Yield}_{\text{FT}}(p)=
\frac{k_{c_1}k_{c_2}(1-R_{\text{FT}}(p))}{n_{c_1}n_{c_2}},
\eeq
where $R_{\text{FT}}(p)=1-(1-R_1(p))(1-R_2(p))$ is the overall block rejection rate of the fault-tolerant distillation protocol. It is possible that $R_1(p)$ and $R_2(p)$ are both negligible in the small $p$ regime, so that one may obtain qualified ancilla states without sacrificing the overall yield rate. Compared with the yield rate of a naive verification process (Eq.~(\ref{eq:naive_yield})), our distillation protocol can achieve a yield rate \emph{independent} of the distance of $\mathcal{Q}$, by adjusting the error-detection codes used for postselection, thus boost from $O(t^{-2})$ to $O(1)$ in practice for an $[[n,k,d=2t+1]]$ CSS code if appropriate family of classical codes are chosen.

\section{Numerical Estimation: Quantum Golay code}\label{sec:numerical}
In this section, we numerically study the performance of the distillation protocol proposed in Sec.~\ref{sec:FTpreparation}. We will focus on the preparation of $|0\>^{\otimes k}_L$, since the preparation procedure for other ancilla states is essentially the same.

The first purpose of numerical simulation is to check whether the asymptotic scaling of the error weight distribution behaves correctly. That is, the probabilities of residual errors in the output blocks with weight $w$, $P_X(w)$ and $P_Z(w)$, should scale like $O(p^w)$ when $p$ is sufficiently small. This will ensure the quality of the ancilla states.
We also want to show that the results in the previous section hold for quantum codes of finite length.

The second purpose is to study the error performance of the distillation procedure and the overall yield rate. We may ask  the effective error rate when the prepared ancillas contain only independent errors. A proper way to study  this effective error rate is to approximate the error performance by some binomial distribution, where the probability parameter of that binomial distribution can be regarded as the effective error rate. Since  correlated errors of higher weight are what concern us, we can quantify the distribution of the highest weight errors by a binomial distribution.
Since we are preparing $|0\>_L^{\otimes k}$, we need to consider $X$ errors of weight higher than $t$ and $Z$ errors of weight $t$  (a $Z$ error of weight $w > t$ is equivalent to one with $w'\le t$ for $|0\>_L^{\otimes k}$). That is, for the $X$ and $Z$ errors, the effective error rate $p_{\text{eff}\,|\,X}$ and $p_{\text{eff}\,|\,Z}$ can be defined by:
\beq
P_X(w>t)=\sum_{w>t} \binom{n}{w} \ p_{\text{eff}\,|\,X}^w (1-p_{\text{eff}\,|\,X})^{n-w},
\eeq
and
\beq
P_{Z}(w=t)=\binom{n}{t} \ p_{\text{eff}\,|\,Z}^t (1-p_{\text{eff}\,|\,Z})^{n-t}.
\eeq
Clearly, $p_{\text{eff}}$ depends on the chosen quantum and classical codes, which makes it difficult to estimate analytically.
Similarly, the yield rate is difficult to estimate, since the overall rejection rate $R_{\text{FT}}$ depends on the gate failure rate, the structure of  $\textsf{A}_{c_1}$, $\textsf{A}_{c_2}$ in the parity-check matrices, and the specific error-detecting codes. However, $P_X(w>t)$ and $P_Z(w=t)$ can be directly estimated from numerical simulations.

In the following we choose our quantum code $\mathcal{Q}$ to be the $[[23,1,7]]$ quantum Golay code, which is constructed from the classical  $[23,12,7]$ Golay code.
Let both $\cC_X$ and $\cC_Z$  be  the $[23,12,7]$ Golay code. Note that $C_{X}^\perp$ is a $[23,11,8]$ cyclic code, which is contained in $\cC_X$ and $\cC_Z$. Efficient ancilla preparation for the quantum Golay code  has been extensively studied in Ref.~\cite{paetznick_ben2011fault}
by using its permutation symmetry. But we ignore this symmetry in this paper, and focus on the performance of our protocol. The classical $[23,12,7]$ Golay code is the smallest \emph{perfect} code with $t$ larger than one, and this simplifies the study of correlated noise after distillation. This is because its decoder will always take the state back to the code space.  The parity-check matrices $\textsf{H}_X$ and $\textsf{H}_Z$ can be represented as follows:
\beq
{\small
\begin{bmatrix}
1 0 0 0 0 0 0 0 0 0 0 1 1 1 1 1 0 0 1 0 0 1 0\\
0 1 0 0 0 0 0 0 0 0 0 0 1 1 1 1 1 0 0 1 0 0 1\\
0 0 1 0 0 0 0 0 0 0 0 1 1 0 0 0 1 1 1 0 1 1 0\\
0 0 0 1 0 0 0 0 0 0 0 0 1 1 0 0 0 1 1 1 0 1 1\\
0 0 0 0 1 0 0 0 0 0 0 1 1 0 0 1 0 0 0 1 1 1 1\\
0 0 0 0 0 1 0 0 0 0 0 1 0 0 1 1 1 0 1 0 1 0 1\\
0 0 0 0 0 0 1 0 0 0 0 1 0 1 1 0 1 1 1 1 0 0 0\\
0 0 0 0 0 0 0 1 0 0 0 0 1 0 1 1 0 1 1 1 1 0 0\\
0 0 0 0 0 0 0 0 1 0 0 0 0 1 0 1 1 0 1 1 1 1 0\\
0 0 0 0 0 0 0 0 0 1 0 0 0 0 1 0 1 1 0 1 1 1 1\\
0 0 0 0 0 0 0 0 0 0 1 1 1 1 1 0 0 1 0 0 1 0 1 \\
\end{bmatrix}.
}
\eeq
Similarly, both logical $X$ and $Z$ operators of $\cQ$ share the binary representation:
\[l=
\begin{bmatrix}
0 0 0 0 0 0 0 0 0 0 0 1 0 1 0 1 1 1 0 0 0 1 1\\
\end{bmatrix}.
\]
and thus $\bar{X}=X^l$ and $\bar{Z}=Z^l$.

We need to choose classical codes $\cC_{c_1}$ and $\cC_{c_2}$ that can correct at least two errors to produce qualified ancillas, according to Theorem~\ref{thm:main}.
Meanwhile, we hope that the number of 1s in each column of $\textsf{A}_{c_1}$ and $\textsf{A}_{c_2}$ is small enough so that the effect of error propagation during state distillation will not be too bad.

Here we choose two  classical codes of distance five: the $[15,7,5]$ BCH code, and the $[5,1,5]$ repetition code. The parity-check matrix for the $[15,7,5]$ code is
\beq\label{eq:parity_check_15}
{\small
\begin{bmatrix}
100000001101000\\
010000000110100\\
001000000011010\\
000100000001101\\
000010001101110\\
000001000110111\\
000000101110011\\
000000011010001\\
\end{bmatrix}.
}
\eeq
For comparison, we also study the case when $\cC_{c_1}$ and $\cC_{c_2}$ are codes of distance three. Two typical examples are the $[3,1,3]$ repetition code and the $[7,4,3]$ Hamming code. Codes with distance 5 will be used in Sec.~\ref{sec:d5} and codes with distance 3 will be discussed in Sec.~\ref{sec:d3}.

We also need two more classical codes $\cC_{d_1}$ and $\cC_{d_2}$ to check the compatibility of  the estimated generalized syndrome bits. $\cC_{d_1}$ and $\cC_{d_2}$ should encode $k_{d_1}=r_Z+k=12$ and $k_{d_2}=r_X=11$ bits, respectively.  Two very natural candidates for our purpose are the $[23,12,7]$ Golay code and its dual, which is a $[23,11,8]$ cyclic code.

Numerical simulation of a full cycle of the state distillation procedure is done as follows:
\begin{enumerate}
  \item We first prepare $n_{c_1}(n_{c_2}+n^\prime)$ different blocks of qubits, and prepare each block in the $|0\>_L$ state of the quantum Golay code by a noisy encoding circuit. An optimized encoding circuit can be found in ~\cite{paetznick_ben2011fault}. Every $n_{c_1}$ blocks are put together to form a group. The extra $n^\prime$ groups of blocks are also prepared so that when some blocks are discarded in the first round, we can still keep $k_{c_1}n_{c_2}$ blocks for each group by appending extra blocks, to apply the second round distillation. The value of $n^\prime$ is determined by the rejection rate of the first round, and it can be optimized during simulation.
  \item For $(n_{c_2}+n^\prime)$ groups, the distillation circuit of the first round is applied to each group, with noisy CNOT gates followed by noisy bitwise $Z$ basis measurements on the first $r_{c_1}$ blocks to get parity matrix $\sigma_1$ of the generalized syndromes of $Z$ stabilizer generators and $\bar{Z}$.
  \item Estimate the generalized syndromes $[\textsf{s}_1]_{j,:}$ for each remaining block by classical decoding of $\cC_{c_1}$. For each block, use the parity check matrix of the $[23,8,12]$ code in systematic form, $\textsf{H}_{d_1}$, to check the compatiblity of these estimated generalized syndromes. For the $j$th block, if $\textsf{H}_{d_1}[\widetilde{\textsf{s}}_1]^T_{j,:}=0$, that block is accepted; otherwise, the block will be discarded.
  \item For each group of blocks, if certain blocks have been discarded, append accepted output blocks from the extra $n'$ groups so that each group retains $k_{c_1}$ blocks.
  \item Do quantum error correction on the remaining blocks, following the recipe in Sec.~\ref{sec:noiseless_prep} to remove $X$ errors.
  \item For each index $j$ of the output blocks, assemble the $j$th block from each group to form a new group. Each new group contains $n_{c_2}$ ancilla blocks, and there are $k_{c_1}$ such groups in total. This step randomizes the correlated $Z$ errors among blocks previously in the same group.
  \item The second round distillation circuit is applied to each new group, with noisy CNOTs followed by noisy bitwise $X$ basis qubit measurements on the first $r_{c_2}$ blocks to get $\sigma_2$, the parity of generalized syndromes of the $X$ stabilizer generators.
  \item Estimate the binary vector $[\textsf{s}_2]_{j,:}$ for each output block by classical decoding. For each output block, check if $\textsf{H}_{d_2}[\widetilde{\textsf{s}}_2]^T_{j,:}=0$ to decide whether the corresponding block is accepted or discarded.
  \item Do quantum error correction on the remaining blocks to remove $Z$ errors.
  \item Check the weight of the remaining error and calcuate the weight distribution for both $X$ and $Z$ errors. Note that, since we are preparing $|0\>_L$ state with code distance 7, $Z$ errors with weight larger than 3 are equivalent to $Z$ errors with weight less than or equal to 3. So we just need to calculate the probability distribution of weight 1, 2, 3, and larger than 3 errors for $X$ errors, and weight 1, 2, 3 for $Z$ errors.
\end{enumerate}

Note that for the last step above, one can not directly count the weights of the remaining error, because of the error degeneracy of quantum codes: two errors are equivalent if their product is a generalized stabilizer element. So certain high weight errors remaining after distillation in the simulation may actually be equivalent to low weight error. To cope with this problem, we check the generalized syndromes of the remaining error $E_{\text{rem}}$ for each block and compare to a list of syndromes of all operators that are tensor products of $X$ and $Z$ with weight less than or equal to 4. If a certain operator has the same syndrome as $E_{\text{rem}}$, it is an equivalent error of $E_{\text{rem}}$ with the lowest weight, and its weight is regarded as the weight of $E_{\text{rem}}$. If $E_{\text{rem}}$ does not have an equivalent error with weight less or equal to 4, its weight is higher than 4. For larger quantum codes with higher distances, the enumerating all low weight errors would be too tedious. For special states like $|0\>_L$, we can decode $\cC_X^\perp$ after state preparation to find the minimum weight $X$ error that shares the same syndromes with $E_{\text{rem}}$. The difficulty comes from the lack of an efficient decoder for $\cC_X^\perp$ for a typical code $\cC_X$. However, a decoder with worse performance may be used to approximately find the minimum weight error for low error weight. (This is just a technical difficulty in numerical benchmarking, and cause no trouble in practise, since such decoding is not necessary when real state distillation is carried out.)

\subsection{Classical codes of distance five }\label{sec:d5}

\subsubsection{Combination A: [15,7,5]+[15,7,5]}
Figure~\ref{fig:1515_x} shows the probability of remaining $X$ errors after distillation with weight 1, 2, 3, and larger than 3 ($P_X(w=1)$, $P_X(w=2)$, $P_X(w=3)$, and $P_X(w>3)$, respectively) using the $[15,7,5]$ classical BCH code for both rounds of distillation, with postselection. The three dashed reference lines represent the probability distributions of $O(p)$ (green), $O(p^2)$ (brown), and $O(p^3)$ (purple) are also given for comparison. It is clear that the probability of weight $w$ $X$ errors scales as $O(p^w)$ asymptotically for $w\leq3$ when $p$ is sufficiently small; when $w>3$, since $[15,7, 5]$ can only correct 2 errors, the  distribution can only scale as $O(p^3)$.

\begin{figure}[!htp]
\centering\includegraphics[width=65mm]{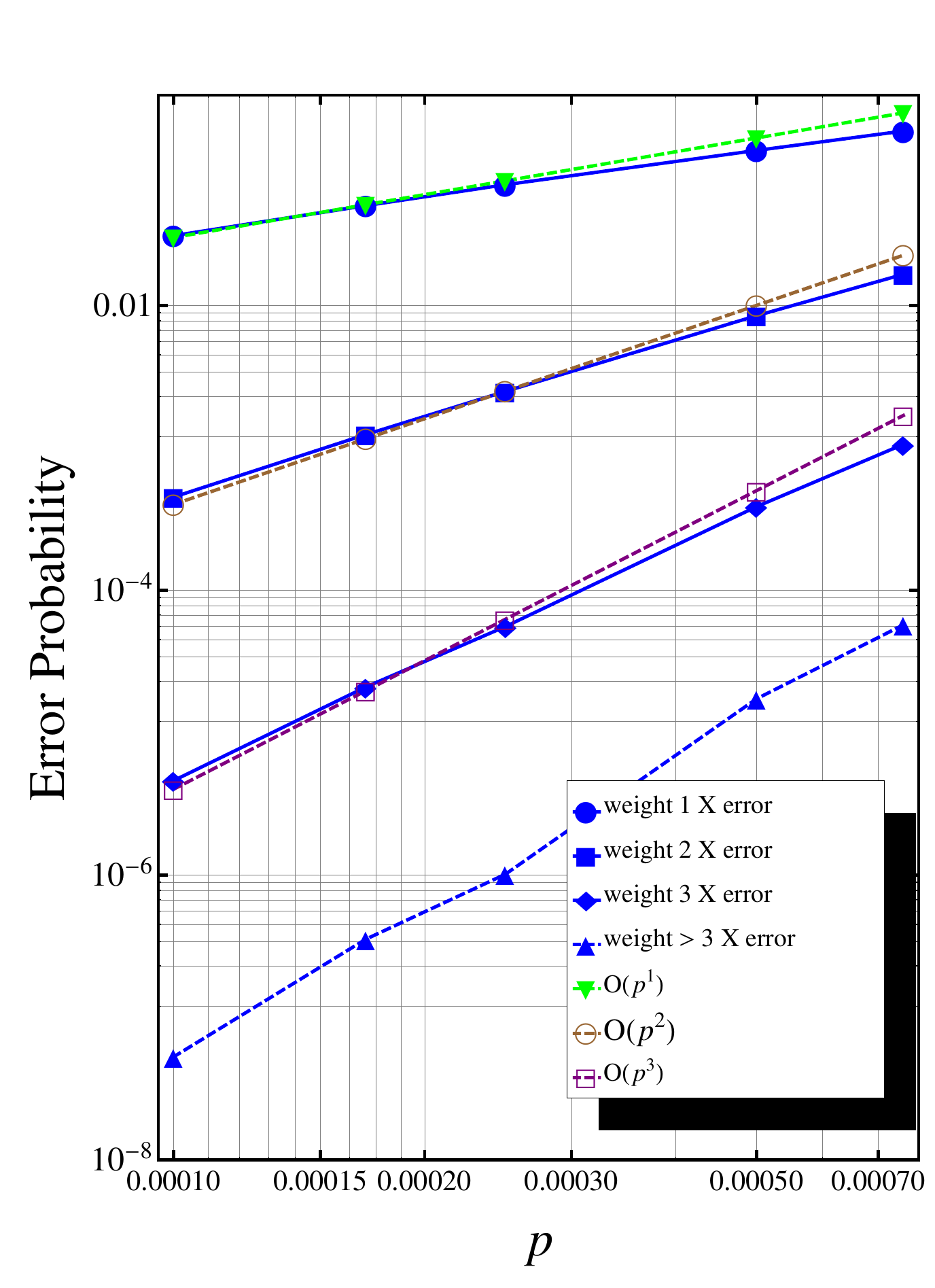}
\caption{\label{fig:1515_x}(Color online) Probability of remaining $X$ errors with weight 1, 2, 3  and larger than 3 on a single output block vs gate failure rate $p$, after distillation using the classical $[15, 7, 5]$ BCH code for both rounds, with postselection. The dashed green, brown and purple lines correspond to weight distributions $O(p)$, $O(p^2)$ and $O(p^3)$, for reference.}
\end{figure}
\begin{figure}[!htp]
\centering\includegraphics[width=65mm]{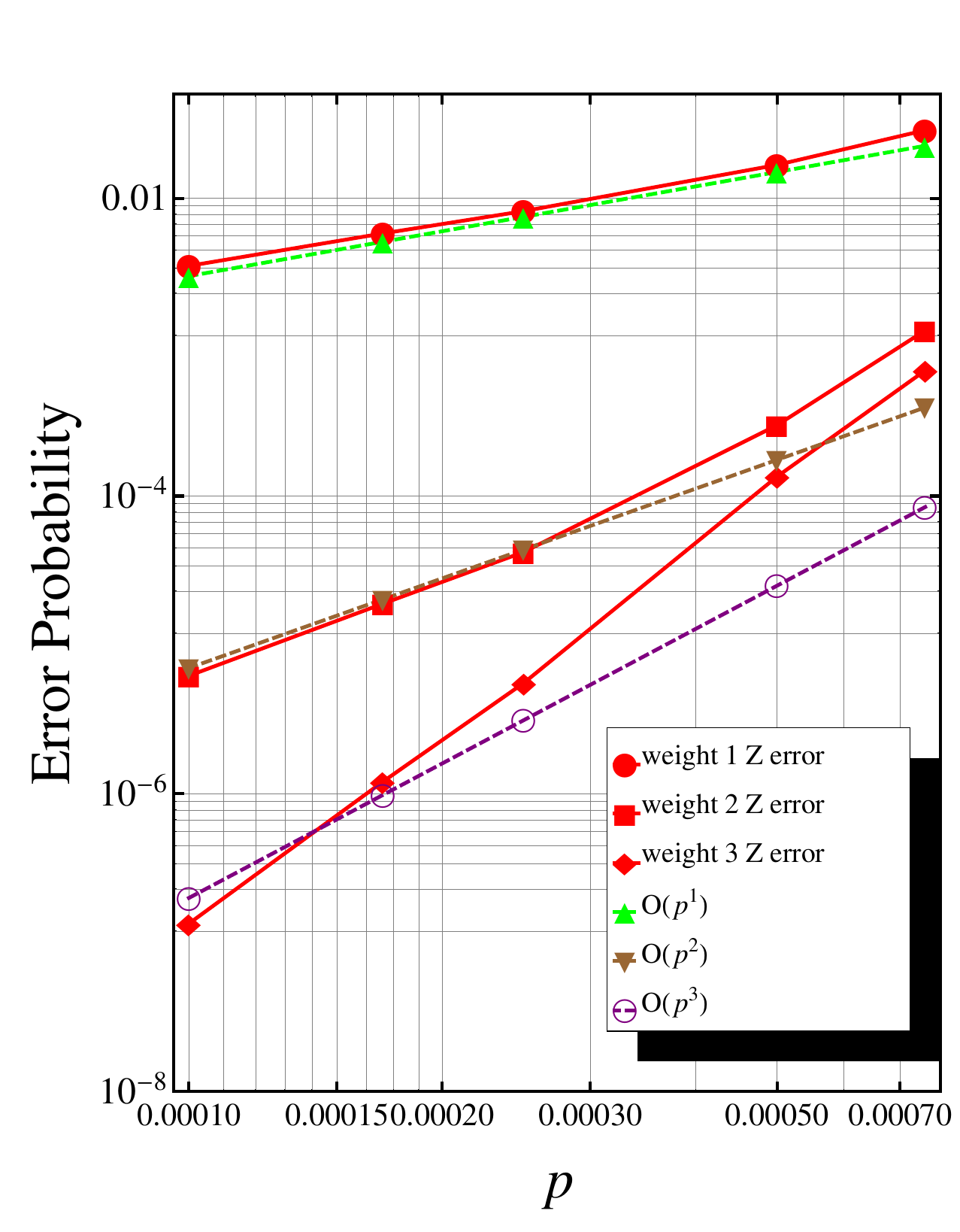}
\caption{\label{fig:1515_z}(Color online) Probability of remaining of $Z$ errors with weight 1, 2, 3 on a single output block vs gate failure rate $p$, after distillation using the classical $[15, 7, 5]$ BCH code for both rounds.}
\end{figure}

Similarly, Fig.~\ref{fig:1515_z} shows the error weight distribution for $Z$ errors. Note that for $Z$ errors, we only care about those of weight less than or equal to 3. Compared with $X$ errors, $Z$ errors scale correctly only when $p < 0.0003$ for $w=2,3$. This is because after the first round of distillation, the error rate of the remaining $Z$ errors on the output blocks is amplified, and will be beyond the error correction ability of the $[15,7,5]$ code for large values of $p$. Also note that the error rates for $Z$ errors of all weights are much smaller than those of $X$ errors of the same weight, because the remaining $X$ errors from the first round propagate in the second round to the output blocks, and these $X$ errors, though mostly uncorrelated, cannot be detected and corrected. Although the probability scaling up for $X$ errors is correct, their absolute values are much higher than for $Z$ errors. Overall, Fig.~\ref{fig:1515_x} and Fig.~\ref{fig:1515_z} qualitatively verify the validity of Theorem~\ref{thm:main} for using the $[15,7,5]$ code for distillation, and also suggest that the $[23,8,11]$ and $[23,7,12]$ codes are sufficiently good as error detection codes $\cC_{d_1}$ and $\cC_{d_2}$ for the purpose of postselection.

\begin{figure}[!htp]
\centering\includegraphics[width=85mm]{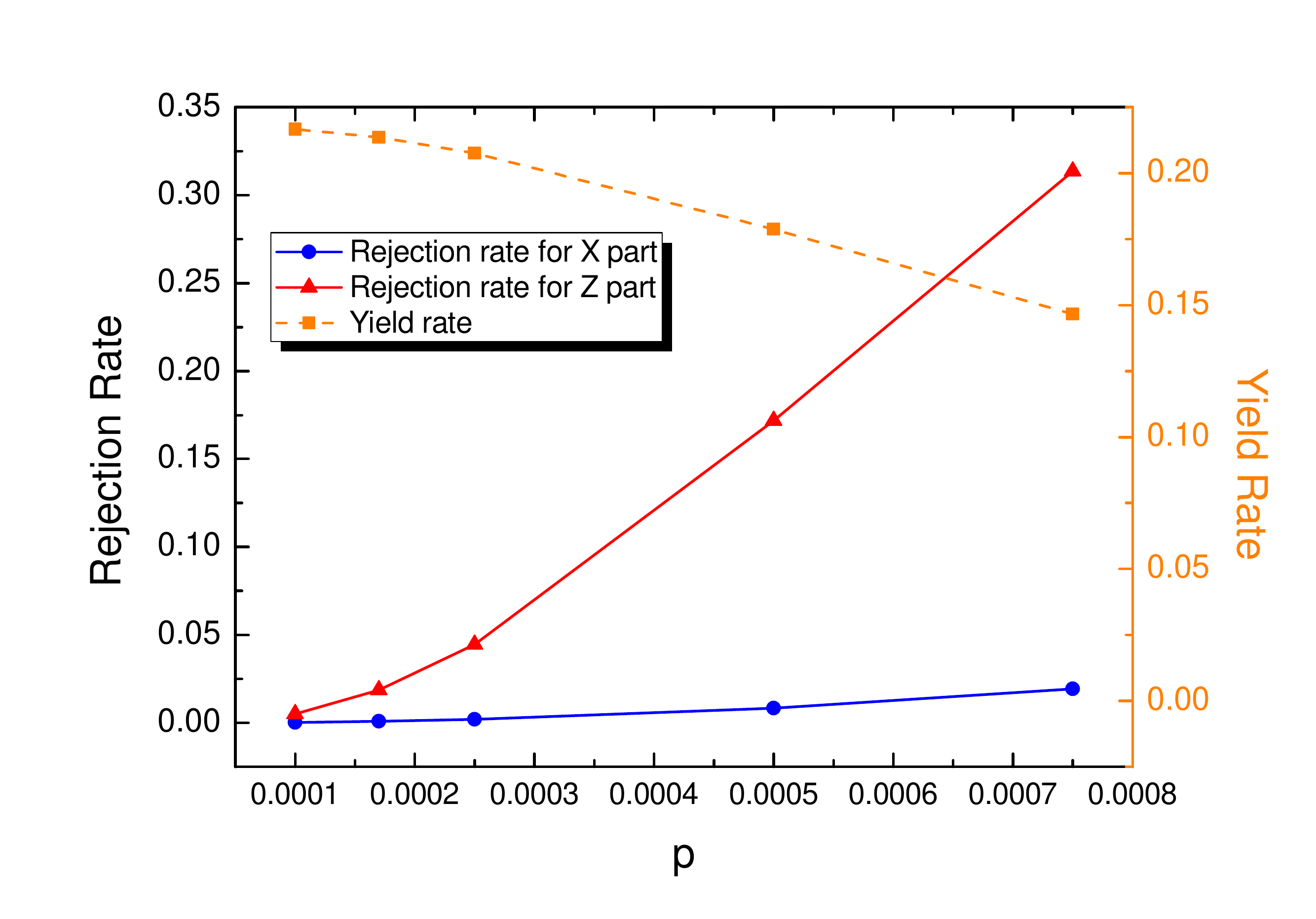}
\caption{\label{fig:1515_rej}(Color online) The rejection rates for both rounds of distillation, and overall yield rate, using the $[15,7,5]$ code in both rounds.}
\end{figure}

The rejection rates in both rounds of distillation and overall yield rate are shown in Fig.~\ref{fig:1515_rej}. For all values of $p$ considered here, the rejection rate for the $X$ part of the distillation can be neglected, and for $p < 0.00025$, the rejection rate for the $Z$ part is less than $5\%$. The overall yield rate will be around $20\%$ in that gate failure regime, which is much more efficient than naive way  verification (see Fig.~\ref{fig:naive}). The rejection rates for both the $X$ and $Z$ parts scale as $O(p^2)$, as predicted in Theorem~\ref{thm:nogo}. Note that for larger distance codes like the $[[255, 143, 15]]$ quantum BCH code, which is of practical interest for the large block code FTQC scheme, it would be much more efficient to use the distillation protocol than a verification circuit in the low physical failure rate region if proper $\cC_{c_1}$ and $\cC_{c_2}$ are chosen. This will be studied in more detail in future work.

\begin{figure}[!ht]
\centering\includegraphics[width=70mm]{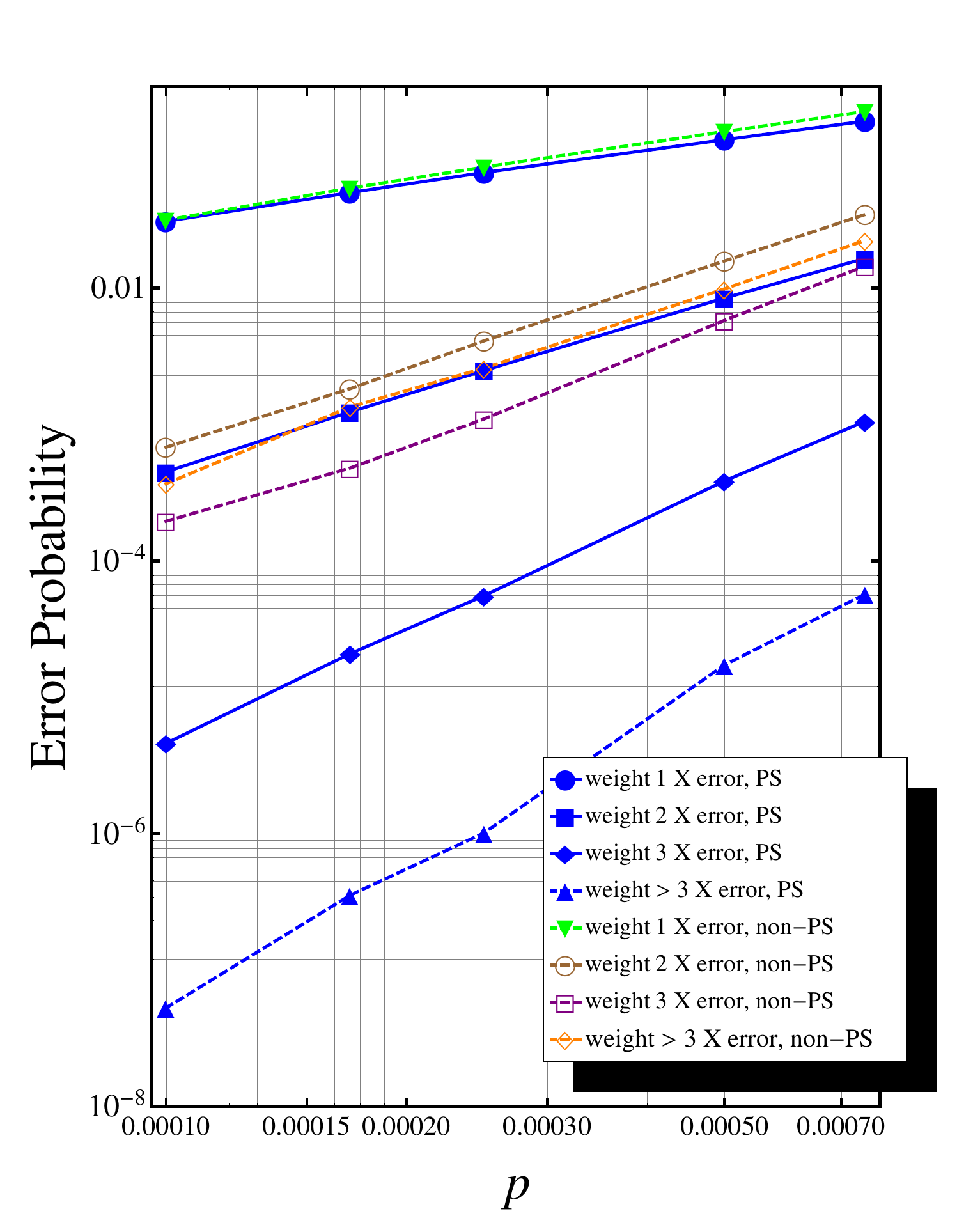}
\caption{\label{fig:X_ps_non_ps} (Color online) Error weight distribution after distillation for $X$ errors, using the classical $[15, 7, 5]$ BCH code for both rounds, with and without postselection. The dashed green, brown, orange and purple lines represents the probability of remaining $X$ errors with weights 1,2,3, and larger than 3, without postselection.}
\end{figure}

In Figs.~\ref{fig:X_ps_non_ps} and~\ref{fig:Z_ps_non_ps} the error weight distributions with and without postselection are compared. For both parts, as stated in Theorem~\ref{thm:nogo},
the probability of errors of weight larger than 2  scale as $O(p^2)$ in the low failure rate region. Moreover, the probabilities of errors of weight 2 and larger than 2 remaining are close to each other for both $X$ and $Z$ errors, which means that the distillation protocol almost cannot  reduce high weight errors at all if no postselection procedure is applied.
\begin{figure}[!ht]
\centering\includegraphics[width=65mm]{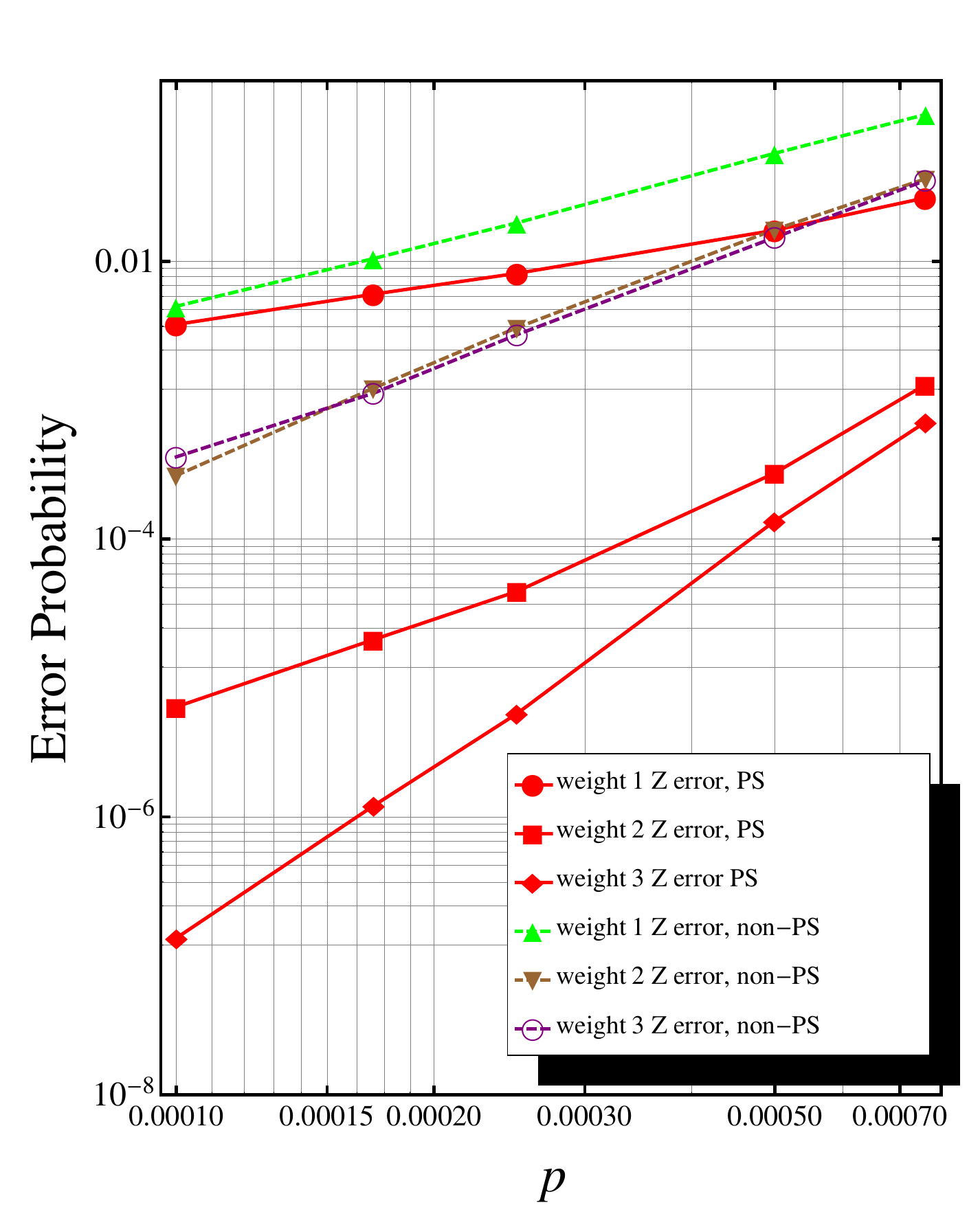}
\caption{\label{fig:Z_ps_non_ps}(Color online) Error weight distribution after distillation for $Z$ errors using the classical $[15, 7, 5]$ BCH code for both rounds, with and without postselection (PS). The dashed lines show the probabilities of the remaining errors without postselection. }
\end{figure}

For $p=0.0001$, if postselection is applied, the probability of $X$ errors with weight 3 is two orders of magnitude smaller; while for $X$ errors with weight larger than 3, the probability is reduced by four orders of magnitude. Similarly, the probability of remaining $Z$ errors of weight 2 is reduced by two orders of magnitude; while for weight-3 errors, the probability is reduced by three orders of magnitude. Since the rejection rate is negligibly small in this regime, postselection seems to discard almost  exactly all bad ancilla blocks with only slight overkill! Thus, it can greatly improve the quality of the output ancillas   with very little sacrifice of the overall yield rate.
\begin{figure}[!ht]
\centering\includegraphics[width=65mm]{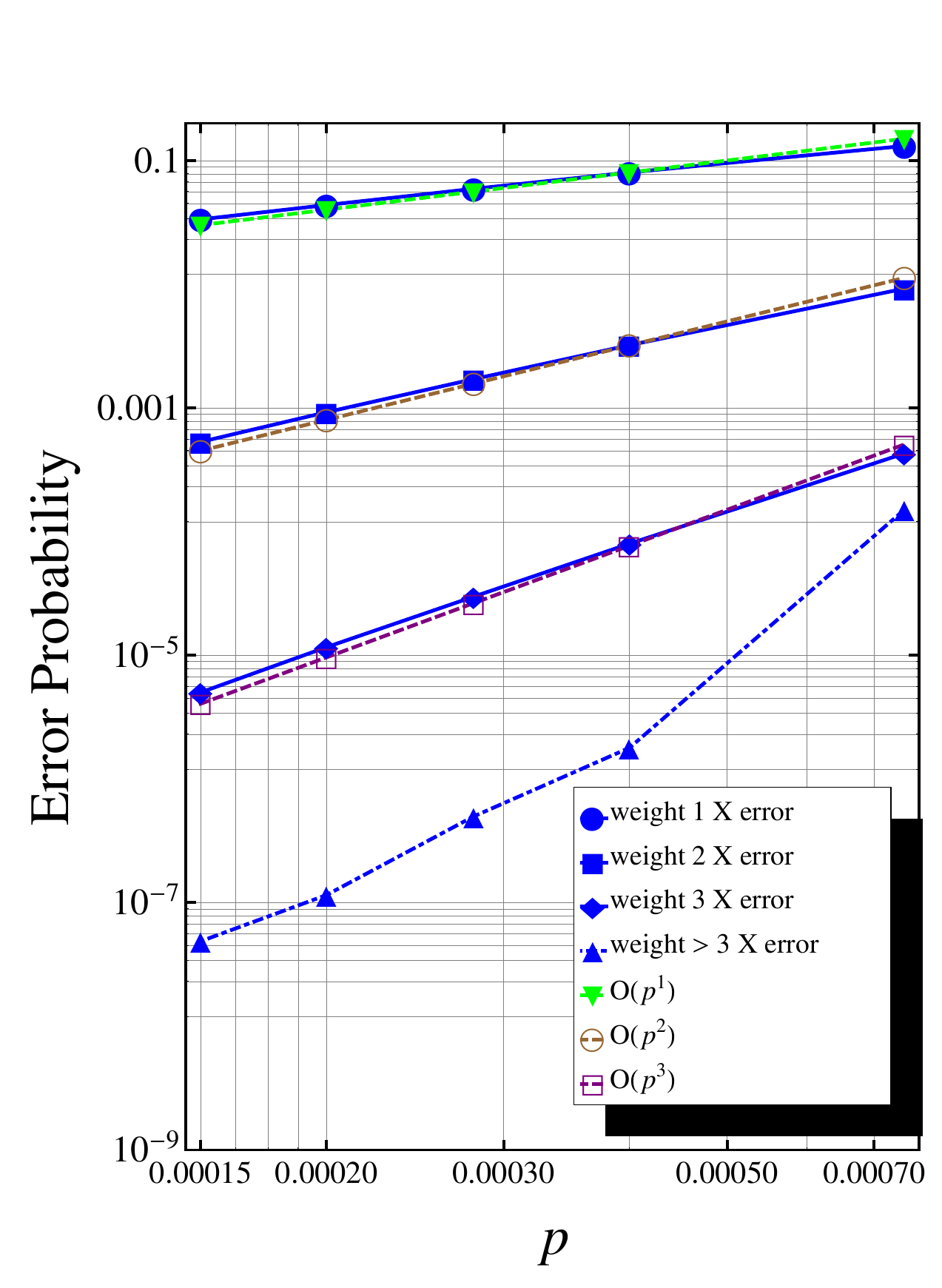}
\caption{\label{fig:155X}(Color online) Weight distribution for $X$ errors remaining on output blocks using the $[15,7,5]$ BCH code and the $[5,1,5]$ repetition code with postselection.}
\end{figure}

We are also interested in the strength of the remaining errors in the low gate failure rate region, where schemes using large block codes are good enough to complete a nontrivial quantum algorithm. The effective rates for $X$ and $Z$ errors are $p_{\text{eff}|X} = 1.67\times 10^{-3}$ and  $p_{\text{eff}|Z} = 3.83\times 10^{-4}$, barely good enough for the scheme in Ref.~\cite{brun2015teleportation}, where the quantum Golay code is concatenated with another code to form a large data code block. One can see that there is an asymmetry for the two types of errors: for $Z$ errors, the effective error rate is about the same as the physical failure rate, while for $X$, the error rate is amplified by an order of magnitude. This seems always be the case for a two-stage distillation protocol.

\begin{figure}[!ht]
\centering\includegraphics[width=65mm]{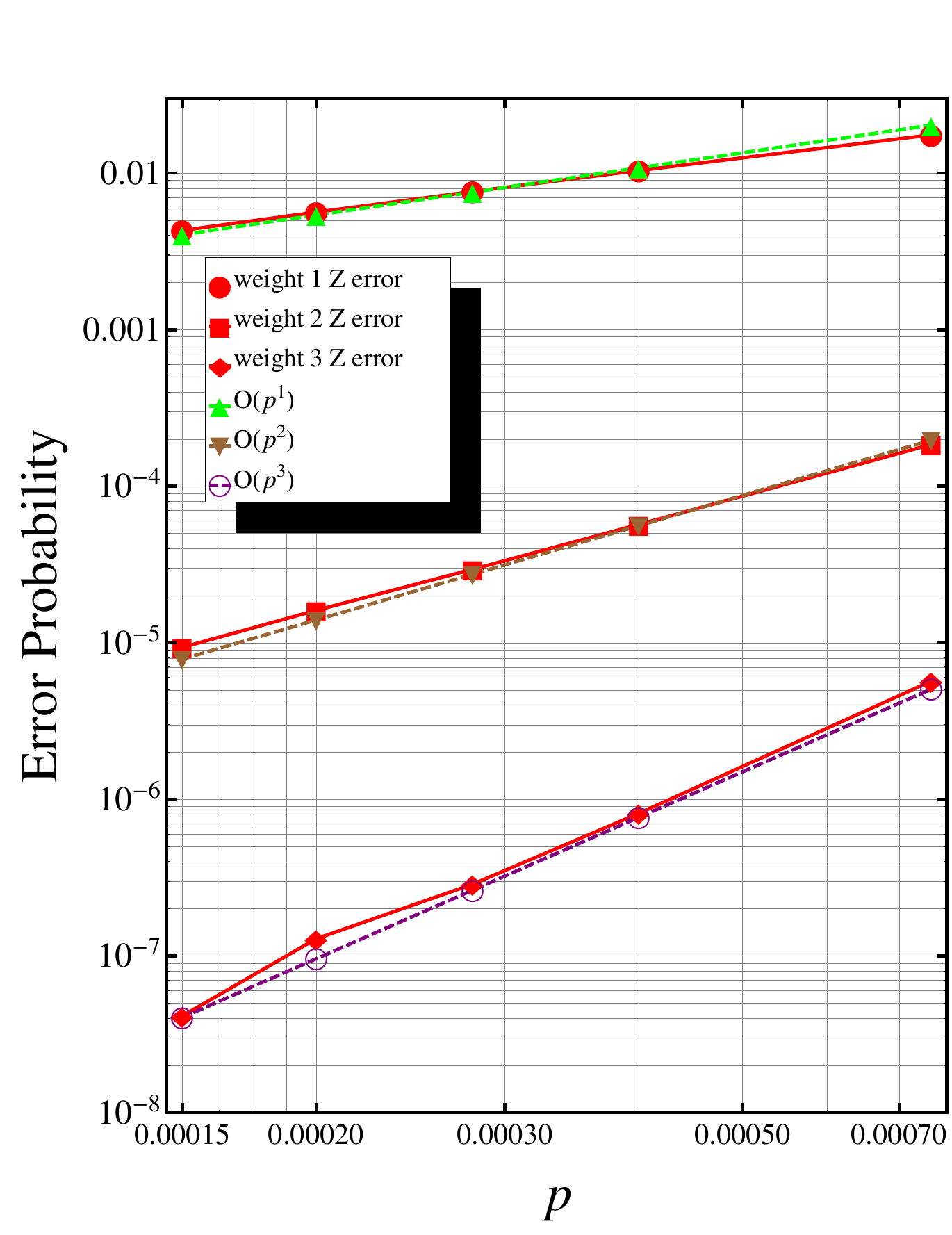}
\caption{\label{fig:155Z}(Color Online) Weight distribution of $Z$ errors remaining on output blocks using the $[15,7,5]$ BCH code and the $[5,1,5]$ repetition code, with postselection.}
\end{figure}

\subsubsection{Combination B:[15,7,5]+[5,1,5]}
Alternatively, we can use different distance-5 classical codes $\cC_{c_1}$ and $\cC_{c_2}$ to see how different codes affect the quality of the output ancillas. We keep the $[15,7,5]$ BCH code for $\cC_{c_1}$, and change $\cC_{c_2}$ to be the $[5,1,5]$ repetition code, which has relatively larger error correction ability than the $[15,7,5]$ code. One might expect that the ability to suppress $Z$ errors is even stronger in this case, which indeed turns out to be true.
\begin{figure}[!ht]
\centering\includegraphics[width=85mm]{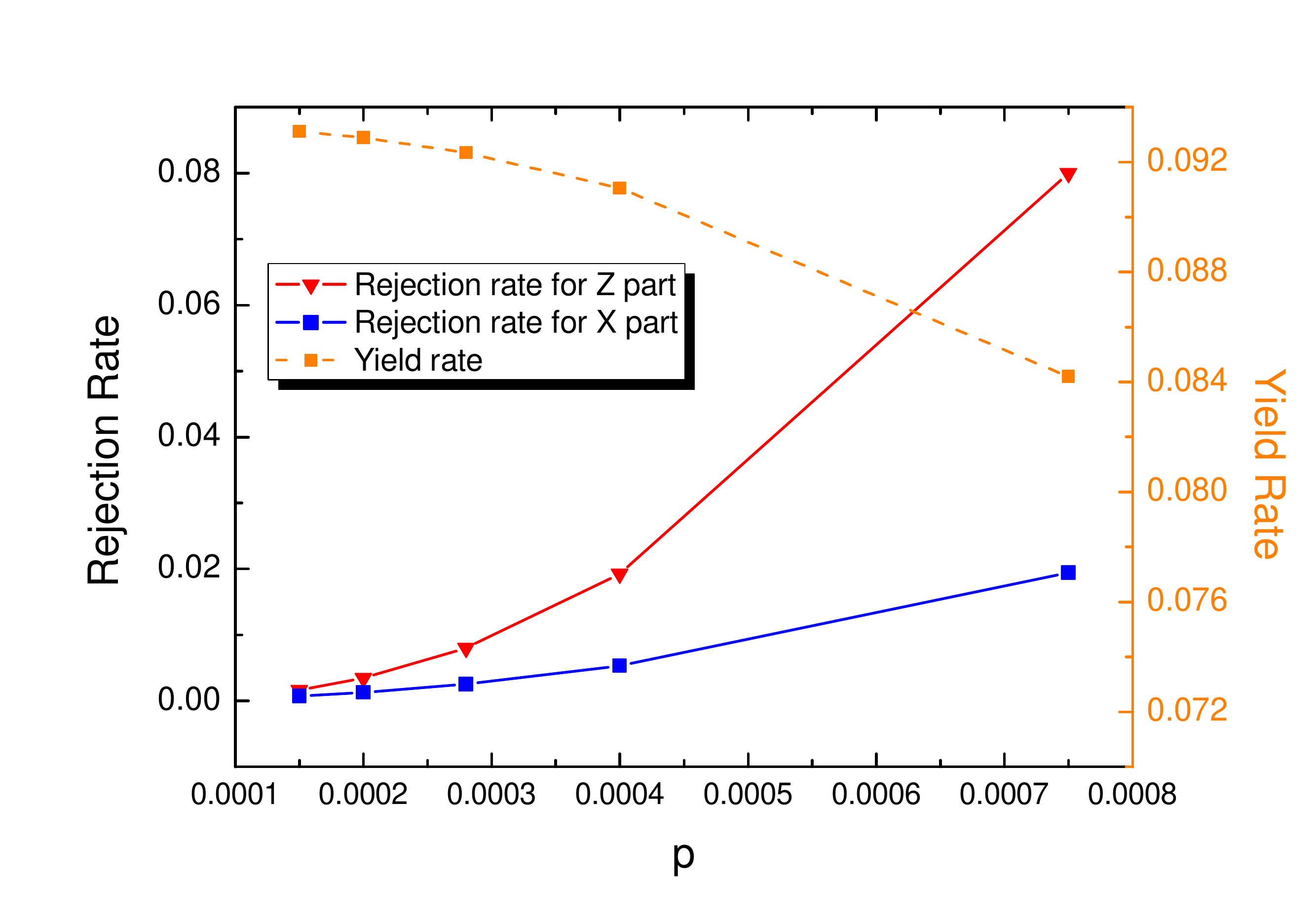}
\caption{\label{fig:155_rej}(Color online) The rejection rate for both rounds of distillation and overall yield using the $[15,7,5]$ and $[5,1,5]$ codes.}
\end{figure}

Figures~\ref{fig:155X} and~\ref{fig:155Z} show the weight distributions for $X$ and $Z$ errors with this combination of codes and Fig.~\ref{fig:155_rej} shows the rejection rate and overall yield rate. Compared to Combination A, the weight distribution scales correctly for all physical failure rates that we considered for both $X$ and $Z$ errors.
The probability for a weight-2 $Z$ error remains the same, while for weight-3 errors it can be reduced by an order of magnitude for $p<0.00025$. However, the probabilities for residual $X$ errors are almost the same (or slightly less) for all weight as in the previous case. This is because for the $[5,1,5]$ code, the corresponding $\textsf{A}_{c_2}$ matrix has four 1s in a single column, which is the same for five out of the seven columns of $\textsf{A}_{c_2}$ for the $[15,7,5]$ code (see Eq.~(\ref{eq:parity_check_15}): two columns contains five 1s). The rejection rate for the $Z$ part is also reduced for all $p$, because only a small set of CNOT failure combinations will cause a rejection in the $Z$ part. The overall yield rate is less than half that of Combination A, because of the low encoding rate of the $[5,1,5]$ code. Overall, this combination reduces the error rates for distilled blocks, but with much lower yield.

From these two examples, it seems that there is a tradeoff between the yield and the rate of effective errors remaining on the distilled blocks for small classical codes.
Asymptotically, this tradeoff may disappear for large quantum codes   and large classical codes with sparse parity-check matrices.

\subsection{Classical codes of distance three }\label{sec:d3}
In this section, we study the case when distance-3 classical codes are used in the distillation protocol, to further verify the necessity of the conditions in Theorem~\ref{thm:main}, and complete the study of distilling $|0\>_L$ for the quantum Golay code.
\begin{figure}[!ht]
\centering\includegraphics[width=60mm]{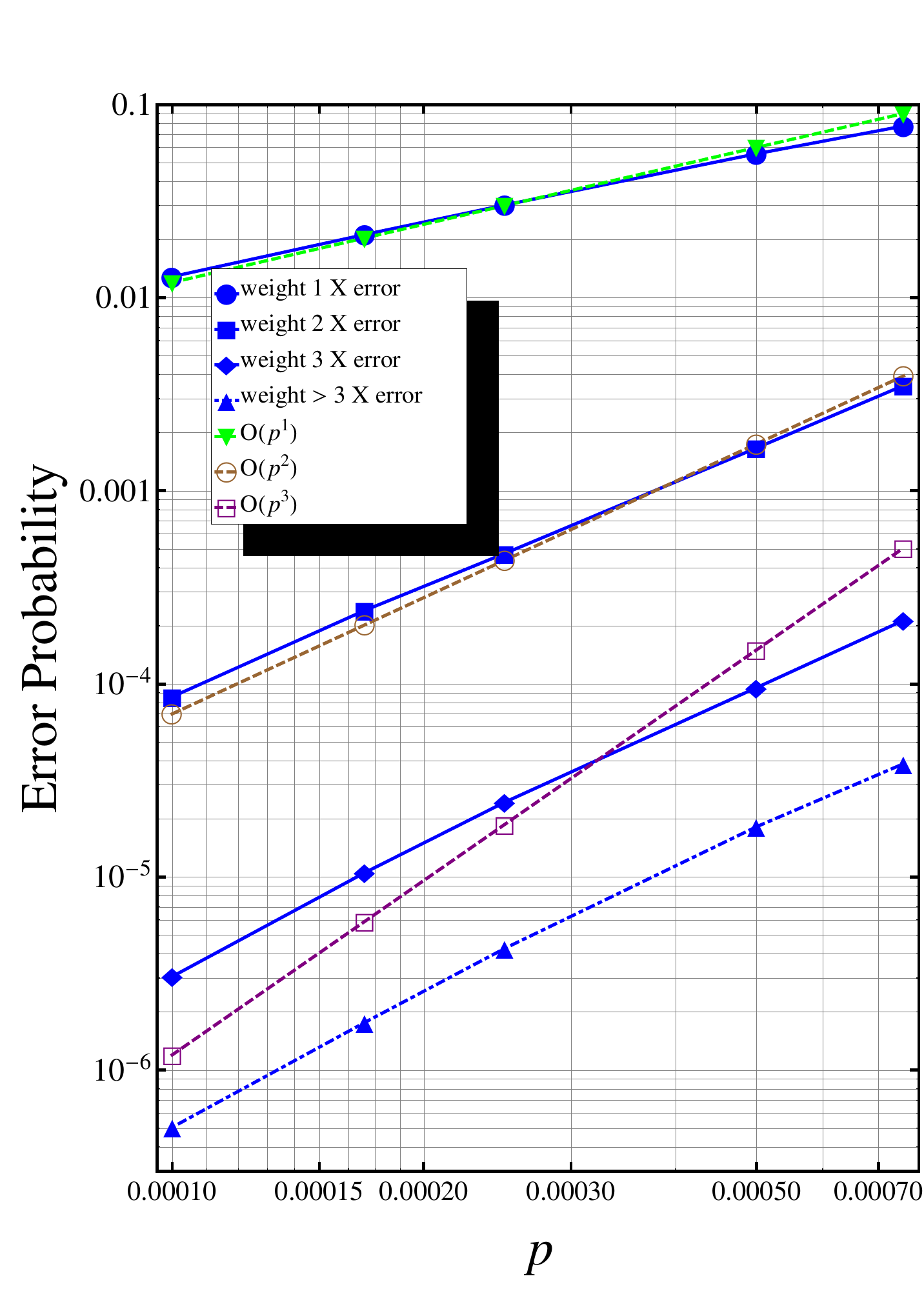}
\caption{\label{fig:Hamming_X}(Color online) Weight distribution of $X$ errors remaining on output blocks using the Hamming code for both rounds with postselection. }
\end{figure}

\subsubsection{Combination C: Hamming code+Hamming code}

Figures~\ref{fig:Hamming_X} and~\ref{fig:Hamming_Z} show the error weight distributions for remaining $X$ and $Z$ errors on a single output block using the Hamming code for both rounds of distillation, with reference dashed lines representing $O(p)$, $O(p^2)$ and $O(p^3)$.

\begin{figure}[!ht]
\centering\includegraphics[width=60mm]{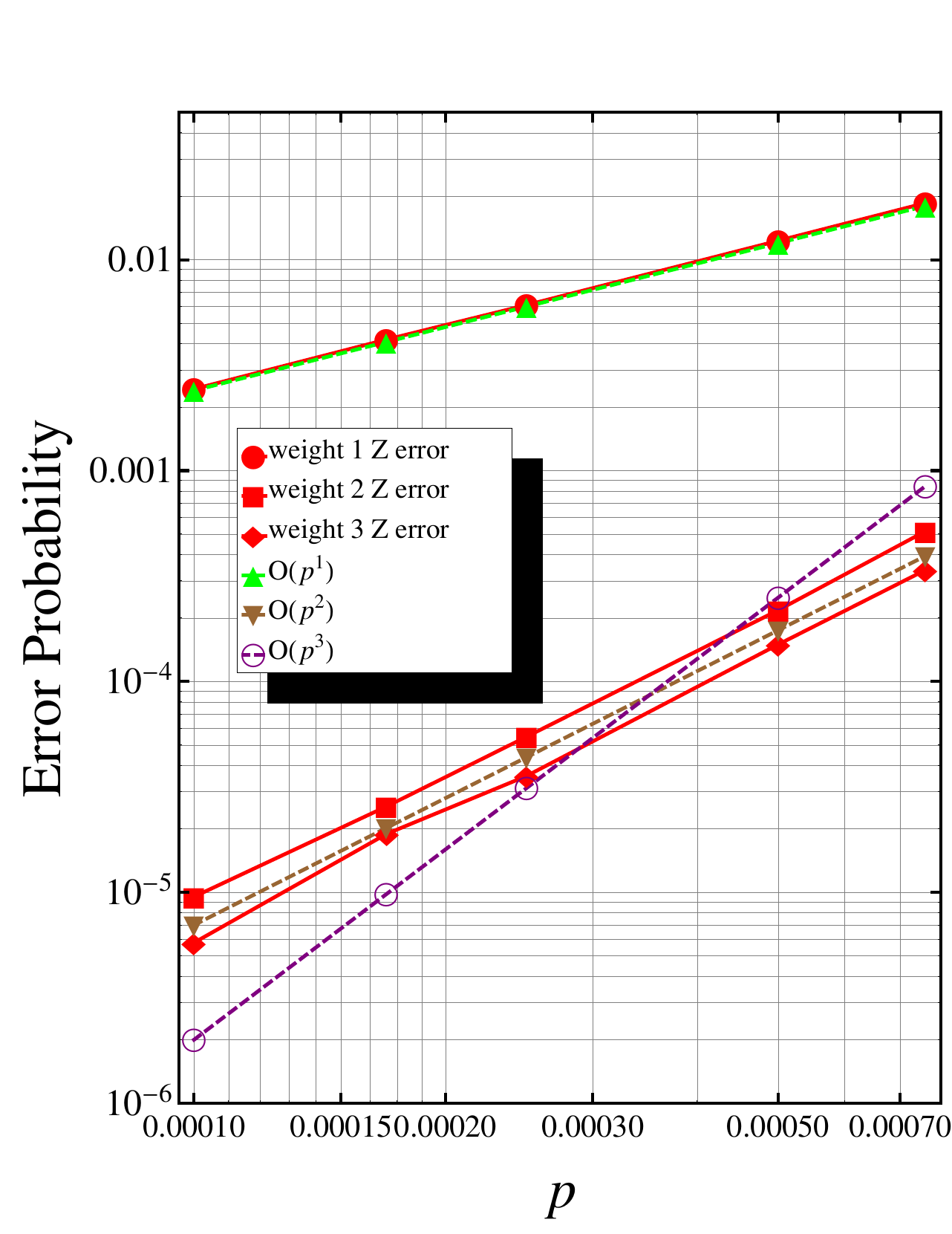}
\caption{\label{fig:Hamming_Z}(Color online) Weight distribution of $Z$ errors remaining on output blocks using the Hamming code for both rounds with postselection.  }
\end{figure}

Probabilities of both types of errors with weight larger than 2 scale as $O(p^2)$ even with postselection, as indicated by Theorem~\ref{thm:main}. The Hamming code is not able to reduce the probability of wrong syndrome estimation to $O(p^3)$. Although the postselection removes as many incompatible blocks as possible, output blocks with compatible syndromes can contain weight-3 errors with probability $O(p^2)$.

In addition, we see that the probabilities for weight 2 and 3 are close to each other for $Z$ errors, while the probability for weight-3 $X$ errors is much smaller than weight-2 $X$ errors. This is because $Z$ errors are propagated before the second stage, and the rates of syndrome flips caused by weight-2 and weight-3 $Z$ errors are rather close.
\begin{figure}[!ht]
\centering\includegraphics[width=80mm]{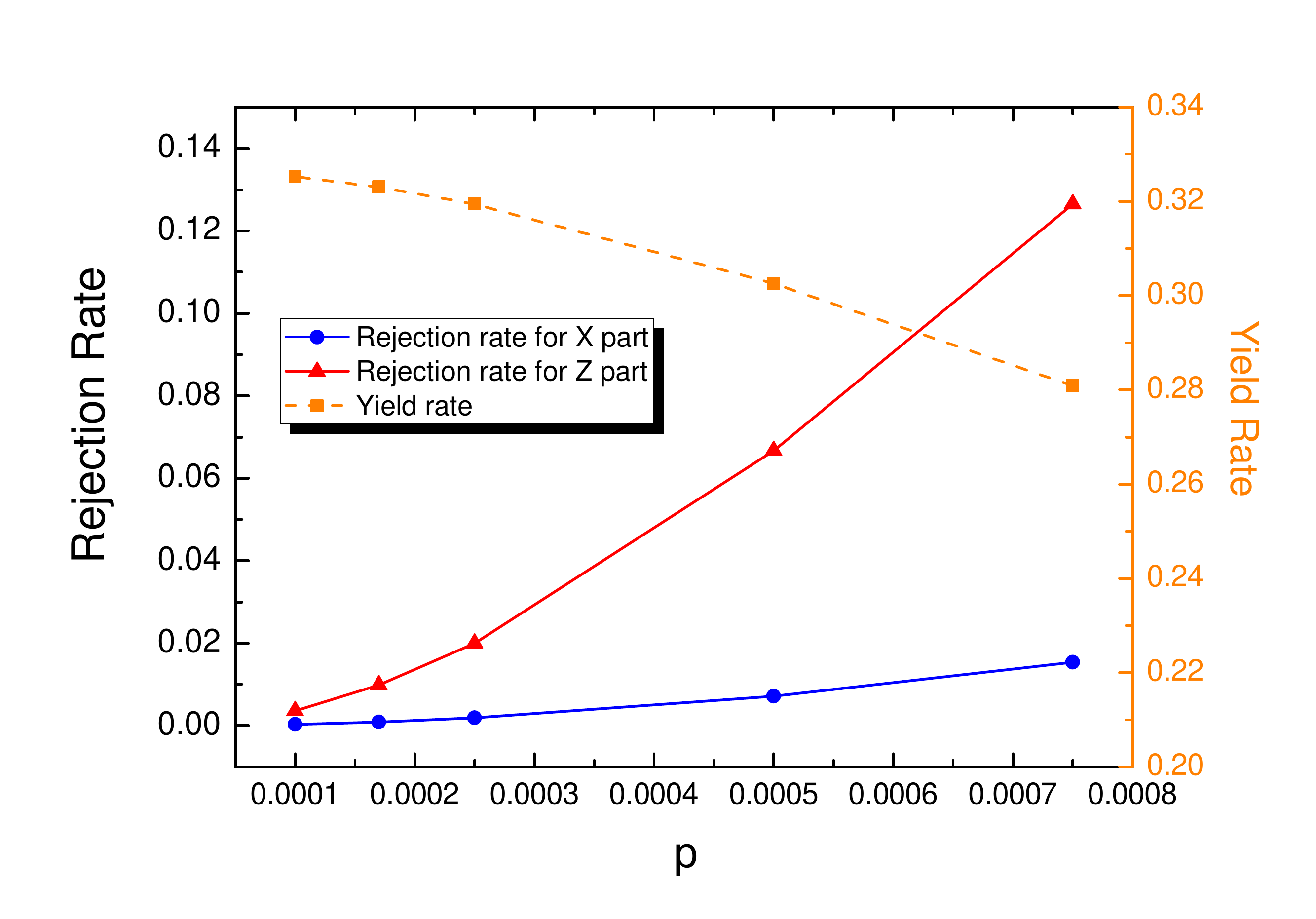}
\caption{\label{fig:rej_HH}(Color online) The rejection rate for both rounds of distillation and overall yield using the Hamming codes. }
\end{figure}

The low distance of the Hamming code also increases the effective error rate for both types of errors. For example, for $p=0.0001$,  $p_{\text{eff}|Z}=1.5\times 10^{-3}$, about four times higher than for Combination A ($[15,7,5]+[15,7,5]$), while $p_{\text{eff}|X}=2.7\times10^{-3}$, compared with $1.67\times 10^{-3}$ for Combination A.

The rejection rate and overall yield for the Hamming code combination are shown in Fig.~\ref{fig:rej_HH}, which are comparable to those for the distance-5 code combinations. This also suggests that the $[23,8,12]$ and $[23,7,11]$ codes have removed almost all blocks with incompatible general syndromes, and postselection can make no further improvement with the limited error correction ability of classical codes.
Overall, the yield is quite high, because of the high rate of the Hamming code, but the distilled states are not qualified, and the rate of the remaining effective errors is higher (especially for $Z$ errors) than  for the distance-5 code combinations studied in last subsection.

\subsubsection{Combination D: [3,1,3]+[3,1,3]}
The final classical code combination uses the $[3,1,3]$ repetition code for both rounds of distillation. It is much simpler than the Hamming code but with a lower rate. If the distillation circuit is perfect, the distilled state will have much higher fidelity compared with states distilled by Hamming codes~\cite{Ancilla_distillation_1}.

\begin{figure}[!ht]
\centering\includegraphics[width=60mm]{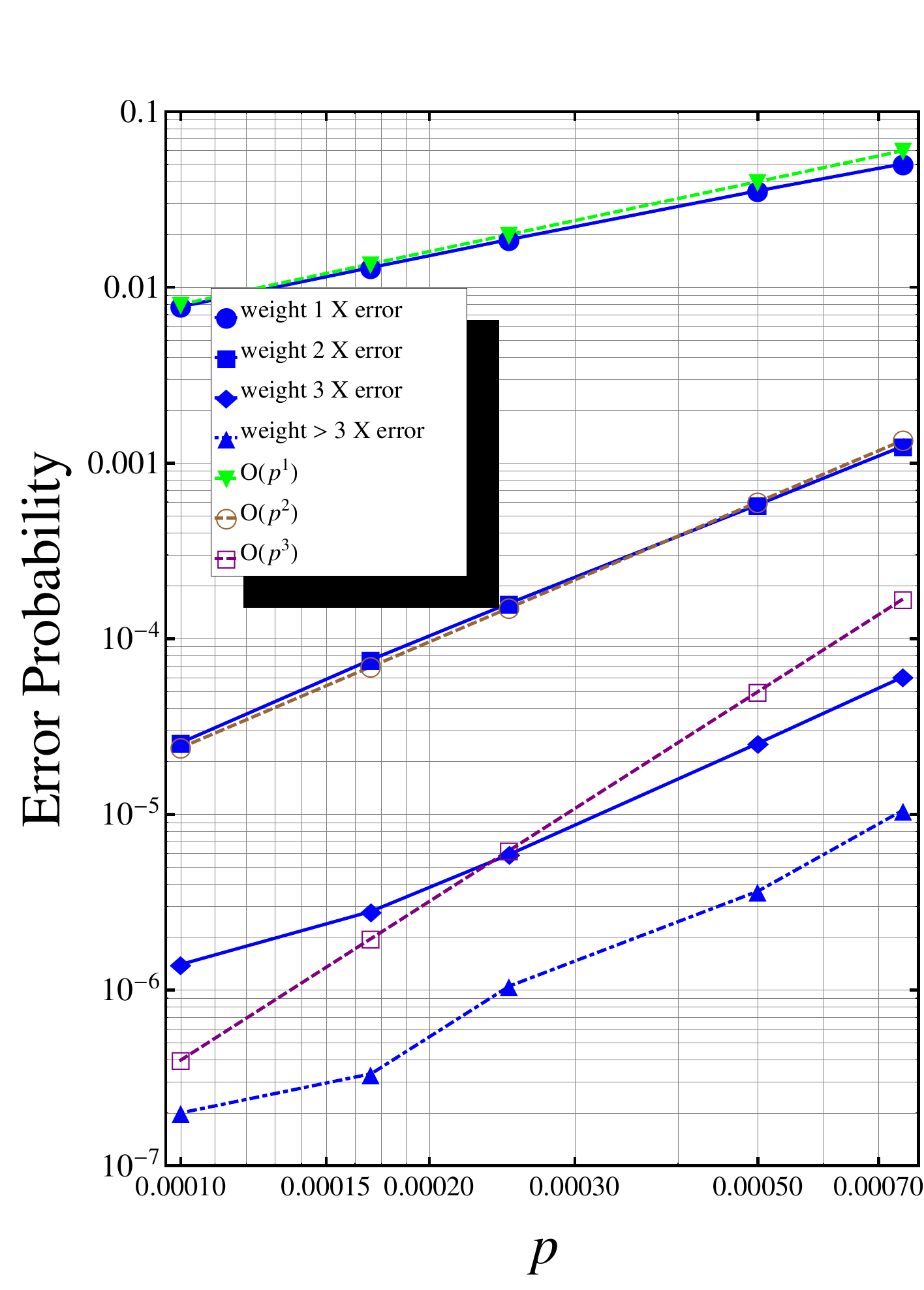}
\caption{\label{fig:33x}(Color online) Weight distribution of $X$ errors remaining on output blocks using the $[3,1,3]$ code for both rounds with postselection.  }
\end{figure}

\begin{figure}[!ht]
\centering\includegraphics[width=60mm]{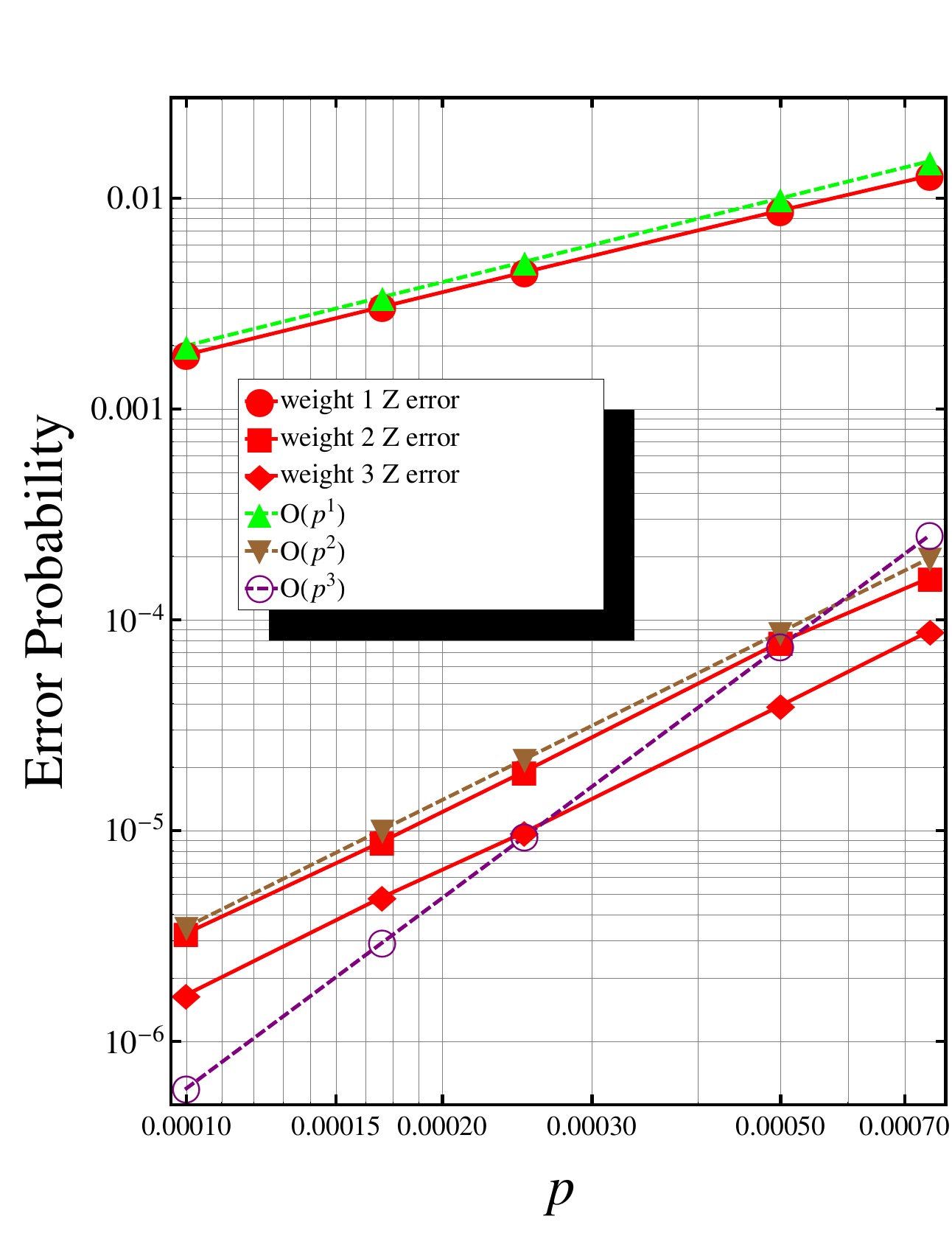}
\caption{\label{fig:33z}(Color online) Weight distribution of $Z$ errors remaining on output blocks using the $[3,1,3]$ code for both rounds with postselection. }
\end{figure}

Error weight distributions for both types of errors are shown in Fig.~\ref{fig:33x} and Fig.~\ref{fig:33z}. As for the Hamming code, the asymptotic behavior of probability of errors with weight larger than 2 is stuck at $O(p^2)$, but the error rate is reduced because of stronger error correction ability and a simplified distillation circuit. One may also note that the difference between weight-2 and 3 $Z$ errors is also  slightly increased.

\begin{figure}[!ht]
\centering\includegraphics[width=80mm]{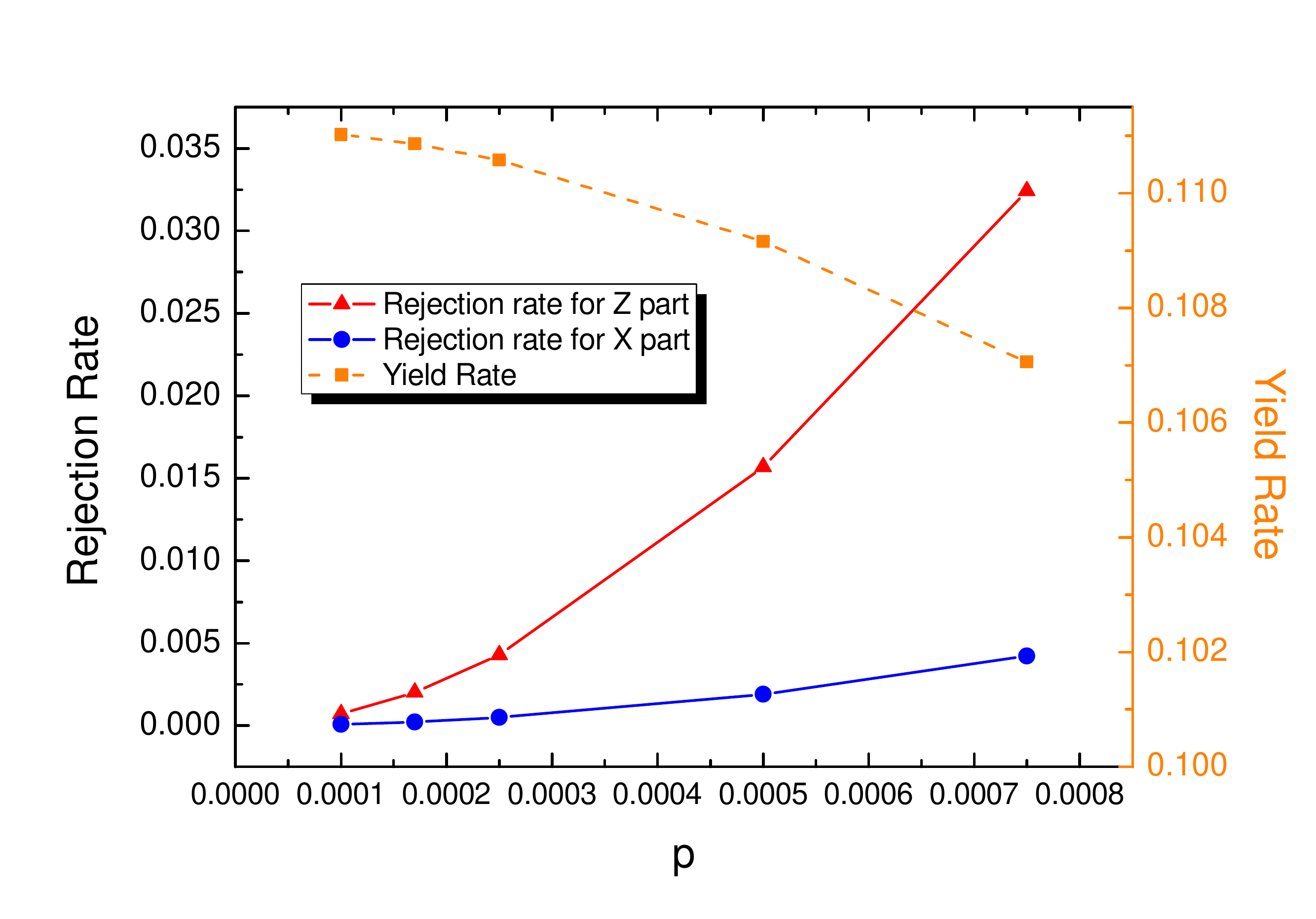}
\caption{\label{fig:33rej}(Color online) The rejection rate for both rounds of distillation and overall yield using the $[3,1,3]$ codes. }
\end{figure}

The effective error rates for $X$ and $Z$ errors remaining are $p_{\text{eff}\ |X}=2.18 \times 10^{-3}$ and $p_{\text{eff}\ |Z}=1\times 10^{-3}$ for $p=0.0001$, respectively, all slightly less than for the Hamming code combination. Although error propagation is not serious in this combination, the limited error correction ability of the $[3,1,3]$ code mans that $p_{\text{eff}\ |X}$ and $p_{\text{eff}\ |X}$ are still higher than for Combinations A and B. Figure~\ref{fig:33rej} shows the overall yield and rejection rates for $[3,1,3]+[3,1,3]$. The rejection rates are the lowest compared with the other classical code combinations partly due to the simpler distillation circuit introducing low rates of incompatible generalized syndromes.  The overall yield is also low, due to the low encoding rate.

\subsection{Summary}
We have studied four combinations of classical codes to distill $|0\>_L$ for the quantum Golay code. Figure~\ref{fig:yvsp} gives a summary, which shows the effective error rate and overall yield for these combinations for $p = 0.0001$. Two combinations of distance-5 codes (blue) sit on the left side, while two combinations of distance-3 codes (red) sit on the  right side. By contrast, the distance-3 codes sit on top of the distance-5 codes. And the effective error rates for $X$ errors are several times larger than for $Z$ errors, because of the asymmetry of the two rounds of distillation.
\begin{figure}[!ht]
\centering\includegraphics[width=85mm]{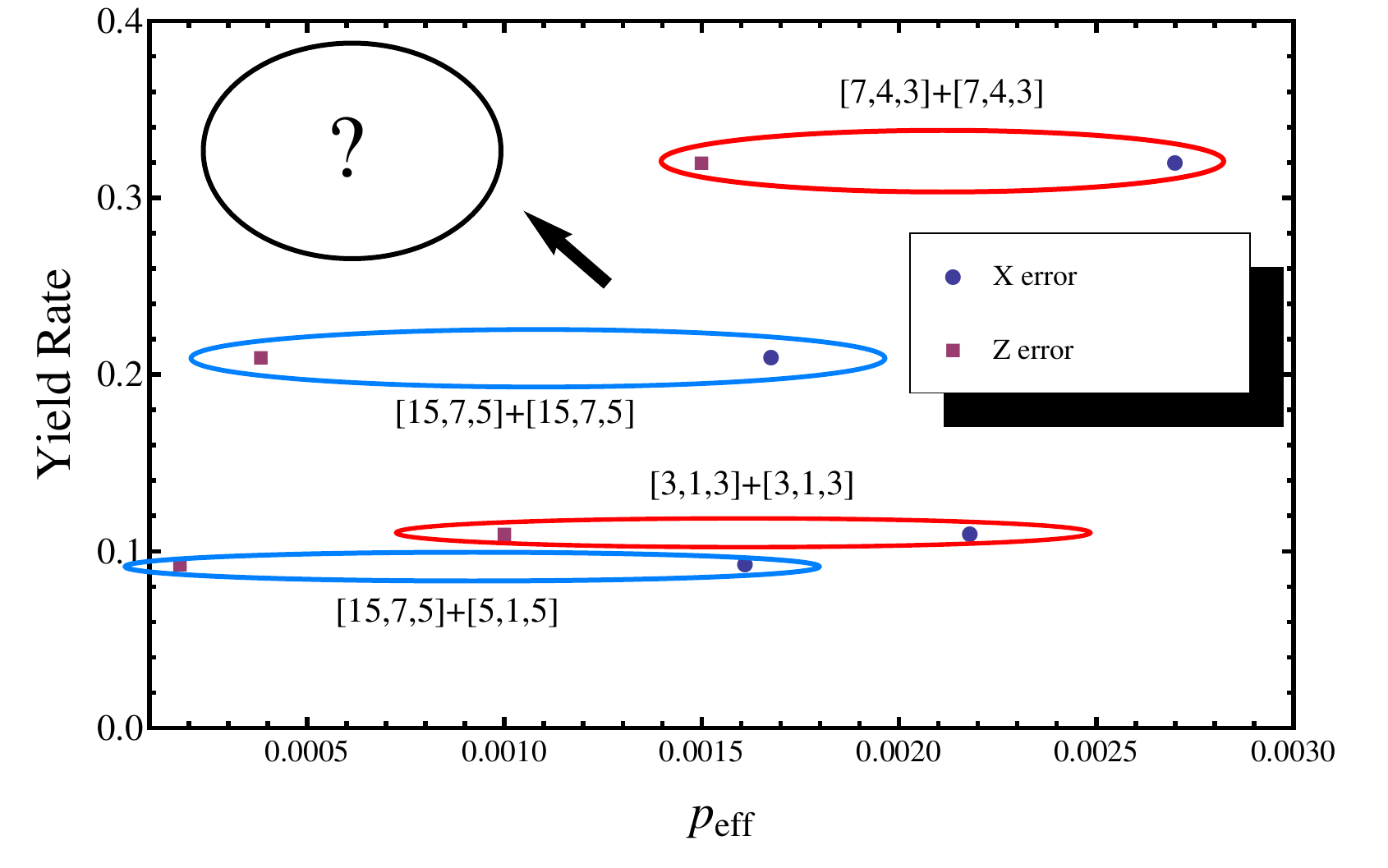}
\caption{\label{fig:yvsp}(Color online) Tradeoff between effective error rate and overall yield for different combinations of classical codes for $p=0.0001$.  }
\end{figure}

A combination of classical codes to move the circle to the top-left corner is desirable for efficient FTQC.
One can certainly use larger classical codes with higher rates, but such codes usually have a more complicated $\textsf{A}$ matrix in the parity-check matrix, and this amplifies the effective error rate, and may increase the rejection rate to an unacceptable level, unless $p$ is extremely small. The desired classical codes should have a sparse $\textsf{A}$ matrix, strong error-correcting ability, and high rate. Certain classes of LDPC codes may possess this ability. More specifically, we would want a family of codes such that the number of 1s in each column of their $\textsf{A}$ matrices is bounded by a small constant, while its error correction ability and encoding rate grow with code size. The performance of these LDPC codes is crucial for practical FTQC using large quantum block codes, and needs detailed further study.
It may be possible to use the parity-check matrix $\textsf{H}_{\text{LDPC}}$ of an LDPC code as the desired sparse matrix $\textsf{A}$, which will lead to a classical code
with parity-check matrix $\textsf{H}=[\textsf{I}\ |\ \textsf{H}_{\text{LDPC}}]$.
This is very like the type of data-syndrome codes studied in~\cite{ashikhmin2016correction}, and is a subject of our future work.


\section{Discussion and Future Work}\label{sec:discussion}
The main obstacle to implement FTQC using large block codes is the lack of an efficient and fault-tolerant preparation protocol for the required set of stabilizer ancilla states. Here, one needs not only to prepare ancillas with high fidelity, but also to suppress high-weight errors in each code block. Our new fault-tolerant distillation protocol greatly improves on the one in ~\cite{Ancilla_distillation_1}, and solves this challenge to a large extent. We use classical error-correcting codes to estimate the generalized syndrome sets, and classical error-detecting codes to pick out those ancillas with compatible estimated generalized syndromes from the output blocks, so that the quality of the output ancillas can be ensured. We proved that the output blocks are qualified if sufficiently powerful error-correcting codes and error-detecting codes are used, and the rejection rate of postselection can be very low in the low failure rate regime. Numerical simulations of fault-tolerantly preparing the $[[23,1,7]]$ quantum Golay code were carried out, which verify the validity of the protocol with small classical codes. The quality of the output ancillas is also greatly improved, with negligible extra cost introduced by postselection.

The error rate for residual $X$ errors is much higher than $Z$ errors for $|0\>_L$ state distillation, because of the $X$ error propagation in the second stage of the distillation. As a result, more $X$ errors propagate to data block than $Z$ errors.
This may suggest that we choose asymmetric CSS codes~\cite{brooks2013asymmetric} as our data block codes, which can correct more $X$ errors than $Z$ errors, or vice versa.

Error propagation during distillation can be effectively suppressed by reducing the number of 1s in each column of the $\textsf{A}$ matrices for $\cC_{c_1}$ and $\cC_{c_2}$. At the same time, $\cC_{c_1}$ and $\cC_{c_2}$  should possess strong error-correcting ability and high encoding rate (so that high distillation throughput is available). For small classical codes, these requirements somewhat conflict with each other. On the other hand, medium or large-sized LDPC codes can simultaneously fulfill these requirements, and have the potential to generate qualified ancillas with low effective error rate and high throughput.
This makes LDPC codes an important candidate for our protocol. The actual performance of LDPC codes may determine the overall resource overhead of large block code FTQC, and needs careful study.

For FTQC using large block codes, we are more interested in highly efficient quantum codes with large length, such as the $[[127,57,11]]$, or $[[255,143,15]]$ quantum BCH codes~\cite{Grassl:1999:207} concatenated with the quantum Golay code, or quantum LDPC codes \cite{tillich2014quantumLDPC,Hsieh:2011:1761}. According to Theorem~\ref{thm:main}, their large $t$ values suggest that the classical codes used in the distillation procedure should also have large distances. An extra difficulty is that these codes are usually not perfect codes, which means that the quantum decoder may not be able to take the corrupted state back to the code space if the weight of residual error is too high. So, even if the estimated generalized syndromes of the output blocks are correct, there can still exist high-weight residual errors on the output blocks after quantum error correction. This implies that we may need another layer of postselection for general quantum codes. For example, consider distillation of $|0\>_L^{\otimes 57}$ for the $[[127,57,11]]$ quantum BCH code where we just try to remove $X$ errors. The corresponding $\cC_X$ and $\cC_Z$ are the $[127,92,11]$ classical BCH code. For the estimated generalized syndromes $\tilde{g}_Z$ obtained from classical decoding of $\cC_{c_\mu}$, one can use the Berlekamp-Massey algorithm~\cite{berlekamp1968algebraicBCH, massey1969BCH} for the $[127,92,11]$ code to estimated the $X$ errors, and then follow the recipe in Sec.~\ref{sec:noiseless_prep} to remove them. Suppose the binary representation of the decoded error is $\tilde{e}$. Then, if the quantum decoding fails, in general, $\textsf{H}_Z \tilde{e}^T\neq \tilde{g}_Z^T$. If this is true, then blocks with such estimated generalized syndromes cannot be used, and should be discarded.
The cost of this extra layer of postselection needs further study.

Another potential way to further increase the efficiency of the distillation procedure is to take advantage of properties of the encoding circuits themselves~\cite{Aliferis:2007:220502}, and the permutation symmetries of the quantum codes ~\cite{paetznick_ben2011fault,chao2017fault}. For small quantum codes like the Steane code and Golay code, the encoding circuit may only generate a small subset of correlated errors, as observed in ~\cite{Aliferis:2007:220502} and ~\cite{paetznick_ben2011fault}. So it is possible to permute the qubits after encoding so that the correlated errors for differently prepared blocks are different from each other, and then verify the different prepared ancillas against each other. This has been shown to greatly reduce the resource overhead needed for ancilla verification~\cite{paetznick_ben2011fault}. Such properties for large block quantum codes remain unexplored, and could potentially increase the throughput of the distillation protocol without extra cost.

Our distillation protocol can simultaneously generate many copies of the \emph{same} stabilizer ancilla states, which makes it very suitable for FT quantum simulation algorithms
~\cite{wecker2014gate,Hastings:2015_simulation,Poulin:2015qic_simulation}. It has been suggested in this context that a fixed small set of Clifford circuits are repeatedly applied in the algorithms. Whether these Clifford circuits as a whole can be implemented through a single logical state measurement, and whether the stabilizer ancillas needed for such measurements can be prepared by our protocol, are open problems. If the answers to these questions are positive, one may take advantage of our distillation protocol to generate a fixed, small set of high quality ancilla states fault-tolerantly, with high throughput, to implement these complicated Clifford circuits in a single step. Hopefully, this will greatly accelerate the speed of quantum simulation and reduce the resource overhead.


Finally, the distillation protocol has the potential to be generalized to prepare high fidelity codeword states of quantum error correction codes for Bosonic modes fault-tolerantly. Such codes include the GKP code~\cite{Gottesman:2001:012310} and binomial code~\cite{michael2016binomial_code}, whose preparations are quite difficult due to the formidable complexity of the required operations~\cite{GKP2002preparing,GKP2010preparing,terhal2016prepGKP}.

\begin{acknowledgments}
We thank Ben Reichardt, Hui-Khoon Ng and Yingkai Ouyang for useful discussions on state verification. YCZ and CYL acknowledge useful conversations with Kung-Chuan Hsu and Chung-Chin Lu on decoding algorithms for cyclic codes. This work is partially funded by the Singapore Ministry of Education
(Tier-1 funding), through Yale-NUS Internal Grant IG14-LR001, and by NSF Grant No. CCF-1421078 and Grant No. MPS-1719778. TAB also acknowledges funding from an IBM Einstein Fellowship at the Institute for Advanced Study.
\end{acknowledgments}

\appendix

\section{Distillation of ancilla states for FTQC}\label{appendix:ancilla_states}

\subsection{Ancilla States for FTQC}\label{appendix:ancilla_state_1}
Preparing ancilla states that are stabilizer states is sufficient for universal FTQC. For example, these are the clean ancilla states required for FTQC in Ref.~\cite{brun2015teleportation}:
\begin{enumerate}
  \item $|0\>^{\otimes k}_L$ and $|+\>^{\otimes k}_L$ are used as the $X$ and $Z$ ancillas in Steane's syndrome extraction.
  \item  $|0_1,...,+_j,0_{j+1},...,0_{k}\>_L$ $($or $|+_1,...,0_{j},+_{j+1},...,+_{k}\>_L)$. This state has all logical qubits in $|0\>_L$ $(\text{or}\ |+\>_L)$, but the $j$th qubit in $|+\>_L$ $(\text{or}\ |0\>_L)$. It is the simultaneous +1 eigenstate of logical operators $\{\bar{Z}_i, \bar{X}_j, \forall i\neq j\}$ $(\text{or}\ \{\bar{X}_i, \bar{Z}_j, \forall i\neq j\})$. We denote such a state as $\|Z|\bar{X}_j\>_L$ $($or $\|X|\bar{Z}_j\>_L)$. Here the capital $Z$ $($or $X)$ before the vertical line means all other logical qubits but the $j$th one in the block are prepared in the +1 eigenstate of $\bar{Z}$ $($or $\bar{X})$. If $|0\>_L^{\otimes k}$ and $\|X|\bar{Z}_j\>$ are used as the $X$ and $Z$ ancillas, one can measure $\bar{Z}_j$ on the data block. Similarly, if $|+\>_L^{\otimes k}$ and $\|Z|\bar{X}_j\>_L$ are used as the $Z$ and $X$ ancillas, one can measure $\bar{X}_j$. It is easy to check that one can simultaneously measure $\bar{X}_{i_1},\dots, \bar{X}_{i_{k^\prime}}$ $($or $\bar{Z}_{i_1},\dots, \bar{Z}_{i_{k^\prime}})$ by using $\|Z|\bar{X}_{i_1},\dots, \bar{X}_{i_{k^\prime}}\>$ $($or $\|X|\bar{Z}_{i_1},\dots, \bar{Z}_{i_{k^\prime}}\>)$ for $k^\prime < k$ as the $X$ $(\text{or}\ Z)$ ancilla. Similarly, one can also measure $\bar{Z}_i\bar{Z_j}$ $($or $\bar{X}_i\bar{X}_j)$ by preparing $\|X|\bar{Z}_i\bar{Z}_j\>$ or $(\text{or}\ \|Z|\bar{X}_i\bar{X}_j\>)$. A logical CNOT between different logical qubits in the same data block can also be implemented in this way.


  \item One also needs the entangled Bell state between the $i$th logical qubit in the block $a$ of an $[[n_a,k_a]]$ code $\mathcal{Q}_a$ and the $j$th logical qubit in block $b$ of an $[[n_b,k_b]]$ code $\mathcal{Q}_b$ (where $\mathcal{Q}_a$ and $\mathcal{Q}_b$ can be different CSS codes):
      \beq
      \left|\Phi_{+,ij}^{(a,b)}\right\>_L = \frac{1}{\sqrt{2}}\left(|0_i\>_L^{(a)}\otimes|0_j\>_L^{(b)}
      +|1_i\>_L^{(a)}\otimes |1_j\>_L^{(b)}\right),
      \eeq
      which is the simultaneous +1 eigenstate of logical $\bar{X}_i^{(a)}\otimes\bar{X}^{(b)}_j$ and $\bar{Z}_i^{(a)}\otimes\bar{Z}^{(b)}_j$. For our purpose, we set all the other logical qubits in block $a$ and block $b$ in $|0\>_L$ state, and this state is denoted as
      \beq
      \left\|Z|\Phi_{+,ij}^{(a,b)}\right\>_L \equiv \left\|
      Z^{(a)}\left|
      \begin{subarray}{l}
      \bar{Z}_i^{(a)}\bigotimes \bar{Z}^{(b)}_j\\
            \bar{X}_i^{(a)}\bigotimes \bar{X}^{(b)}_j
      \end{subarray}
      \right|
      Z^{(b)}
      \right\rangle_L.
      \eeq
      Similarly, one can also define
      \beq
      \left\|X|\Phi_{+,ij}^{(a,b)}\right\>_L \equiv\left\|
      X^{(a)}\left|
      \begin{subarray}{l}
      \bar{Z}_i^{(a)}\bigotimes \bar{Z}^{(b)}_j\\
            \bar{X}_i^{(a)}\bigotimes \bar{X}^{(b)}_j
      \end{subarray}
      \right|
      X^{(b)}
      \right\rangle_L
      \eeq
      These logical Bell states can be used to teleport logical qubits between different blocks of different quantum codes. For example, consider a joint code block consisting of both blocks $a$ and $b$. One can set $\left\|Z|\Phi_{+,ij}^{(a,b)}\right\>_L$ as the $X$ ancilla and $|+\>^{\otimes k_a}_L\otimes |+\>^{\otimes k_b}_L$ as the $Z$ ancilla for the joint blocks, and then jointly measure $\bar{X}_i^{(a)}\otimes\bar{X}_j^{(b)}$. Similarly, by setting $|0\>^{\otimes k_a}_L\otimes |0\>^{\otimes k_b}_L$ as the $X$ ancilla and $\left\|X|\Phi_{+,ij}^{(a,b)}\right\>_L$ as the $Z$ ancilla, one can jointly measure $\bar{Z}^{(a)}_i\otimes \bar{Z}_j^{(b)}$. Together, one can do Bell measurements across the blocks and teleport one logical qubit of block $a$ to the one in block $b$. This is particularly useful in FTQC. For instance, one can teleport a logical qubit from an efficient storage code block, but not having a fault-tolerant non-Clifford gate, to a processor code block that supports a transversal logical $T$ gate (like the quantum Reed Muller code~\cite{steane1999quantumRM}), and then teleport it back after the logical $T$ is finished. This technique can to get around the Eastin-Knill theorem~\cite{eastin2009restrictions,zeng2011transversality}, and replace magic state distillation~\cite{Bravyi:2005:022316}, and (hopefully) reduce the overall resource overhead in the computation.

  \item Entangled states between the $i$th logical qubit of block $a$ and $j$th logical qubit of block $b$:
      \beq
      \begin{split}
      \left|\Omega_{ij}^{(a,b)}\right\>_L=&\frac{1}{2}\left(|0_i\>^{(a)}_L\otimes |0_j\>^{(b)}_L+|0_i\>^{(a)}_L\otimes|1_j\>^{(b)}_L\right.\\
      &\left.+|1_i\>^{(a)}_L\otimes|0_j\>^{(b)}_L-|1_i\>^{(a)}_L\otimes|1_j\>^{(b)}_L\right).
      \end{split}
      \eeq
      This ancilla is the joint +1 eigenstate of $\bar{X}^{(a)}_i\otimes\bar{Z}^{(b)}_j$ and $\bar{Z}^{(a)}_i\otimes\bar{X}^{(b)}_j$.
      One such state is
      \beq
      \left\|X^{(a)},Z^{(b)} \ \big| \ \Omega_{ij}^{(a,b)}\right\>_L \equiv \left\|
      X^{(a)}\left|
      \begin{subarray}{l}
      \bar{Z}_i^{(a)}\bigotimes \bar{X}^{(b)}_j\\
            \bar{X}_i^{(a)}\bigotimes \bar{Z}^{(b)}_j
      \end{subarray}
      \right|
      Z^{(b)}
      \right\rangle_L,
      \eeq
      where the remaining logical qubits of blocks $a$ and $b$ are in $|+\>_L$ and $|0\>_L$, respectively. In this case, if block $a$ is used as the $Z$ ancilla and block $b$ is used as the $X$ ancilla, the Steane syndrome extraction circuit will measure $\bar{X}_i\bar{Z}_j$ on the data block, and this can be used to implement a logical Hadamard gate~\cite{brun2015teleportation}.
  \item Entangled states between the $j$th logical qubit of block $a$ and the $j$th logical qubit of block $b$:
\beq
\begin{split}
\left|\Theta^{(a,b)}_{jj}\right\>_L=&\frac{1}{2}\left(|0_j\>^{(a)}_L\otimes|0_j\>^{(b)}_L+i|0_j\>^{(a)}_L\otimes|1_j\>^{(b)}_L\right.\\
&\left.+|1_j\>^{(a)}_L\otimes|0_j\>^{(b)}_L-i|1_j\>^{(a)}_L\otimes|1_j\>^{(b)}_L\right).
\end{split}
\eeq
This state is the simultaneous +1 eigenstate of $\bar{Z}_j^{(a)}\otimes\bar{Y}_j^{(b)}$ and $\bar{X}_j^{(a)}\otimes \bar{Z}_j^{(b)}$. One such state is
      \beq
      \left\|X^{(a)}, Z^{(b)}\ \big| \ \Theta_{jj}^{(a,b)}\right\>_L\equiv \left\|
      X^{(a)}\left|
      \begin{subarray}{l}
      \bar{Z}_j^{(a)}\bigotimes \bar{Y}^{(b)}_j\\
            \bar{X}_j^{(a)}\bigotimes \bar{Z}^{(b)}_j
      \end{subarray}
      \right|
      Z^{(b)}
      \right\rangle_L.
      \eeq
This state can be used to measure logical $\bar{Y}_j$ on the data block when setting block $a$ as the $Z$ ancilla and block $b$ as the $X$ ancilla in the Steane syndrome measurement. As a byproduct of this measurement of $\bar{Y}_j$, an eigenstate of $\bar{Y}_j$ is prepared in the data block simultaneously, which can be used to implement a logical phase gate.
\end{enumerate}

Note that all these stabilizer states can be produced by (imperfect) Clifford encoding circuits with CNOT gates only, together with the ability to prepare physical qubits in $|0\>$ and $|+\>$.

\subsection{Distillation circuits}\label{sec:logical ancilla}
In this subsection, we generalize the distillation for $|0\>^{\otimes k}_L$ $(\text{or}\ \ |+\>^{\otimes k}_L)$ so that all stabilizer states in Appendix.~\ref{appendix:ancilla_state_1} for FTQC can be well prepared.

$\|Z|\bar{X}_j\>_L$ and $\|X|\bar{Z}_j\>_L$--- The distillation circuit in Fig.~\ref{fig:0_x_z} can also be used to to distill $\|Z|\bar{X}_j\>_L$ and $\|X|\bar{Z}_j\>_L$. Still, there's is some difference in the decoding procedure. For $\|Z|\bar{X}_j\>_L$, we use $\cC_{c_1}$ to check eigenvalues of all $Z$ stabilizer generators and $\bar{Z}_i$ ($i\neq j$) (together form the eigenvalues of $S_1$) in the first round. Eigenvalues of the $X$ stabilizer generators and $\bar{X}_j$ (they together form the eigenvalues of $S_2$) are checked at the second round by $\cC_{c_2}$ to remove $Z$ errors. The procedure is similar for $\|X|\bar{Z}_j\>_L$.

\begin{figure}[!ht]
\centering\includegraphics[width=90mm]{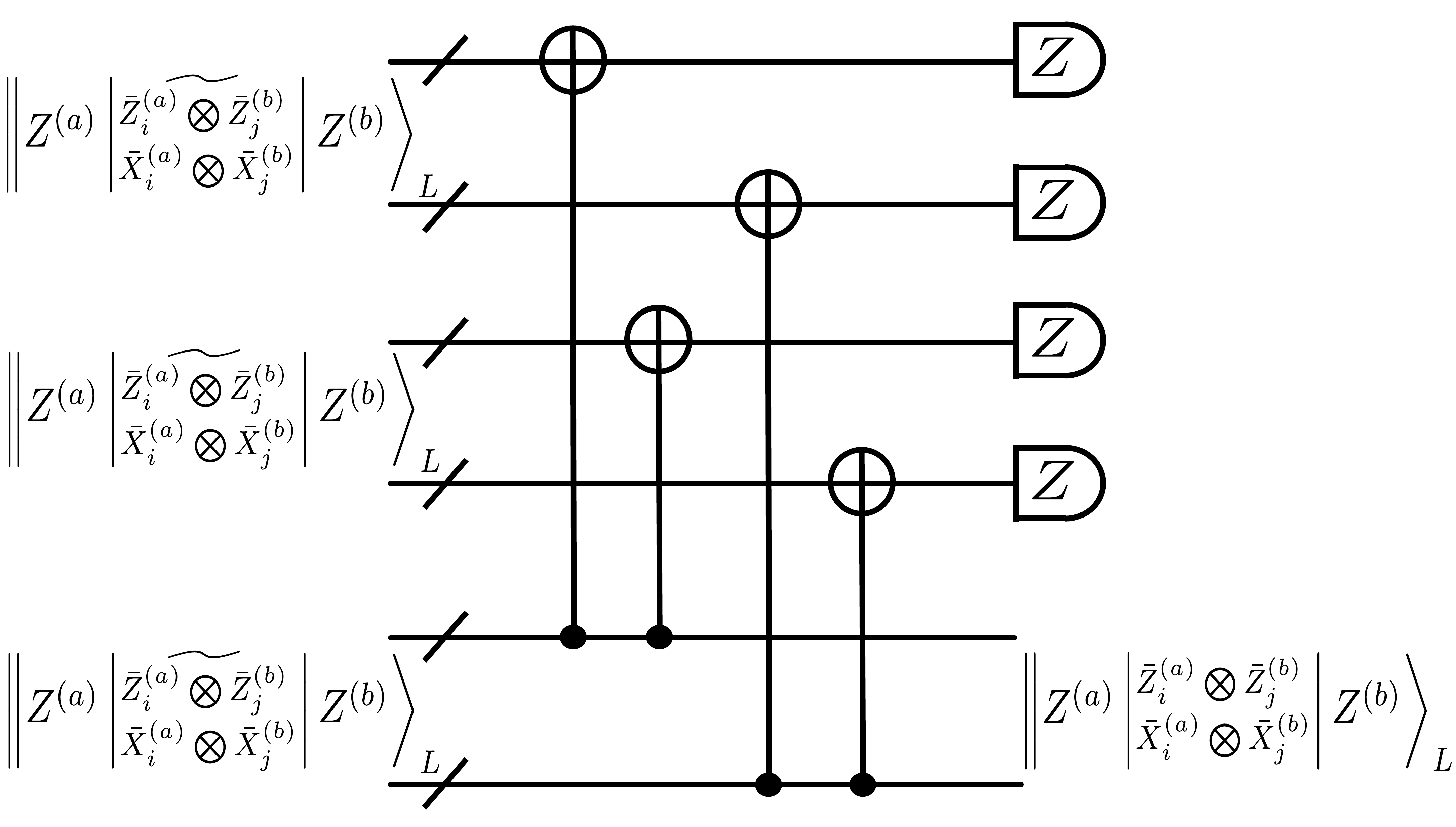}
\caption{\label{fig:zzxx_x}  The distillation circuit for one type of the $|\Phi_{+,ij}^{(a,b)}\>_L$ ancilla distillation by the classical $[3, 1, 3]$ repetition code to remove $X$ errors at the first round.}
\end{figure}

$\left\|Z \big|\Phi_{+,ij}^{(a,b)}\right\>_L$ and $\left\|X  \big|\Phi_{+,ij}^{(a,b)}\right\>_L$---
We may also need to distill states from a pair of two quantum codes $\mathcal{Q}_a$ and $\mathcal{Q}_b$. We emphasize that $\mathcal{Q}_a$ and $\mathcal{Q}_b$ can be different.
For $\left\|Z \big|\Phi_{+,ij}^{(a,b)}\right\>_L$,
the distillation circuit is the same as for $|0\>_L$. For example, we can simultaneously remove $X$ errors from both blocks  as shown in Fig.~\ref{fig:zzxx_x}. By checking the eigenvalues of the $Z$ stabilizer generators for both blocks, and logical operators $\bar{Z}^{(a)}_u$ for $u\neq i$, $\bar{Z}^{(b)}_{u'}$ for $u'\neq j$ and $\bar{Z}_i^{(a)}\otimes \bar{Z}_j^{(b)}$, one can remove $X$ errors on both blocks simultaneously. At the second round of distillation, eigenvalues of the $X$ stabilizer generators and $\bar{X}_i^{(a)}\otimes\bar{X}_j^{(b)}$ are checked to remove $Z$ errors. The process to distill
$\left\|X \big|\Phi_{+,ij}^{(a,b)}\right\>_L$ for $Z$ ancillas is similar.

\begin{figure}[!ht]
\centering\includegraphics[width=90mm]{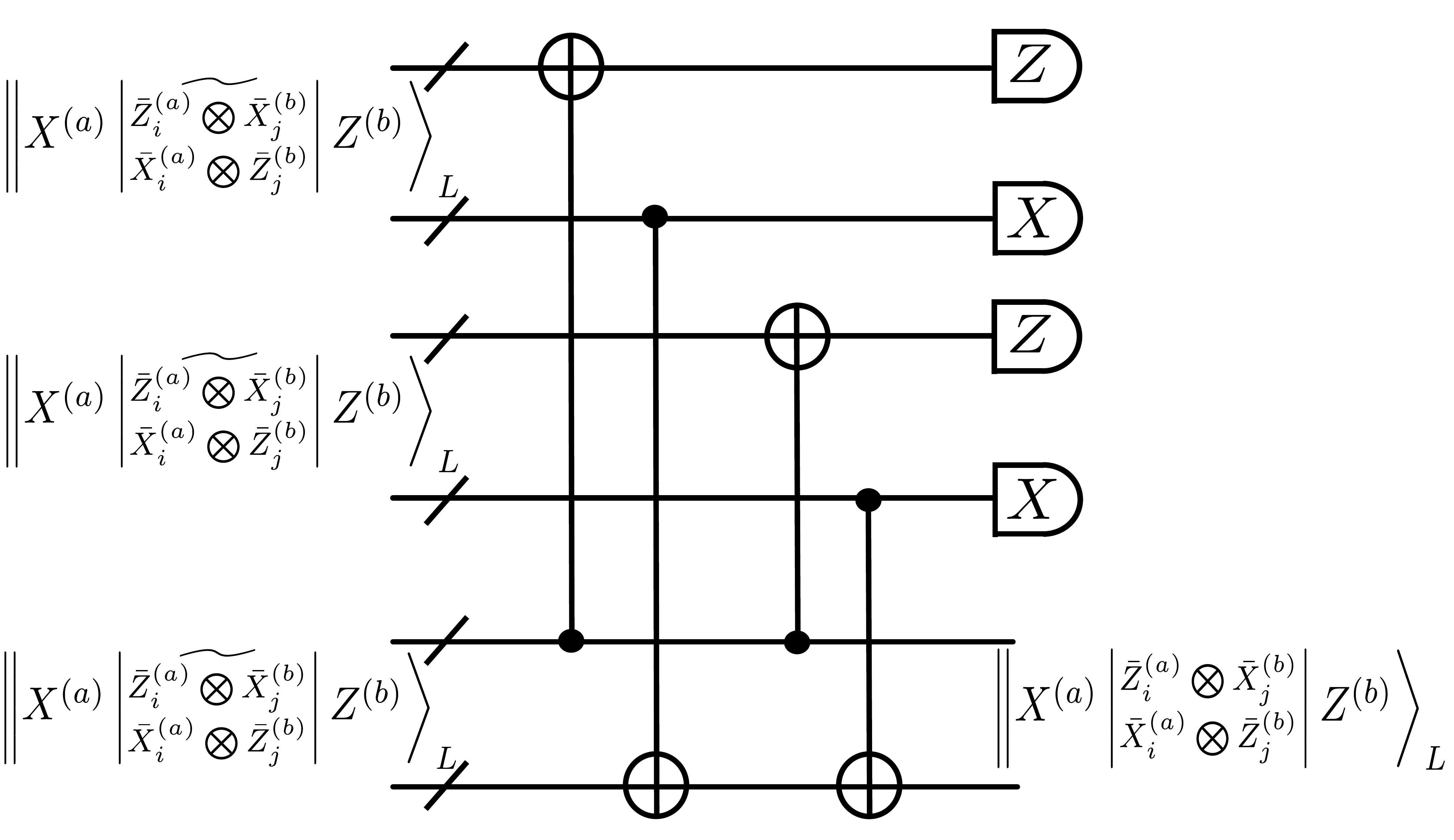}
\caption{\label{fig:xz_xz}The distillation circuit for the $|\Omega_{ij}^{(a,b)}\>_L$ ancilla using classical $[3, 1, 3]$ repetition code to remove $X$ errors from block $a$ and $Z$ errors from block $b$.}
\end{figure}

$\left\|X^{(a)},Z^{(b)} \ \big| \ \Omega_{ij}^{(a,b)}\right\>_L$--- This is another entangled state between the $i$th logical qubit of block $a$ and the $j$th logical qubit of block $b$. The rest of the logical qubits for the first block are are set to $|+\>_L$, while for the second block, they are set to $|0\>_L$. Again, we need to distill the ancilla blocks in pairs. But the situation is different from the previous case, since we need to check the eigenvalues of $\bar{X}_i^{(a)}\otimes \bar{Z}_j^{(b)}$ and $\bar{Z}_i^{(a)}\otimes \bar{X}_j^{(b)}$. The state will be destroyed if both blocks are distilled to remove $X$ or $Z$ errors simultaneously. Instead, in the first round, we distill block $a$ to remove $X$ error and block $b$ to remove $Z$ errors, as shown in Fig.~\ref{fig:xz_xz}. Then in the second round, $Z$ errors of block $a$ and $X$ errors of block $b$ are removed. The same procedure can be used to distill more general entangled states such as
\[
\left\|
      X^{(a)}\left|
      \begin{subarray}{l}
      \bar{Z}_i^{(a)}\bigotimes \bar{X}^{(b)}_j\bar{X}^{(b)}_k\\
            \bar{X}_i^{(a)}\bigotimes \bar{Z}^{(b)}_j
      \end{subarray}
      \right|
      Z^{(b)}
      \right\rangle_L,
\]
which is an entangled state between the $i$th logical qubit of block $a$ and the $j$th and $k$th logical qubits of block $b$ .

\vspace{0.4em}
$\left\|X^{(a)},Z^{(b)} \ \big| \ \Theta_{jj}^{(a,b)}\right\>_L$--- This entangled state between the $j$th logical qubit of block $a$ and the $j$th logical qubit of block $b$ is used to measure $\bar{Y}_j$ of the data block $\mathcal{Q}$ when both $\mathcal{Q}_a$ and $\mathcal{Q}_b$ are the same as $\mathcal{Q}$.
\begin{enumerate}
  \item $\mathcal{Q}$ is a symmetric CSS codes.
  \item The weights of the codewords of the underlying classical codes are $0\mod 4$.
  \item The weight of $\bar{X}_j$ is odd.
\end{enumerate}
Then, after preparing the qualified ancilla state
{\small $\left\|X^{(a)},Z^{(b)} \ \big| \ \Omega_{jj}^{(a,b)}\right\>_L$}
, we apply bitwise phase gates {\small $P=\left[
    \begin{array}{cc}
      1 & 0 \\
      0 & \upi \\
    \end{array}
  \right]$ on block $b$}, followed by $\bar{Z}_j^{(b)}$.
This procedure will give a qualified ancilla state
{\small $\left\|X^{(a)},Z^{(b)} \ \big| \ \Theta_{jj}^{(a,b)}\right\>_L$}.
This can be shown as follows: bitwise phase gates will preserve the stabilizer group with the above properties. All other logical qubits will not be affected by the bitwise phase gates since they are in the joint $+1$ eigenstate of $\bar{Z}_{i'}$ ($i'\neq j$).

\newpage
\begin{widetext}
\section{Table of Notation}\label{appendix:table}
\begin{table*}[htp]
\begin{center}
\begin{tabular}{c| c }
  \hline
  \hline
  symbol &  explanation  \\
  \hline
  $\cC_Z$, $\cC_X$ & Classical codes used to construct CSS code $\mathcal{Q}$. \rule{0pt}{2.6ex}\\[2pt]
  $\cC_{c_1}$, $\cC_{c_2}$, $\cC_{d_1}$, $\cC_{d_2}$  & Classical codes for syndrome estimation $(\cC_{c_1}$ and $\cC_{c_2})$ and compatibility $(\cC_{d_1}$ and $\cC_{d_2})$.\\[2pt]
  $G^{(j)}_{Z\,|\,i}$, $G^{(j)}_{X\,|\,i}$, $g^{(j)}_{Z\,|\,i}$, $g^{(j)}_{X\,|\,i}$  & the $i$th $Z$  ($X$) stabilizer generator of $\mathcal{Q}$ for block $j$ and its eigenvalue in the binary form.\\[2pt]
  $L^{(j)}_i$, $\ell_{i}^{(j)}$  & the $i$th stabilizer logical operator and its eigenvalue in the binary form.\\[2pt]
  $S^{(j)}_1$, $S^{(j)}_2$ & Stabilizers of block $j$ whose eigenvalues need to be estimated in the first (second) round of distillation. \\[2pt]
  $\mathcal{SC}^{(j)}_1$, $\mathcal{SC}^{(j)}_2$ & Stabilizer group of block $j$ generated by $S^{(j)}_1$ $(S^{(j)}_2)$.\\[2pt]
  $\text{SC}_{1,i}$, $\text{SC}_{2,i}$ &  The $i$th element in $\mathcal{SC}_1$ $(\mathcal{SC}_2)$ .\\[2pt]
  $SE_1$, $SE_2$ & Extended stabilizers whose eigenvalues needs to be estimated for postselection.\\[2pt]
  $\textsf{s}_\mu$, $\widetilde{\textsf{s}}_\mu$ for $\mu=1,2$  & Array of generalized syndrome bits according to $S_\mu$, and its estimate by classical decoding.\\[2pt]
  $\textsf{se}_\mu$, $\widetilde{\textsf{se}}_\mu$ for $\mu=1,2$  & Array of extended syndrome bits according to $S_\mu'\bigcup S_\mu$, and its estimate by classical decoding.\\[2pt]
  $\textsf{sc}_\mu$, $
  \widetilde{\textsf{sc}}_\mu$ for $\mu=1,2$  & Array of all syndrome bits according to $\mathcal{SC}_\mu$, and their estimates by classical decoding of $\cC_{c_\mu}$.\\[2pt]
  $\sigma_\mu$, $\mu=1,2$  & Parity matrices of generalized syndrome array obtained after bit-wise measurements. \\[2pt]
  $t$, $t_c$ & Number of correctable errors by a quantum (classical) code.  \\[2pt]
  $\mathscr{G}$, $\mathscr{R}$,   & Failure sets that occur in the quantum circuit.\\[2pt]
  $\mathscr{G}(\step)$, $\mathscr{R}(\step)$ & Failures in $\mathscr{G}$ and $\mathscr{R}$ that occur at the $\step$th time step.\\[2pt]
  $\mathscr{G}_X$, $\mathscr{G}_Z$, $\mathscr{R}_X$, $\mathscr{R}_Z$ & Collection of the $X$ $(Z)$ components of the failure set $\mathscr{G}$ ($\mathscr{R}$). \\[2pt]
  $\mathscr{G}_{\rm P}$,  $\mathscr{G}_{\rm D}$& Failures in $\mathscr{G}$ that occur in preparation and distillation stages. \\[2pt]
  $E_\mathscr{G}$ & Effective errors caused by $\mathscr{G}$ (preceding the distillation circuit).\\[2pt]
  $\text{Q}_{\mathscr{G}(\step)}$, $\text{Q}^{(j)}_{\mathscr{G}(\step)}$ &  Support of $\mathscr{G}$ at time step $\step$ and subset of $\text{Q}_{\mathscr{G}(\step)}$ on block $j$. \\[2pt]
  $\text{QE}_{\mathscr{G}}$, $\text{QE}^{(j)}_{\mathscr{G}}$ &  Support of $\mathscr{G}$ for CNOT failures in the distillation circuit.\\[2pt]
  $\text{Q}_{E_\mathscr{G}}$, $\text{Q}^{(j)}_{E_\mathscr{G}}$ &  Support of the effective errors $E_\mathscr{G}$. \\[4pt]
  $S_\mu^{(j)}|_{E_\mathscr{G}}$, $\mathcal{SC}^{(j)}_\mu|_{E_\mathscr{G}}$  & Stabilizers in $S_\mu^{(j)}$ ($\mathcal{SC}_\mu^{(j)}$) that anticommute with $E_\mathscr{G}$. \\[4pt]
  $\mathcal{I}$ & A function that extracts the indices of a set of qubits or a set of stabilizers.\\[2pt]
  $\mathcal{B}_{\mathscr{G}}$ & Index set of ancilla blocks affected by $E_\mathscr{G}$. \\[2pt]
  $\mathcal{B}_{\text{f}}$ & Index set of ancilla blocks with non-zero syndromes of a good syndrome pattern.\\[2pt]
  \hline
  \hline
\end{tabular}
\caption{\label{tab:notation} Table of notation used in the paper.}
\end{center}
\end{table*}
\end{widetext}

\newpage
\bibliographystyle{apsrev4-1}
%

\end{document}